\def\eqref#1{equation~\ref{#1}}
\def\1{\bm{1}}
\DeclareMathAlphabet{\mathsfit}{\encodingdefault}{\sfdefault}{m}{sl}
\SetMathAlphabet{\mathsfit}{bold}{\encodingdefault}{\sfdefault}{bx}{n}
\newtheorem{theorem}{Theorem}
\newtheorem{corollary}{Corollary}[theorem]
\newtheorem{remark}{Remark}
\newtheorem{assumption}{Assumption}
\newcommand{\mc}[2]{\multicolumn{#1}{#2}}
\title{Quantized Compressed Sensing with Score-Based Generative Models}
\author{Xiangming Meng and Yoshiyuki Kabashima \\
Institute for Physics of Intelligence and Department of Physics \\
The University of Tokyo\\
7-3-1, Hongo, Tokyo 113-0033, Japan \\
\texttt{\{meng,kaba\}@g.ecc.u-tokyo.ac.jp} \\
% \And
% Ji Q. Ren \& Yevgeny LeNet \\
% Department of Computational Neuroscience \\
% University of the Witwatersrand \\
% Joburg, South Africa \\
% \texttt{\{robot,net\}@wits.ac.za} \\
% \AND
% Coauthor \\
% Affiliation \\
% Address \\
% \texttt{email}
}
\begin{document}

\maketitle

\begin{abstract}
We consider the general problem of recovering a high-dimensional signal from noisy quantized measurements. Quantization, especially coarse quantization such as 1-bit sign measurements, leads to severe information loss and thus a good prior knowledge of the unknown signal is helpful for accurate recovery. Motivated by the power of score-based generative models (SGM, also known as diffusion models) in capturing the rich structure of natural signals beyond simple sparsity, we propose an unsupervised data-driven approach called quantized compressed sensing with SGM (QCS-SGM), where the prior distribution is modeled by a pre-trained SGM. To perform posterior sampling, an annealed pseudo-likelihood score called {\textit{noise perturbed pseudo-likelihood score}} is introduced and combined with the prior score of SGM. The proposed QCS-SGM applies to an arbitrary number of quantization bits. Experiments on a variety of baseline datasets demonstrate that the proposed QCS-SGM significantly outperforms existing state-of-the-art algorithms by a large margin for both in-distribution and out-of-distribution samples. Moreover, as a posterior sampling method, QCS-SGM can be easily used to obtain confidence intervals or uncertainty estimates of the reconstructed results. \textit{The code  is available at  \href{https://github.com/mengxiangming/QCS-SGM}{https://github.com/mengxiangming/QCS-SGM}.} 
\end{abstract}

\begin{figure}[!h]
\centering{}\includegraphics[width=14cm]{./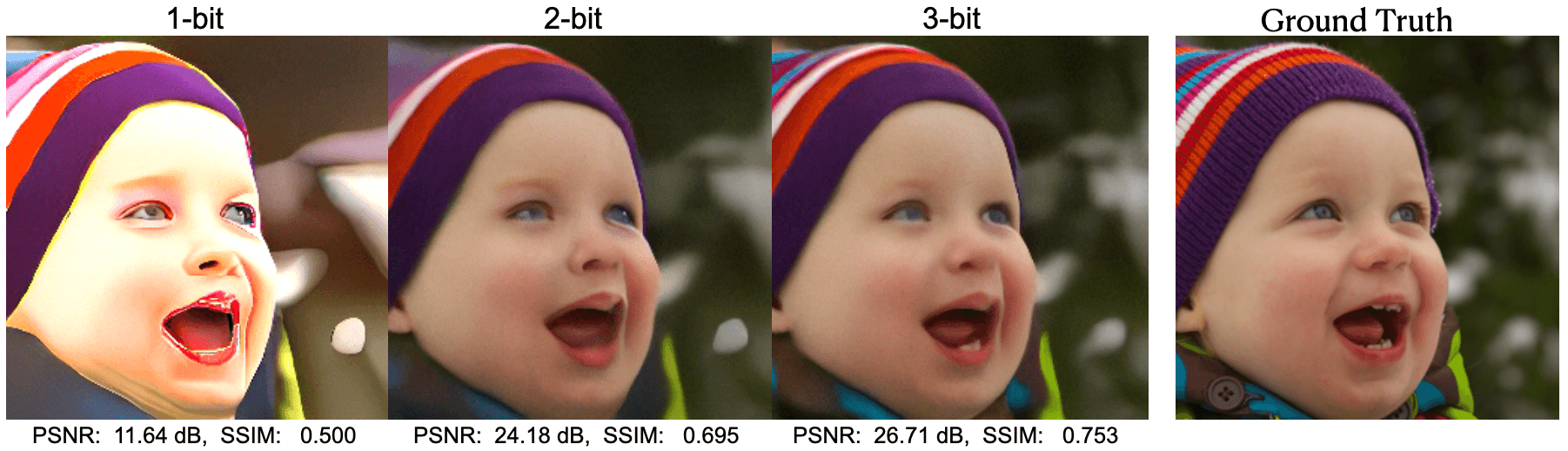}
\caption{\small{Reconstructed images of our  QCS-SGM for one FFHQ $256\times 256$ high-resolution RGB test image ($N = 256\times 256\times 3 =196608$ pixels) from noisy heavily  quantized (1bit, 2-bit and 3-bit) CS $8\times$ measurements $\bf{y} = \mathsf{Q}(\bf{Ax+n})$, i.e., $M=24576\ll N$. The  measurement matrix ${\bf{A}}\in \mathbb{R} ^{M \times N}$ is i.i.d. Gaussian, i.e.,  $A_{ij}\sim \mathcal{N}(0,\frac{1}{M})$, and a Gaussian noise $\bf{n}$ is added with standard deviation $\sigma=10^{-3}$.} }
\label{fig-ffhq-256}
\end{figure}

\section{Introduction}
\vspace{-0.5em}
Many problems in science and engineering such as signal processing, computer vision, machine learning, and statistics can be cast as linear inverse problems: 
\begin{align}
    \bf{y} = \bf{Ax+n}, \label{linear model}
\end{align}
where ${\bf{A}} \in \mathbb{R} ^{M \times N}$ is a known linear mixing matrix,  ${\bf{n}}\sim \mathcal{N}({\bf{n}};0,\sigma^2 {\bf{I}})$  is an i.i.d. additive Gaussian noise, and the goal is to recover the unknown signal ${\bf{x}}\in \mathbb{R} ^{N\times 1}$ from the noisy linear measurements ${\bf{y}}\in \mathbb{R} ^{M\times 1}$.  Among various applications, compressed sensing (CS) provides a highly efficient paradigm that makes it possible to recover a high-dimensional signal from a far smaller number of $M\ll N$ measurements \citep{candes2008introduction}. The underlying wisdom of CS is to leverage the intrinsic structure of the unknown signal ${\bf{x}}$ to aid the recovery. One of the most widely used structures is  {\textit{sparsity}}, i.e., most elements of ${\bf{x}}$ are zero under certain transform domains, e.g., wavelet and Fourier transformation \citep{candes2008introduction}. In other words,  the standard CS exploits the fact that many natural signals in real-world applications are (approximately) sparse. This direction has spurred a hugely active field of research during the last two decades, including efficient algorithm design \citep{tibshirani1996regression, beck2009fast, tropp2010computational, kabashima2003cdma, donoho2009message}, theoretical analysis \citep{candes2006robust,donoho2006compressed,Kabashima_2009, bach2012optimization}, as well as all kinds of applications \citep{lustig2007sparse,lustig2008compressed}, just to name a few.

Despite their remarkable success, traditional CS is still limited by the achievable rates since the sparsity assumptions, whether being naive sparsity or block sparsity \citep{duarte2011structured}, are still too simple to capture the complex and rich structure of natural signals. For example, most natural signals are not strictly sparse even in the specified transform domains, and thus simply relying on sparsity alone for reconstruction could lead to inaccurate results. Indeed, researchers have proposed to combine sparsity with additional structure assumptions,  such as low-rank assumption \citep{fazel2008compressed, foygel2014corrupted}, total variation \citep{candes2006robust,tang2009performance}, to further improve the reconstruction performances. Nevertheless, these hand-crafted priors can only apply, if not exactly, to a very limited range of signals and are difficult to generalize to other cases. To address this problem, driven by the success of  generative models   \citep{goodfellow2014generative,kingma2013auto, rezende2015variational}, there has been a surge of interest in developing CS methods with data-driven priors \citep{bora2017compressed, hand2019global, asim2020invertible, pan2021exploiting}. The basic idea is that, instead of relying on hand-crafted sparsity, the prior structure of the unknown signal is learned through a generative model, such as VAE  \citep{kingma2013auto} or GAN \citep{goodfellow2014generative}. Notably, the past few years have witnessed a remarkable success of one new family of probabilistic generative models called diffusion models, in particular, score matching with Langevin dynamics (SMLD) \citep{song2019generative, song2020improved} and denoising diffusion probabilistic modeling (DDPM) \citep{ho2020denoising,nichol2021improved}, which have proven extremely effective and even outperform the state-of-the-art (SOTA) GAN \citep{goodfellow2014generative} and VAE \citep{kingma2013auto} in density estimation and generation of various natural sources.
As both DDPM and SMLD estimate, implicitly or explicitly, the \textit{score} (i.e., the gradient of the log probability density w.r.t. to data), they are also referred to together as score-based generative models (SGM) \citep{song2020score}. Several CS methods with SGM have been proposed very recently \citep{jalal2021robust, jalal2021instance,kawar2021snips,kawar2022denoising,chung2022improving} and perform quite well in recovering $\bf{x}$ with only a few linear measurements in (\ref{linear model}).

Nevertheless, the linear model (\ref{linear model}) ideally assumes that the measurements have infinite precision, which is not the case in realistic acquisition scenarios. In practice, the obtained measurements have to be quantized to a finite number of $Q$ bits before transmission and/or storage \citep{zymnis2009compressed, dai2011information}.  Quantization leads to information loss which makes the recovery particularly challenging. For moderate and high quantization resolutions, i.e., $Q$ is large, the quantization impact is usually modeled as a mere additive Gaussian noise whose variance is determined by quantization distortion \citep{dai2011information, jacques2010dequantizing}. Subsequently, most CS algorithms originally designed for the linear model (\ref{linear model}) can then be applied with some modifications. However, such an approach is apparently suboptimal since the information about the quantizer is not utilized to a full extent \citep{dai2011information}. This is true especially in the case of coarse quantization, i.e., $Q$ is small.  An extreme and important case of coarse quantization is \textit{1-bit} quantization, where $Q=1$ and only the \textit{sign} of the measurements are observed \citep{boufounos20081}. Apart from extremely low cost in storage, 1-bit quantization is particularly appealing in hardware implementations and has also proven robust to both nonlinear distortions and dynamic range issues \citep{boufounos20081}. Consequently, there have been extensive studies on quantized CS, particularly 1-bit CS, in the past decades and a variety of algorithms have been proposed, e.g., \citep{zymnis2009compressed, dai2011information, plan2012robust, plan2013one, jacques2013robust, Xu_2013, Xu_2014, awasthi2016learning,meng2018unified, jung2021quantized,liu2020sample,liu2022non}. However, most existing methods are based on standard CS methods, which, therefore, inevitably inherit their inability to capture rich structures of natural signals beyond sparsity.  While several recent works \citep{liu2020sample, liu2022non} studied 1-bit CS using generative priors, their main focuses are VAE and/or GAN, rather than SGM. 
\vspace{-0.3em}
\subsection{Contributions}
\begin{itemize}
\item  We propose a novel framework, termed as {Q}uantized {C}ompressed {S}ensing with {S}core-based {G}enerative {M}odels (QCS-SGM in short), to recover unknown signals from noisy \textit{quantized} measurements. QCS-SGM applies to an arbitrary number of quantization bits, including the extremely coarse 1-bit sign measurements. To the best of our knowledge, this is the first time that SGM has been utilized for quantized CS (QCS).  
\vspace{-0.3em}
\item We consider one popular SGM model called NCSNv2 \citep{song2020improved} and verify the effectiveness of the proposed QCS-SGM in QCS on various real-world datasets including MNIST, Cifar-10, CelebA  $64\times64$, and the high-resolution FFHQ $256\times256$. Using the pre-trained SGM as a generative prior, for both in-distribution and out-of-distribution samples, the proposed QCS-SGM  significantly outperforms  existing SOTA algorithms by a large margin in QCS.  Perhaps surprisingly, even in the extreme case of 1-bit sign measurements, QCS-SGM can still faithfully recover the original images with much fewer measurements than the original signal dimension, as shown in Figure \ref{fig-ffhq-256}. Moreover, compared to existing methods, QCS-SGM  tends to recover more natural images with good perceptual quality.
\vspace{-0.3em}
\item We propose a \textit{noise-perturbed pseudo-likelihood score} as a principled way to incorporate information from the measurements, which is one key to the success of QCS-SGM. Interestingly, when degenerated to the linear case (\ref{linear model}), if  ${\bf{A}}$ is row-orthogonal, then QCS-SGM reduces to a form similar to  \citet{jalal2021robust}. While the annealing term (denoted as $\gamma_t^2$ in (4) of \citet{jalal2021robust} is added heuristically as one additional hyper-parameter, it appears naturally within our framework and admits an analytical solution.  More importantly, for general matrices $\bf{A}$, our method generalizes and significantly outperforms \citet{jalal2021robust}. It is expected that the idea of noise-perturbed pseudo-likelihood score can be generalized to other conditional generative models as a general rule.  
\vspace{-0.3em}
\item As one sampling method, QCS-SGM can easily yield multiple samples with different random initializations, whereby one can obtain confidence intervals or uncertainty estimates. 
\vspace{-0.5em}
\end{itemize}
\subsection{Related works}
The proposed QCS-SGM is one kind of data-driven method for CS which generally include two lines of research: unsupervised and supervised. The supervised approach trains neural networks from end to end using pairs of original signals and observations \citep{jin2017deep,zhang2018ista, aggarwal2018modl,wu2019deep,yao2019dr2,gilton2019neumann, Sunjian2020,antun2020instabilities}. Despite good performance on specific problems,  the supervised approach is severely limited and requires re-training even with a slight change in the measurement model. The unsupervised approach, including the proposed QCS-SGM, only learns a prior and thus avoids such problem-specific training since the measurement model is only known and used for inference \citep{bora2017compressed, hand2019global, asim2020invertible, pan2021exploiting}. Among  others, the CSGM framework \citep{jin2017deep} is the most popular one which learns the prior with generative models. 

Recent works \citet{liu2020sample, liu2022non} extended CSGM to non-linear observations including 1-bit CS, which are the two most related studies to current work. However, the  main focuses of \citet{liu2020sample, liu2022non} are limited to VAE and GAN (in particular DCGAN \citep{radford2015unsupervised}). As is widely known, GAN  suffers from unstable training and less diversity due to the adversarial training nature while VAE uses a surrogate loss. Another significant drawback of priors with VAE/GAN is that they can have large representation errors or biases due to architecture and training, which can be easily caused by inappropriate latent dimensionality and/or mode collapse \citep{asim2020invertible}.   With the advent of SGM such as SMLD \citep{song2019generative, song2020improved} and DDPM \citep{ho2020denoising,nichol2021improved}, there have been several recent studies \citep{jalal2021robust, jalal2021instance,kawar2021snips,kawar2022denoising,chung2022improving, daras2022score} in CS using SGM as a prior and outperform conventional methods. Interestingly, \citet{daras2022score} performs posterior sampling in the ``latent space" of a pre-trained generative model. Nevertheless, all the existing works on SGM for CS focus on linear measurements  (\ref{linear model}) without  quantization, which therefore lead to performance degradation when directly applied to the quantized case (\ref{quantized model}). Moreover, even in the linear case  (\ref{linear model}), as illustrated in Corollary \ref{noise-likelohood-score-linear} and Remark \ref{remark-linear}, the likelihood score computed in \citet{jalal2021robust, jalal2021instance} is not accurate for general matrices $\bf{A}$. By contrast, we provide a simple yet effective approximation of the likelihood score which significantly outperforms \citet{jalal2021robust} for general matrices even in the linear case.  Another related work \citep{qiu2020robust} studied 1-bit CS where the sparsity
is implicitly enforced via mapping a low dimensional representation through a known ReLU generative network.  \citet{wei2019statistical} also considers a non-linear recovery using a generative model assuming that the original signal is generated from a low-dimensional latent vector. 
\section{Background}
\subsection{Inverse Problems from Quantized Measurements}
The inverse problem from noisy quantized measurements is posed as follows \citep{zymnis2009compressed}
\begin{align}
    \bf{y} = \mathsf{Q}(\bf{Ax+n}), \label{quantized model}
\end{align}
where the goal is to recover  the  unknown signal ${\bf{x}}\in \mathbb{R} ^{N\times 1}$ from quantized measurements ${\bf{y}}\in \mathbb{R} ^{M\times 1}$, where ${\bf{A}} \in \mathbb{R} ^{M \times N}$ is a known linear mixing matrix,  ${\bf{n}}\sim \mathcal{N}({\bf{n}};0,\sigma^2 {\bf{I}})$  is an i.i.d. additive Gaussian noise with known variance, and $\mathsf{Q}(\cdot): \mathbb{R}^{M \times 1} \to \mathcal{Q}^{M \times 1} $ is an \textit{element-wise} \footnote{While more sophisticated quantization schemes exist \citep{dirksen2019quantized}, here we focus on memoryless scalar quantization due to its popularity and simplicity where the quantizer acts element-wise.} quantizer function which maps each element into a finite (or countable) set of codewords $\mathcal{Q}$, i.e., $ y_m= \mathsf{Q}{(z_m+n_m)}\in \mathcal{Q}$, or equivalently $(z_m+n_m) \in \mathsf{Q}^{-1}{(y_m)},m=1,2,...,M$, where $z_m$ is the $m$-th element of  $\bf{z}={\bf{Ax}}$. 

Usually, the quantization codewords $\mathcal{Q}$ correspond to intervals \citep{zymnis2009compressed}, i.e., $\mathsf{Q}^{-1}{(y_m)}=[l_{y_m},u_{y_m})$ where $l_{y_m},u_{y_m}$ denote prescribed lower and upper thresholds associated with the quantized measurement $y_m$, i.e., $l_{y_m} \leq (z_m+n_m)<u_{y_m}$.  For example, for a uniform quantizer with $Q$ quantization bits (resolution), the quantization codewords $ \mathcal{Q}=\{q_r\}_{r=1}^{2^{Q}}$ consist of $2^{Q}$ elements, i.e.,  
\begin{align}
    q_r  = \left(2r-2^{Q}-1\right)\Delta/2, \;\; r = 1,...,2^{Q}, \label{quantizer-def}
\end{align}
where $\Delta>0$ is the quantization interval, and the lower and upper thresholds associated $q_r$ are
\begin{align}
l_{q_r} =  \begin{cases}
-\infty, r = 1;  \\
\left(r-2^{Q-1}-1\right)\Delta, \; r = 2,..., 2^{Q}.
\end{cases}
u_{q_r} =  \begin{cases}
\left(r-2^{Q-1}\right)\Delta, \; r = 1,..., 2^{Q}; \\
+\infty, \;, r=2^{Q}. 
\end{cases}
\end{align}
% For any  real input $s \in \mathbb{R}$, the quantizer will output $q_r \in \mathcal{Q}$ that is closest to $s$, i.e.,
% \begin{align}
%     \mathsf{Q}(s) = \underset{q_r \in \mathcal{Q} }{\arg\min} \left(|s-q_r|\right), \;\; r = 1,...,2^{Q}. \label{quantizer-def}
% \end{align}
In the extreme 1-bit case, i.e., $Q = 1$,  only the sign values are observed, i.e., 
\begin{align}
    \bf{y} = \rm{sign}(\bf{Ax+n}), \label{1-bit model}
\end{align}
where the quantization codewords $\mathcal{Q}=\{-1,+1\}$ and the   associated thresholds are $l_{-1} = -\infty, u_{-1} = 0$ and $l_{+1} = 0, u_{+1} = +\infty$, respectively.

\subsection{Score-based Generative Models}
For any continuously differentiable probability density function $p(\bf{x})$, if we have access to its score function, i.e., $\nabla_{\bf{x}} \log{p(\bf{x})}$,  we can iteratively sample from it using Langevin dynamics \citep{turq1977brownian,bussi2007accurate,welling2011bayesian}
\begin{align}
    {\bf{x}}_{t} = {\bf{x}}_{t-1} + \alpha_t \nabla_{{\bf{x}}_{t-1}} \log{p({\bf{x}}_{t-1})} +\sqrt{2\alpha_t} {\bf{z}}_t,\; 1\leq t\leq T, \label{Langevin-dynamics}
\end{align}
where $\bf{z}_t\sim \mathcal{N}(\bf{z}_t;0, I)$,  $\alpha_t > 0$ is the step size, and $T$ is the total number of iterations. It has been demonstrated that when $\alpha_t$ is sufficiently small and $T$ is sufficiently large, the distribution of ${\bf{x}}_T$ will converge to $p(\bf{x})$ \citep{roberts1996exponential,welling2011bayesian}. In practice, the score function $\nabla_{\bf{x}} \log{p(\bf{x})}$ is unknown and can be estimated using a {\textit{score network}} $\rm{s}_{\boldsymbol{\theta}}(\bf{x})$ via score matching \citep{hyvarinen2006consistency, vincent2011connection}.
However, the vanilla Langevin dynamics faces a variety of challenges such as slow convergence. To address this challenge, inspired by simulated annealing \citep{kirkpatrick1983optimization,neal2001annealed}, \citet{song2019generative} proposed an annealed version of Langevin dynamics, which perturbs the data with Gaussian noise of different scales and jointly estimates the score functions of noise-perturbed data distributions. Accordingly, during the inference, an annealed Langevin dynamics (ALD) is performed to leverage the information from all noise scales.  

Specifically, assume that $p_{\beta}(\tilde{\bf{x}}\mid {\bf{x}})=\mathcal{N}(\tilde{\bf{x}};{\bf{x}}, \beta^2 {\bf{I}})$, and so we have  $p_{\beta}(\tilde{\bf{x}})=\int p_\text{data}({\bf{x}}) p_{\beta}(\tilde{\bf{x}}\mid {\bf{x}}) d {\bf{x}}$, where $p_\text{data}({\bf{x}})$ is the data distribution. Consider we have a sequence of noise scales $\{\beta_t\}_{t=1}^{T}$ satisfying $\beta_{\max} = \beta_1 > \beta_2 > \cdots > \beta_T = \beta_{\min}>0$. The $\beta_{\min}\to 0$ is small enough so that $p_{\beta_{\min}}(\tilde{\bf{x}})\approx p_\text{data}({\bf{x}})$, and $\beta_{\max}$ is large enough so that $p_{\beta_{\max}}(\tilde{\bf{x}})\approx \mathcal{N}({\bf{x}};0, {\beta_{\max}^2}I)$. The noise conditional score network (NCSN) $\rm{s}_{\boldsymbol{\theta}}(\bf{x},\beta)$ proposed in \citet{song2019generative}
aims to estimate the score function of each $p_{\beta_{t}}(\tilde{\bf{x}})$ by optimizing the following weighted sum of score matching objective
\begin{align}
\boldsymbol{\theta}^* = \underset{\boldsymbol{\theta}}{\arg\min} \sum_{t=1}^T \mathbb{E}_{p_{\rm{data}}({\bf{x}})} \mathbb{E}_{p_{\beta_t}(\tilde{\bf{x}}\mid \bf{x})} \left[ \left\Vert {\rm{s}}_{\boldsymbol{\theta}}(\tilde{\bf{x}},\beta_t) -  \nabla_{\tilde{\bf{x}}} \log{p_{\beta_t}(\tilde{\bf{x}}\mid \bf{x})} \right\Vert_2^2  \right]. \label{ncsn_objective}
\end{align}
After training the NCSN, for each noise scale, we can run $K$ steps of Langevin MCMC to obtain a sample for each $p_{\beta_{t}}(\tilde{\bf{x}})$ as 
\begin{align}
    {\bf{x}}_{t}^{k} = {\bf{x}}_{t}^{k-1} + \alpha_t {\rm{s}}_{\boldsymbol{\theta}}({\bf{x}}_{t}^{k-1},\beta_t) +\sqrt{2\alpha_t} {\bf{z}}_t^k,\; k = 1,2,...,K. \label{ncsn-MCMC}
\end{align}
The sampling process is repeated for $t=1,2,...,T$ sequentially with ${\bf{x}}_{1}^{0}\sim \mathcal{N}({\bf{x}};{\bf{0}}, \beta_{\max}^2 {\bf{I}})$ and ${\bf{x}}_{t+1}^{0}={\bf{x}}_{t}^{K}$ when $t<T$. As shown in \citet{song2019generative}, when $K\to\infty$ and $\alpha_t \to 0$ for all $t$, the final sample ${\bf{x}}_{T}^{K}$ will become an exact sample from $p_{\beta_{\min}}(\tilde{\bf{x}})\approx p_\text{data}({\bf{x}})$ under some regularity conditions. Later, by a theoretical analysis of the learning and sampling process of NCSN,  an improved version of NCSN, termed NCSNv2,  was proposed in \citet{song2020improved} which is more stable and can scale to various datasets with high resolutions.

\section{Quantized CS with SGM}
In the case of quantized measurements (\ref{quantized model}) with access to $\bf{y}$, the goal is to sample from the posterior distribution $p({\bf{x}}\mid {\bf{y}})$ rather than $p({\bf{x}})$. Subsequently, the  Langevin dynamics in (\ref{Langevin-dynamics}) becomes 
\begin{align}
    {\bf{x}}_{t} = {\bf{x}}_{t-1} + \alpha_t \nabla_{{\bf{x}}_{t-1}} \log{p({\bf{x}}_{t-1}\mid {\bf{y}})} +\sqrt{2\alpha_t} {\bf{z}}_t,\; 1\leq t\leq T \label{Langevin-dynamics-posterior},
\end{align}
where the conditional (\textit{posterior}) score $\nabla_{{\bf{x}}} \log{p({\bf{x}}\mid {\bf{y}})}$ is required. One direct solution is to specially train a score network work that matches the score $\nabla_{\bf{x}} \log{p(\bf{x}\mid {\bf{y}})}$. However, this method is rather inflexible and needs to re-train the network when confronted with different measurements. Instead, inspired by conditional diffusion models \citep{jalal2021robust}, we resort to computing the posterior score using the Bayesian rule from which $\nabla_{\bf{x}} \log{p({\bf{x}}\mid {\bf{y}})}$ is decomposed into two terms
\begin{align}
    \nabla_{{\bf{x}}} \log{p({\bf{x}}\mid {\bf{y}})} =  \nabla_{{\bf{x}}} \log{p({\bf{x}})} +  \nabla_{{\bf{x}}} \log{p({\bf{y}}\mid {\bf{x}}}), \label{bayes-rule-score}
\end{align}
which include  the unconditional score $ \nabla_{{\bf{x}}} \log{p({\bf{x}})}$ (we call  it \textit{ prior score}), and the conditional score $\nabla_{{\bf{x}}} \log{p({\bf{y}}\mid {\bf{x}}})$ (we call it \textit{likelihood score}), respectively. The prior score $ \nabla_{{\bf{x}}} \log{p({\bf{x}})}$ can be easily obtained  
using a trained score network such as NCSN or NCSNv2, so the remaining goal is to compute the likelihood score $\nabla_{{\bf{x}}} \log{p({\bf{y}}\mid {\bf{x}}})$ from the quantized measurements.  

\subsection{Noise-perturbed pseudo-likelihood score}
In the case without noise perturbing in $\bf{x}$, one can calculate the likelihood score  $\nabla_{{\bf{x}}} \log{p({\bf{y}}\mid {\bf{x}}})$ directly from the measurement process (\ref{quantized model}) and then substitute it into (\ref{bayes-rule-score}) to obtain the posterior score $\nabla_{{\bf{x}}} \log{p({\bf{x}}\mid {\bf{y}})}$. However, this does not work well and might even diverge \citep{jalal2021robust}. The reason is that in NCSN/NCSNv2, as shown in (\ref{ncsn_objective}), the estimated score is not the original prior score $\nabla_{{\bf{x}}} \log p_{\rm{data}}({\bf{x}})$ but rather a noise-perturbed prior score $\nabla_{\tilde{\bf{x}}} \log p_{\beta_{t}}(\tilde{\bf{x}})$ for each noise scale $\beta_t$. Therefore, the likelihood score directly computed from (\ref{quantized model}) does not match the noise-perturbed prior score. To address this problem, we propose to calculate the \textit{noise-perturbed pseudo-likelihood score} $ \nabla_{{\tilde{\bf{x}}}} \log p_{\beta_t}({\bf{y}} \mid \tilde{{\bf{x}}})$ by explicitly accounting for such the noise perturbation in NCSN/NCSNv2. Unfortunately, for NCSN/NCSNv2, the $ \nabla_{{\tilde{\bf{x}}}} \log p_{\beta_t}({\bf{y}} \mid \tilde{{\bf{x}}})$ is intractable in the general case. To this end, we propose a simple yet effective approximation by introducing two assumptions. 
\begin{assumption}
\label{uninformative-assumption}
The prior $p({\bf{x}})$ is \textit{uninformative} w.r.t. $p(\tilde{\bf{x}}  \mid {\bf{x}})$ , i.e.,  $p({\bf{x}} \mid \tilde{\bf{x}} ) \propto p(\tilde{\bf{x}}  \mid {\bf{x}})$. 
\end{assumption} 
\begin{remark}
This assumption is asymptotically accurate when the perturbed noise in $\tilde{\bf{x}}$ becomes negligible, i.e., $\beta_t \to 0$ when  $t$ is large. An illustration of this is shown in Appendix \ref{appendix:uninformative-prior}.  
\end{remark}

\begin{assumption}
\label{row-assumption}
The matrix $\bf{A}$ is row-orthogonal, i.e.,  ${\bf{A}}{\bf{A}}^T$ is a diagonal matrix. 
\end{assumption} 
\begin{remark}
The measurement matrices $\bf{A}$ are usually carefully designed in CS. Interestingly, most popular CS matrices $\bf{A}$  satisfy this assumption, e.g., discrete Fourier transform (DFT) in MRI, discrete cosine transform (DCT), Hadamard matrices, as well as general row-orthogonal matrices \cite{vehkapera2016analysis}. For another class of widely used i.i.d. Gaussian matrices, this assumption is approximated satisfied, as verified in Appendix \ref{appendix:gaussian-A-check}. Notably, it has been theoretically demonstrated that row-orthogonal matrices are mostly preferred for CS due to both superior performance and easy implementation \citep{vehkapera2016analysis}. Therefore, this assumption does not impose a serious limitation in the case of CS, and we leave the study of general  ${\bf{A}}$ as a future work. 
\end{remark}
\vspace{-1em}
Subsequently, we obtain our main result as shown in Theorem \ref{theorem-noise-likelohood-score}. 
\begin{theorem} (noise-perturbed pseudo-likelihood score)
\label{theorem-noise-likelohood-score}
For each noise scale $\beta_t>0$, if $p_{\beta_t}(\tilde{\bf{x}}\mid {\bf{x}})=\mathcal{N}(\tilde{\bf{x}};{\bf{x}}, \beta_t^2 {\bf{I}})$, under Assumption \ref{uninformative-assumption} and Assumption \ref{row-assumption}, the noise-perturbed pseudo-likelihood score $\nabla_{{\tilde{\bf{x}}}} \log p_{\beta_t}({\bf{y}} \mid \tilde{{\bf{x}}})$ for the quantized measurements ${\bf{y}}$ in (\ref{quantized model}) can be computed as
\begin{align}
    \nabla_{{\tilde{\bf{x}}}} \log p_{\beta_t}({\bf{y}} \mid \tilde{{\bf{x}}}) = {\bf{A}}^T {\bf{G}}({\beta_t},{\bf{y}},{\bf{A}},\tilde{\bf{x}}),
\end{align}
where ${\bf{G}}({\beta_t},{\bf{y}},{\bf{A}},\tilde{\bf{x}}) = [g_1,g_2,...,g_M]^T \in \mathbb{R}^{M\times 1}$ with each element being 
\begin{align}
 g_m = \frac{\exp{\left(-\frac{\tilde{u}_{y_m}^2}{2}\right)} - \exp{\left(-\frac{\tilde{l}_{y_m}^2}{2}\right)}}{\sqrt{\sigma^2+ \beta_t^2\left\Vert {\bf{a}}_{m}^T\right\Vert _{2}^{2}} \int_{\tilde{l}_{y_m}}^{\tilde{u}_{y_m}}\exp{\left(-\frac{t^2}{2}\right)}dt}, \; m = 1,2,...,M, \label{g-result}
\end{align}
where ${\bf{a}}_m^T \in \mathbb{R} ^{1\times N}$ denotes the $m$-th row vector of $\bf{A}$ and 
\begin{align}
    \tilde{u}_{y_m} = \frac{{\bf{a}}_m^T\tilde{\bf{x}}-u_{y_m}}{\sqrt{\sigma^2+ \beta_t^2\left\Vert {\bf{a}}_{m}^T\right\Vert _{2}^{2}}}, \;     \tilde{l}_{y_m} = \frac{{\bf{a}}_m^T\tilde{\bf{x}}-l_{y_m}}{\sqrt{\sigma^2+ \beta_t^2\left\Vert {\bf{a}}_{m}^T\right\Vert _{2}^{2}}},
\end{align}
\end{theorem}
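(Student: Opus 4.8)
The plan is to compute $p_{\beta_t}({\bf{y}}\mid\tilde{\bf{x}})$ in closed form by marginalizing out the clean signal ${\bf{x}}$, and then to take the gradient of its logarithm coordinate by coordinate. Since in the generative model ${\bf{y}}$ is conditionally independent of $\tilde{\bf{x}}$ given ${\bf{x}}$, we have $p_{\beta_t}({\bf{y}}\mid\tilde{\bf{x}}) = \int p({\bf{y}}\mid{\bf{x}})\,p_{\beta_t}({\bf{x}}\mid\tilde{\bf{x}})\,d{\bf{x}}$. By Assumption~\ref{uninformative-assumption}, $p_{\beta_t}({\bf{x}}\mid\tilde{\bf{x}})\propto p_{\beta_t}(\tilde{\bf{x}}\mid{\bf{x}})=\mathcal{N}(\tilde{\bf{x}};{\bf{x}},\beta_t^2{\bf{I}})$, which as a density in ${\bf{x}}$ normalizes to $\mathcal{N}({\bf{x}};\tilde{\bf{x}},\beta_t^2{\bf{I}})$, so the integral becomes $\int p({\bf{y}}\mid{\bf{x}})\,\mathcal{N}({\bf{x}};\tilde{\bf{x}},\beta_t^2{\bf{I}})\,d{\bf{x}}$.

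Because the quantizer is element-wise and ${\bf{n}}$ is i.i.d.\ Gaussian, $p({\bf{y}}\mid{\bf{x}})=\prod_{m=1}^{M}p(y_m\mid{\bf{a}}_m^T{\bf{x}})$, where $p(y_m\mid z)=\Pr[\,l_{y_m}\le z+n_m<u_{y_m}\,]$ is a difference of standard-normal CDFs $\Phi$ evaluated at $(u_{y_m}-z)/\sigma$ and $(l_{y_m}-z)/\sigma$. Pushing ${\bf{x}}\sim\mathcal{N}(\tilde{\bf{x}},\beta_t^2{\bf{I}})$ through the linear map gives ${\bf{z}}={\bf{A}}{\bf{x}}\sim\mathcal{N}({\bf{A}}\tilde{\bf{x}},\beta_t^2{\bf{A}}{\bf{A}}^T)$, and here Assumption~\ref{row-assumption} does the essential work: since ${\bf{A}}{\bf{A}}^T$ is diagonal, the coordinates $z_m$ are \emph{independent} with $z_m\sim\mathcal{N}({\bf{a}}_m^T\tilde{\bf{x}},\beta_t^2\|{\bf{a}}_m^T\|_2^2)$, so the $M$-dimensional integral factorizes into $M$ one-dimensional ones, $p_{\beta_t}({\bf{y}}\mid\tilde{\bf{x}})=\prod_m \mathbb{E}_{z_m}[p(y_m\mid z_m)]$. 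Each factor is the convolution of a Gaussian CDF with a Gaussian density, which I would evaluate via the elementary identity $\mathbb{E}_{W\sim\mathcal{N}(\mu,s^2)}\!\big[\Phi((c-W)/\sigma)\big]=\Phi\!\big((c-\mu)/\sqrt{\sigma^2+s^2}\big)$ (i.e.\ adding the independent variances $\sigma^2$ and $s^2$). With $\mu={\bf{a}}_m^T\tilde{\bf{x}}$ and $s^2=\beta_t^2\|{\bf{a}}_m^T\|_2^2$ this produces $p_{\beta_t}(y_m\mid\tilde{\bf{x}})$ as a difference of $\Phi$'s with exactly the rescaled arguments $\tilde{u}_{y_m},\tilde{l}_{y_m}$ of the statement, and one checks that $\int_{\tilde{l}_{y_m}}^{\tilde{u}_{y_m}}\exp(-t^2/2)\,dt$ equals $\sqrt{2\pi}\,p_{\beta_t}(y_m\mid\tilde{\bf{x}})$ up to the sign forced by $\tilde{l}_{y_m}>\tilde{u}_{y_m}$.

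It then remains to differentiate: $\log p_{\beta_t}({\bf{y}}\mid\tilde{\bf{x}})=\sum_m\log p_{\beta_t}(y_m\mid\tilde{\bf{x}})$, and since each summand depends on $\tilde{\bf{x}}$ only through the scalar ${\bf{a}}_m^T\tilde{\bf{x}}$, the chain rule gives $\nabla_{\tilde{\bf{x}}}\log p_{\beta_t}(y_m\mid\tilde{\bf{x}})=g_m\,{\bf{a}}_m$; differentiating the difference of $\Phi$'s with respect to ${\bf{a}}_m^T\tilde{\bf{x}}$ brings down $\exp(-\tilde{u}_{y_m}^2/2)-\exp(-\tilde{l}_{y_m}^2/2)$ together with the Jacobian factor $1/\sqrt{\sigma^2+\beta_t^2\|{\bf{a}}_m^T\|_2^2}$, the $\sqrt{2\pi}$ from $\Phi'$ cancels the one implicit in the denominator integral, and dividing by $p_{\beta_t}(y_m\mid\tilde{\bf{x}})$ reproduces $g_m$ exactly as in \eqref{g-result}; summing over $m$ yields $\nabla_{\tilde{\bf{x}}}\log p_{\beta_t}({\bf{y}}\mid\tilde{\bf{x}})=\sum_m g_m{\bf{a}}_m={\bf{A}}^T{\bf{G}}$. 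The one genuinely load-bearing step is the decoupling in the second paragraph — Assumption~\ref{uninformative-assumption} is what turns $p({\bf{x}}\mid\tilde{\bf{x}})$ into a tractable Gaussian and Assumption~\ref{row-assumption} is what makes its linear image have independent components, so that the otherwise intractable $M$-dimensional integral collapses to scalar ones; everything after that is routine Gaussian calculus, the only real care being consistency of the interval orientation $[\tilde{l}_{y_m},\tilde{u}_{y_m}]$ and the attendant signs.
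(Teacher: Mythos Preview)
Your proposal is correct and follows essentially the same route as the paper's proof: use Assumption~\ref{uninformative-assumption} to replace $p({\bf{x}}\mid\tilde{\bf{x}})$ by $\mathcal{N}({\bf{x}};\tilde{\bf{x}},\beta_t^2{\bf{I}})$, push forward through ${\bf{A}}$, invoke Assumption~\ref{row-assumption} to decouple the coordinates, and differentiate the resulting product of Gaussian-CDF differences. The only cosmetic difference is that the paper merges the measurement noise ${\bf{n}}$ and the pushed-forward perturbation $\beta_t{\bf{A}}{\bf{w}}$ into a single effective noise $\tilde{\bf{n}}\sim\mathcal{N}({\bf{0}},\sigma^2{\bf{I}}+\beta_t^2{\bf{A}}{\bf{A}}^T)$ up front, whereas you keep them separate and then combine variances via the identity $\mathbb{E}_{W\sim\mathcal{N}(\mu,s^2)}\big[\Phi((c-W)/\sigma)\big]=\Phi\big((c-\mu)/\sqrt{\sigma^2+s^2}\big)$; these are the same computation.
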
 
\begin{proof}
The proof is shown in Appendix  \ref{appendix:proof-theorem1}. 
\end{proof}
\vspace{-1em}
In the extreme 1-bit case, a more concise result can be obtained, as shown in Corollary \ref{noise-likelohood-score-1bit}:
\begin{corollary} (noise-perturbed pseudo-likelihood score in the 1-bit case)
\label{noise-likelohood-score-1bit}
For each noise scale $\beta_t>0$, if $p_{\beta_t}(\tilde{\bf{x}}\mid {\bf{x}})=\mathcal{N}(\tilde{\bf{x}};{\bf{x}}, \beta_t^2 {\bf{I}})$, under Assumption \ref{uninformative-assumption} and Assumption \ref{row-assumption},  the noise-perturbed pseudo-likelihood score $\nabla_{{\tilde{\bf{x}}}} \log p_{\beta_t}({\bf{y}} \mid \tilde{{\bf{x}}})$ for the sign measurements $\bf{y} = \rm{sign}(\bf{Ax+n})$ in (\ref{1-bit model}) can be computed as
\begin{align}
    \nabla_{{\tilde{\bf{x}}}} \log p_{\beta_t}({\bf{y}} \mid \tilde{{\bf{x}}}) = {\bf{A}}^T {\bf{G}}({\beta_t},{\bf{y}},{\bf{A}},\tilde{\bf{x}}),
\end{align}
where ${\bf{G}}({\beta_t},{\bf{y}},{\bf{A}},\tilde{\bf{x}}) = [g_1,g_2,...,g_M]^T \in \mathbb{R}^{M\times 1}$ with each element being 
\begin{align}
 g_m = \left[\frac{1+y_m}{2\Phi(\tilde{z}_m)} - \frac{1-y_m}{2\left(1-\Phi(\tilde{z}_m)\right)}\right] \frac{\exp{\left(-\frac{\tilde{z}_m^2}{2}\right)}}{\sqrt{2\pi(\sigma^2+ \beta_t^2\left\Vert {\bf{a}}_{m}^T\right\Vert _{2}^{2})}}, \; m = 1,2,...,M, \label{g-result-1-bit}
\end{align}
where $\tilde{z}_{m} = \frac{{\bf{a}}_m^T\tilde{\bf{x}}}{\sqrt{\sigma^2+ \beta_t^2\left\Vert {\bf{a}}_{m}^T\right\Vert _{2}^{2}}}$ and $\Phi{(z)} = \frac{1}{\sqrt{2\pi}}\int_{-\infty}^z e^{-\frac{t^2}{2}} dt$ is the cumulative distribution function (CDF) of the standard normal distribution.
\end{corollary}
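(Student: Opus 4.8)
The plan is to obtain Corollary~\ref{noise-likelohood-score-1bit} as a direct specialization of Theorem~\ref{theorem-noise-likelohood-score} to the $1$-bit quantizer. For the sign measurements in~(\ref{1-bit model}) the codeword set is $\mathcal{Q}=\{-1,+1\}$ with interval thresholds $(l_{-1},u_{-1})=(-\infty,0)$ and $(l_{+1},u_{+1})=(0,+\infty)$, and Assumptions~\ref{uninformative-assumption} and~\ref{row-assumption} carry over unchanged, so Theorem~\ref{theorem-noise-likelohood-score} already yields $\nabla_{\tilde{\mathbf{x}}}\log p_{\beta_t}(\mathbf{y}\mid\tilde{\mathbf{x}})=\mathbf{A}^{T}\mathbf{G}$ with the same normalization. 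Writing $s_m:=\sqrt{\sigma^2+\beta_t^2\|\mathbf{a}_m^{T}\|_2^2}$ and $\tilde z_m:=\mathbf{a}_m^{T}\tilde{\mathbf{x}}/s_m$ (which is exactly the quantity appearing in the statement), all that is left is to evaluate the general-quantizer formula~(\ref{g-result}) for $g_m$ at the two possible values of $y_m$.

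First I would do the case $y_m=+1$: then $u_{y_m}=+\infty$ gives $\tilde u_{y_m}=-\infty$ and $l_{y_m}=0$ gives $\tilde l_{y_m}=\tilde z_m$. Since $e^{-z^2/2}\to 0$ as $z\to-\infty$, the numerator of~(\ref{g-result}) reduces to $-e^{-\tilde z_m^2/2}$, while the integral in the denominator is $\int_{\tilde z_m}^{-\infty}e^{-t^2/2}\,dt=-\sqrt{2\pi}\,\Phi(\tilde z_m)$; the two minus signs cancel and $g_m=\dfrac{1}{\Phi(\tilde z_m)}\cdot\dfrac{e^{-\tilde z_m^2/2}}{\sqrt{2\pi}\,s_m}$. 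Next the case $y_m=-1$: now $\tilde u_{y_m}=\tilde z_m$ and $\tilde l_{y_m}=+\infty$, the numerator is $e^{-\tilde z_m^2/2}$ and the integral is $\int_{+\infty}^{\tilde z_m}e^{-t^2/2}\,dt=-\sqrt{2\pi}\bigl(1-\Phi(\tilde z_m)\bigr)$, so $g_m=-\dfrac{1}{1-\Phi(\tilde z_m)}\cdot\dfrac{e^{-\tilde z_m^2/2}}{\sqrt{2\pi}\,s_m}$. Finally I would fold the two cases into a single expression by inserting the selectors $\tfrac{1+y_m}{2}$ and $\tfrac{1-y_m}{2}$, which equal $(1,0)$ when $y_m=+1$ and $(0,1)$ when $y_m=-1$; together with $\sqrt{2\pi}\,s_m=\sqrt{2\pi(\sigma^2+\beta_t^2\|\mathbf{a}_m^{T}\|_2^2)}$ this reproduces exactly~(\ref{g-result-1-bit}).

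There is no real obstacle here: given Theorem~\ref{theorem-noise-likelohood-score} the argument is a short case check, and the only points needing care are the one-sided limits $e^{-z^2/2}\to 0$ at $\pm\infty$ (which make the absent threshold drop out) and the orientation of the integration limits, whose sign is precisely what distinguishes the $\Phi(\tilde z_m)$ branch from the $1-\Phi(\tilde z_m)$ branch. If one instead wanted a self-contained derivation not invoking Theorem~\ref{theorem-noise-likelohood-score}, the plan would be to rerun that theorem's argument with the per-coordinate box $[l_{y_m},u_{y_m})$ replaced by a half-line, so that the factorized marginal $p_{\beta_t}(\mathbf{y}\mid\tilde{\mathbf{x}})=\prod_m p_{\beta_t}(y_m\mid\tilde{\mathbf{x}})$ has factors $\Phi(\tilde z_m)$ or $1-\Phi(\tilde z_m)$, and then differentiate $\log\Phi(\pm\tilde z_m)$ directly using $\Phi'=\phi$; but since the theorem is available, the specialization above is the shortest route.
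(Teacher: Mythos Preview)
Your proposal is correct and matches the paper's approach: the paper simply states that Corollary~\ref{noise-likelohood-score-1bit} ``can be easily proved following the proof in Theorem~\ref{theorem-noise-likelohood-score},'' and your specialization of~(\ref{g-result}) to the two half-line cases $y_m=\pm 1$, together with the selector trick $(1\pm y_m)/2$, carries this out explicitly and correctly.
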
 
\vspace{-1em}
\begin{proof}
It can be easily proved following the proof in Theorem \ref{theorem-noise-likelohood-score}. 
\end{proof}
\vspace{-1em}
Interestingly, if no quantization is used, or equivalently $Q\to\infty$, the model in (\ref{quantized model}) reduces to the standard linear model (\ref{linear model}). Accordingly, results in Theorem \ref{theorem-noise-likelohood-score} can be modified as in Corollary \ref{noise-likelohood-score-linear}:
\begin{corollary} (noise-perturbed pseudo-likelihood score without quantization)
\label{noise-likelohood-score-linear}
For each noise scale $\beta_t>0$, if $p_{\beta_t}(\tilde{\bf{x}}\mid {\bf{x}})=\mathcal{N}(\tilde{\bf{x}};{\bf{x}}, \beta_t^2 {\bf{I}})$, under Assumption \ref{uninformative-assumption},  the noise-perturbed pseudo-likelihood score $\nabla_{{\tilde{\bf{x}}}} \log p_{\beta_t}({\bf{y}} \mid \tilde{{\bf{x}}})$ for linear measurements  $\bf{y} = \bf{Ax+n}$ in (\ref{linear model}) can be computed as 
\begin{align}
    \nabla_{{\tilde{\bf{x}}}} \log p_{\beta_t}({\bf{y}} \mid \tilde{{\bf{x}}}) = {\bf{A}}^T {\left(\sigma^2{\bf{I}}+ \beta_t^2 {\bf{A}}{\bf{A}}^T\right)^{-1}} \left(\bf{y} - \bf{A\tilde{x}}\right). \label{likelihood-score-linear-full}
\end{align}
Moreover, under Assumption \ref{row-assumption}, the $\nabla_{{\tilde{\bf{x}}}} \log p_{\beta_t}({\bf{y}} \mid \tilde{{\bf{x}}})$ can be further simplified as
\begin{align}
    \nabla_{{\tilde{\bf{x}}}} \log p_{\beta_t}({\bf{y}} \mid \tilde{{\bf{x}}}) ={{{\bf{A}}^T}}{\bf{G}}_{\rm{linear}}({\beta_t},{\bf{y}},{\bf{A}},\tilde{\bf{x}}), \label{likelihood-score-linear}
\end{align}
where ${\bf{G}}_{\rm{linear}}({\beta_t},{\bf{y}},{\bf{A}},\tilde{\bf{x}}) = [g_1,g_2,...,g_M]^T \in \mathbb{R}^{M\times 1}$ with each element being 
\begin{align}
   g_m =\frac{{y}_m - {\bf{a}}_m^T\tilde{\bf{x}}}{\sigma^2+ \beta_t^2\left\Vert {\bf{a}}_{m}^T\right\Vert _{2}^{2}}, \; m = 1,2,...,M.  \label{likelihood-score-linear}
\end{align}
\end{corollary}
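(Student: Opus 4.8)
The plan is to obtain the marginal pseudo-likelihood $p_{\beta_t}(\mathbf{y}\mid\tilde{\mathbf{x}})$ in closed form as a Gaussian and then differentiate its logarithm; the row-orthogonal case then falls out by simply diagonalizing. First I would write, by marginalizing over the clean signal $\mathbf{x}$,
\[
p_{\beta_t}(\mathbf{y}\mid\tilde{\mathbf{x}}) = \int p(\mathbf{y}\mid\mathbf{x})\, p_{\beta_t}(\mathbf{x}\mid\tilde{\mathbf{x}})\, d\mathbf{x}.
\]
By Assumption \ref{uninformative-assumption} the factor $p_{\beta_t}(\mathbf{x}\mid\tilde{\mathbf{x}})$ is proportional, as a function of $\mathbf{x}$, to $p_{\beta_t}(\tilde{\mathbf{x}}\mid\mathbf{x})=\mathcal{N}(\tilde{\mathbf{x}};\mathbf{x},\beta_t^2\mathbf{I})$, which viewed as a density in $\mathbf{x}$ is exactly $\mathcal{N}(\mathbf{x};\tilde{\mathbf{x}},\beta_t^2\mathbf{I})$; since this already integrates to one, the proportionality constant must be one. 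Together with the linear model $p(\mathbf{y}\mid\mathbf{x})=\mathcal{N}(\mathbf{y};\mathbf{A}\mathbf{x},\sigma^2\mathbf{I})$, the integrand is a product of two Gaussians in $\mathbf{x}$.

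Second, I would evaluate this Gaussian integral. The cleanest route is to argue at the level of random variables: if $\mathbf{x}\sim\mathcal{N}(\tilde{\mathbf{x}},\beta_t^2\mathbf{I})$ and $\mathbf{n}\sim\mathcal{N}(0,\sigma^2\mathbf{I})$ are independent, then $\mathbf{y}=\mathbf{A}\mathbf{x}+\mathbf{n}$ is Gaussian with mean $\mathbf{A}\tilde{\mathbf{x}}$ and covariance $\sigma^2\mathbf{I}+\beta_t^2\mathbf{A}\mathbf{A}^T$, i.e.
\[
p_{\beta_t}(\mathbf{y}\mid\tilde{\mathbf{x}})=\mathcal{N}\!\bigl(\mathbf{y};\,\mathbf{A}\tilde{\mathbf{x}},\,\sigma^2\mathbf{I}+\beta_t^2\mathbf{A}\mathbf{A}^T\bigr).
\]
(Equivalently this follows by completing the square in $\mathbf{x}$, or as the $Q\to\infty$ limit of Theorem \ref{theorem-noise-likelohood-score}, which doubles as a consistency check.) Taking logarithms, the only $\tilde{\mathbf{x}}$-dependent term is the quadratic form $-\tfrac12(\mathbf{y}-\mathbf{A}\tilde{\mathbf{x}})^T(\sigma^2\mathbf{I}+\beta_t^2\mathbf{A}\mathbf{A}^T)^{-1}(\mathbf{y}-\mathbf{A}\tilde{\mathbf{x}})$, and since the covariance does not depend on $\tilde{\mathbf{x}}$, differentiating via the chain rule (with $\partial(\mathbf{y}-\mathbf{A}\tilde{\mathbf{x}})/\partial\tilde{\mathbf{x}}=-\mathbf{A}$) yields $\mathbf{A}^T(\sigma^2\mathbf{I}+\beta_t^2\mathbf{A}\mathbf{A}^T)^{-1}(\mathbf{y}-\mathbf{A}\tilde{\mathbf{x}})$, which is (\ref{likelihood-score-linear-full}).

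Finally, for the simplified form I would invoke Assumption \ref{row-assumption}: $\mathbf{A}\mathbf{A}^T$ is diagonal with $m$-th entry $\|\mathbf{a}_m^T\|_2^2$, so $\sigma^2\mathbf{I}+\beta_t^2\mathbf{A}\mathbf{A}^T$ is diagonal with entries $\sigma^2+\beta_t^2\|\mathbf{a}_m^T\|_2^2$ and its inverse is the diagonal matrix of reciprocals; reading off the $m$-th component of $(\sigma^2\mathbf{I}+\beta_t^2\mathbf{A}\mathbf{A}^T)^{-1}(\mathbf{y}-\mathbf{A}\tilde{\mathbf{x}})$ gives $g_m=(y_m-\mathbf{a}_m^T\tilde{\mathbf{x}})/(\sigma^2+\beta_t^2\|\mathbf{a}_m^T\|_2^2)$, i.e. (\ref{likelihood-score-linear}). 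I do not expect a genuine obstacle here; the only delicate point is the normalization step in the first paragraph—confirming that Assumption \ref{uninformative-assumption} turns $p_{\beta_t}(\mathbf{x}\mid\tilde{\mathbf{x}})$ into a bona fide Gaussian density rather than an unnormalized one—and, if one prefers the explicit completion-of-the-square derivation, carefully collecting the terms that are constant in $\tilde{\mathbf{x}}$ so they may be discarded before differentiation.
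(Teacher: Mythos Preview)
Your proposal is correct and follows essentially the same route as the paper's proof: use Assumption~\ref{uninformative-assumption} to obtain $p_{\beta_t}(\mathbf{x}\mid\tilde{\mathbf{x}})=\mathcal{N}(\mathbf{x};\tilde{\mathbf{x}},\beta_t^2\mathbf{I})$, propagate through the linear model to conclude $p_{\beta_t}(\mathbf{y}\mid\tilde{\mathbf{x}})=\mathcal{N}(\mathbf{y};\mathbf{A}\tilde{\mathbf{x}},\sigma^2\mathbf{I}+\beta_t^2\mathbf{A}\mathbf{A}^T)$, differentiate the log-density, and then diagonalize under Assumption~\ref{row-assumption}. The only cosmetic difference is that the paper writes the random-variable argument first ($\mathbf{x}=\tilde{\mathbf{x}}+\beta_t\mathbf{w}$, hence $\mathbf{y}=\mathbf{A}\tilde{\mathbf{x}}+\tilde{\mathbf{n}}$ with $\tilde{\mathbf{n}}=\mathbf{n}+\beta_t\mathbf{A}\mathbf{w}$) rather than starting from the marginalization integral, but this is the same computation you describe in your second paragraph.
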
 
\begin{proof}
The proof is shown in Appendix \ref {appendix:proof-of-Corollary2}. 
\end{proof}
\begin{remark}
\label{remark-linear}
In contrast to the quantized case, here we obtain a closed-form result   (\ref{likelihood-score-linear-full}) for any kind of matrices $\bf{A}$. Interestingly, if Assumption \ref{row-assumption} is further imposed, it reduces to (\ref{likelihood-score-linear}), which is similar to  \citet{jalal2021robust} where the annealing term  $\beta_t^2\left\Vert {\bf{a}}_{m}^T\right\Vert _{2}^{2}$ in (\ref{likelihood-score-linear}) corresponds to $\gamma_t^2$ in \citet{jalal2021robust}. However,  $\gamma_t^2$ in \citet{jalal2021robust} is  added heuristically as an additional hyper-parameter while we derive it in closed-form under the principled perturbed pseudo-likelihood score. More importantly, for general matrices $\bf{A}$,  our closed-form result (\ref{likelihood-score-linear-full}) significantly outperforms \citet{jalal2021robust}, as demonstrated in Appendix \ref{appendix:compare-with-ALD}. Therefore, we not only explain the necessity of an annealing term in \citet{jalal2021robust} but also extend and improve \citet{jalal2021robust} in the general case. 
\end{remark}
\subsection{Posterior sampling via annealed Langevin dynamics}
\vspace{-0.5em}
By combining the prior score from SGM and noise-perturbed pseudo-likelihood score in Theorem \ref{theorem-noise-likelohood-score}, we obtain the resultant Algorithm \ref{posterior-sampling-algorithm}, namely,  Quantized CS with SGM (QCS-SGM in short).   
\begin{algorithm}[H]
\caption{Quantized Compressed Sensing with SGM (QCS-SGM)}
 \label{posterior-sampling-algorithm}
\DontPrintSemicolon
  \KwInput{$\{\beta_t\}_{t=1}^T$, $\epsilon, K, \bf{y,A}$, $\sigma^2$, quantization codewords $\mathcal{Q}$ and thresholds $\{[l_q,u_q)|q\in \mathcal{Q}\}$}
  \KwInitialize{${\bf{x}}_1^0\sim \mathcal{U}\left(0,1\right)$}
  \For{$t=1$ {\bfseries to} $T$}{
    $\alpha_t \leftarrow \epsilon \beta_t^2/\beta_T^2$
    
    \For{$k=1$ {\bfseries to} $K$}{
        Draw ${\bf{z}}_t^k \sim \mathcal{N}(\bf{0}, \bf{I}) $
    
        {Compute ${\bf{G}}({\beta_t},{\bf{y}},{\bf{A}},{\bf{x}}_{t}^{k-1})$ as (\ref{g-result}) (or (\ref{g-result-1-bit}) for 1-bit)}
    
        ${\bf{x}}_{t}^{k} = {\bf{x}}_{t}^{k-1} + \alpha_t \left[ {\rm{s}}_{\boldsymbol{\theta}}({\bf{x}}_{t}^{k-1},\beta_t) {+ {\bf{A}}^T {\bf{G}}({\beta_t},{\bf{y}},{\bf{A}},{\bf{x}}_{t}^{k-1})} \right] +\sqrt{2\alpha_t}{\bf{z}}_t^k$
    }
    ${\bf{x}}_{t+1}^{0} \leftarrow {\bf{x}}_{t}^{K}$
   }
\KwOutput{${\bf{\hat{x}}} = {\bf{x}}_{T}^{K}$} 
\end{algorithm}
\vspace{-1em}
\section{Experiments}
\label{sec:experiements}
\vspace{-0.5em}
We empirically demonstrate the efficacy of the proposed QCS-SGM in various scenarios. The widely-used i.i.d. Gaussian matrix $\bf{A}$, i.e., $A_{ij}\sim \mathcal{N}(0,{1}/{M})$ are considered. \textit{The code  is available at  \href{https://github.com/mengxiangming/QCS-SGM}{https://github.com/mengxiangming/QCS-SGM}.} 

\textbf{Datasets}: Three popular datasets are considered: MNIST \citep{mnist_dataset} , Cifar-10 \citep{krizhevsky2009learning}, and CelebA  \citep{liu2015faceattributes}, and the high-resolution Flickr Faces High Quality (FFHQ) \citep{karras2018progressive}.  MNIST \citep{mnist_dataset} are grayscale images of size $28\times 28$ pixels so that the input dimension for MNIST is $N=28\times 28=784$ per image. Cifar-10 \citep{krizhevsky2009learning} consists of natural RGB images of size $32\times 32$ pixels, resulting in $N=32\times 32\times3=3072$ inputs per image. For CelebA dataset  \citep{liu2015faceattributes}, we cropped each face image to a $64\times 64$ RGB image, resulting in $N=64\times 64\times3=12288$ inputs per image. The FFHQ are high-resolution RGB images of size $256\times 256$, so that $N=256\times 256\times3=196608$ per image. Note that to evaluate the out-of-distribution (OOD) performance, we also cropped FFHQ to size $64\times 64$, as OOD samples for CelebA. All images are normalized to range $[0,1]$.  

\textbf{QCS-SGM}: Regarding the SGM models, we adopt the popular NCSNv2 \citep{song2020improved} in all cases. Specifically, for MNIST, we train a NCSNv2 \citep{song2020improved} model on the MNIST training dataset  with a similar training set up as Cifar10 in \citet{song2020improved}, while we directly use the  for Cifar-10, and CelebA, and FFHQ, which are available in this \href{https://drive.google.com/drive/folders/1217uhIvLg9ZrYNKOR3XTRFSurt4miQrd}{Link}. Therefore, the prior score can be estimated using these pre-trained NCSNv2 models. After observing the quantized measurements as (\ref{quantized model}), we can infer the original $\bf{x}$ by posterior sampling via QCS-SGM in Algorithm \ref{posterior-sampling-algorithm}. 
It is important to note that, \textit{we select images $\bf{x}$ that  are unseen by the pre-traineirec SGM models}. For details of the experimental setting, please refer to Appendix \ref{appendix:experiments-setting}. We have submitted the code in the supplementary material and will release it upon acceptance.  

\begin{figure}[htbp]
\centering  
\hspace*{\fill}%
\subfigure[MNIST, $M=200, \sigma=0.05$]{\label{mnist-celeba-compare:a}\includegraphics[width=50mm]{./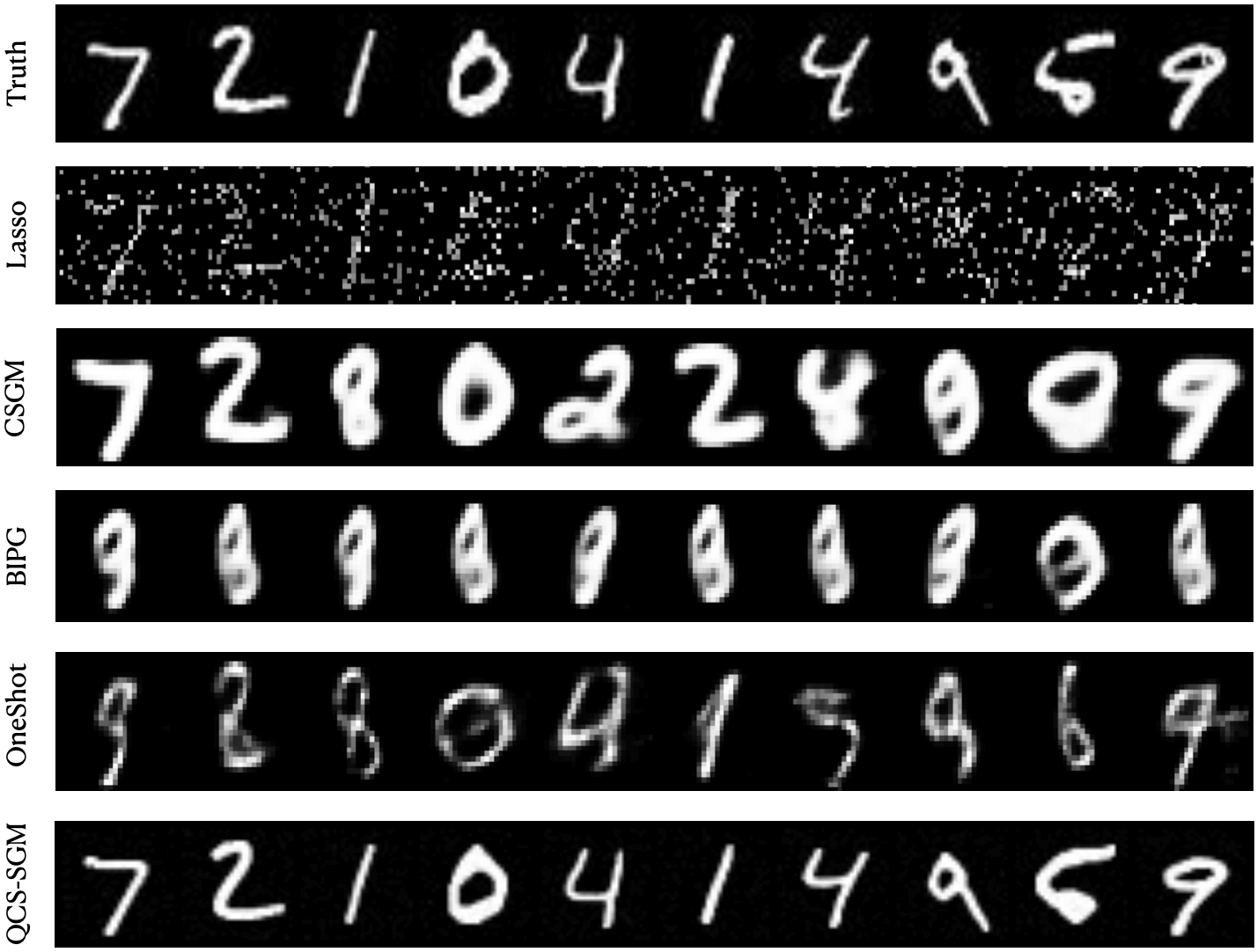}} \hfill 
\subfigure[CelebA, $M=4000, \sigma=0.001$]{\label{mnist-celeba-compare:b}\includegraphics[width=50mm]{./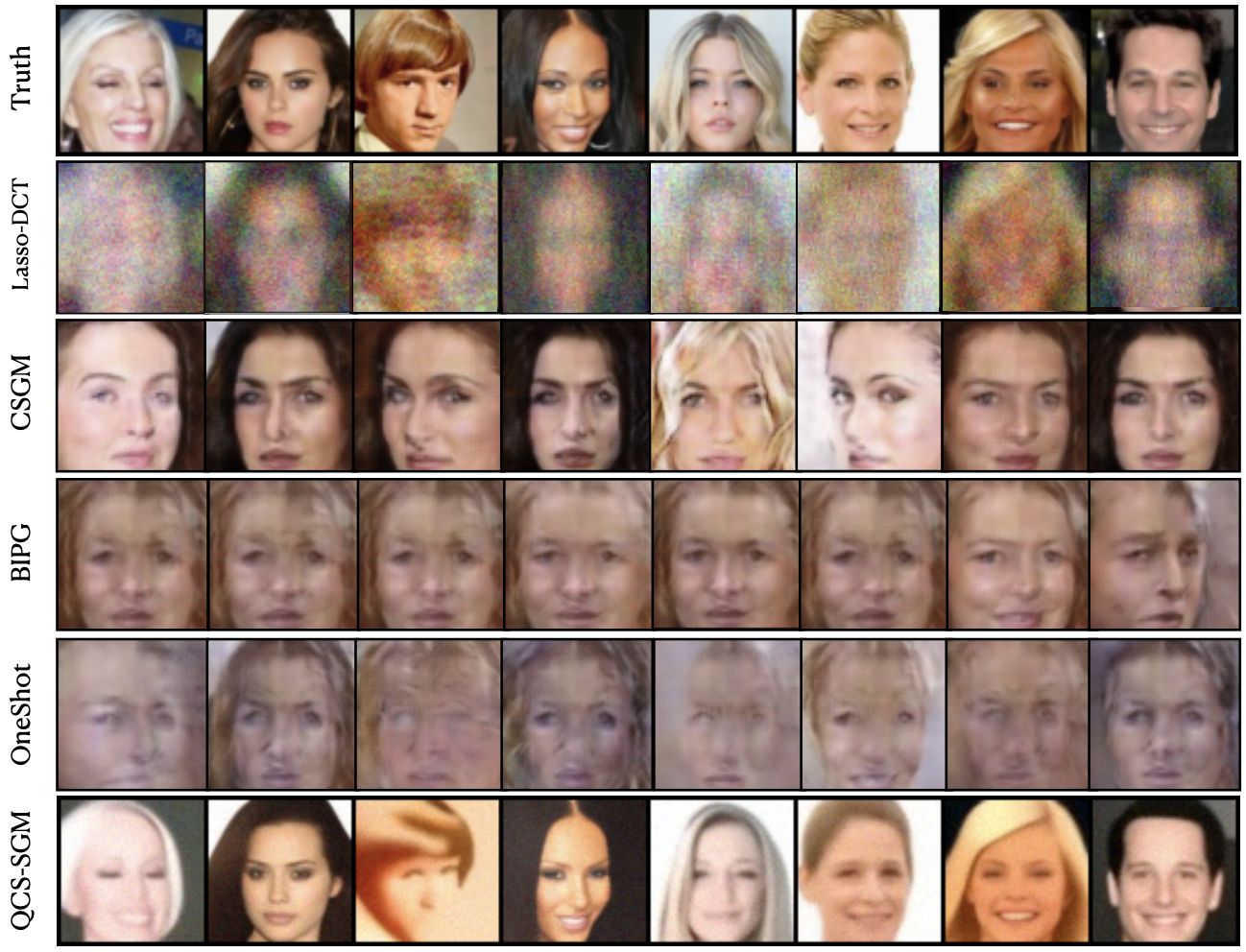}} 
\hspace*{\fill}%
\caption{\small{Typical reconstructed images from 1-bit measurements on MNIST and CelebA. QCS-SGM faithfully recovers original images from 1-bit measurements even when $M\ll N$. Compared to other methods, QCS-SGM  recovers more natural images with good perceptual quality.}}
\label{mnist-celeba-compare}
\end{figure}

\begin{figure}[htbp]
\centering
 \includegraphics[width=1.0\textwidth]{./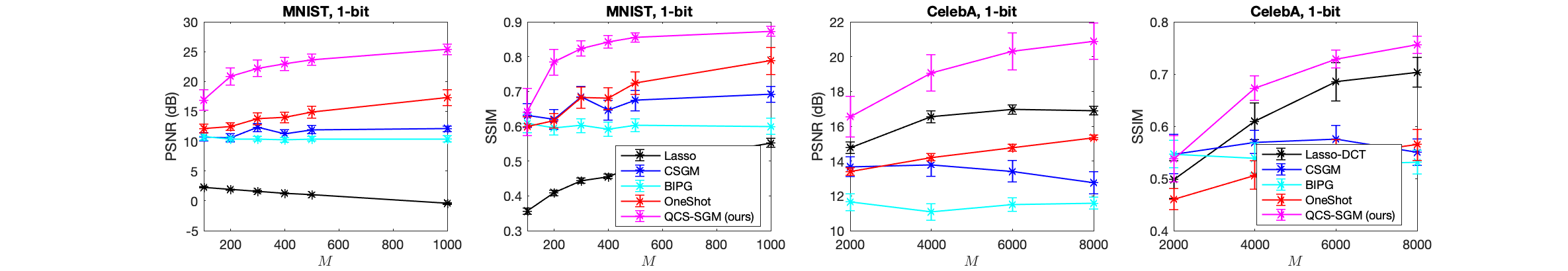}
 \caption{\small{Quantitative comparisons based on different metrics for 1-bit MNIST and CelebA. QCS-SGM remarkably outperforms all other methods under all metrics. }}
 \label{metrics_compare_mnist_celeba}
 \vspace{-1em}
\end{figure}
% \begin{figure}[htbp]
%   \centering
%   \setlength{\unitlength}{\textwidth} 
%     \begin{picture}(1,0.25)%in case your image is twice as wide as it is high %(otherwise change the 0.5 to your file's height/width).
%       \put(-0.1,0){  \includegraphics[width=1.2\unitlength]{./figs/minst_celeba_1bit.png}}
%     \end{picture}
% \caption{Quantitative comparisons based on different metrics for 1-bit MNIST and CelebA. QCS-SGM remarkably outperforms all other methods under all metrics.}
%  \label{metrics_compare_mnist_celeba}
% \end{figure}
\vspace{-1em}
\subsection{1-bit Quantization}
First, we perform experiments on the extreme case, i.e., 1-bit measurements. Specifically, we consider images of the MNIST \citep{mnist_dataset} and CelebA datasets \citep{liu2015faceattributes} in the same setting as \citet{liu2022non}.   Apart from QCS-SGM, we also show results of LASSO \citep{tibshirani1996regression} (for CelebA,  Lasso with a DCT basis \citep{ahmed1974discrete} is used), CSGM \citep{bora2017compressed}, BIPG \citep{liu2020sample}, and  OneShot \citep{liu2022non}. For Lasso (Lasso-DCT), CSGM, BIPG, and OneShot, we follow the default setting as the open-sourced code of \citet{liu2022non}. 

The typical reconstructed images from 1-bit measurements with fixed $M \ll N$ are shown in Figure \ref{mnist-celeba-compare} for both MNIST and CelebA. Perhaps surprisingly, QCS-SGM can faithfully recover the images from 1-bit measurements with $M \ll N$ measurements while other methods fail or only recover quite vague images. To quantitatively evaluate the effect of $M$, we compare different algorithms using three popular metrics, i.e., peak signal-to-noise ratio (PSNR) and structural similarity index (SSIM) for different $M$, and the results are shown in  Figure \ref{metrics_compare_mnist_celeba}. It can be seen that, under both metrics,  the proposed QCS-SGM significantly outperforms all the other algorithms by a large margin.   {An investigation of the effect of (pre-quantization) Gaussian noise is shown in Figure \ref{mnist-celeba-compare-varyingsigma} in Appendix \ref{appendix:liu}. } Please refer to Appendix \ref{apendix:additional-results} for more results.

{\textbf{Multiple samples} $\&$ \textbf{Uncertainty Estimates}}: It is worth pointing out that, as one kind of sampling method, QCS-SGM can yield multiple samples with different random initialization so that we can easily obtain confidence intervals or uncertainty estimates of the reconstructed results. Please refer to Appendix \ref{apendix:multiple-samples} for more typical samples of QCS-SGM. In contrast,  OneShot \citep{liu2022non} can easily get stuck in a local minimum with different random initializations \citep{liu2022non}.

\vspace{-1em}
\subsection{Multi-bit Quantization}
\vspace{-0.5em}
Then, we evaluate the efficacy of QCS-SGM in the case of multi-bit quantization, e.g., 2-bit, 3-bit. The results on Cifar-10 and CelebA are shown in Figure \ref{cifar10-celeba-different-bits}. The results of the linear case without quantization are also shown for comparison using (\ref{likelihood-score-linear}) in  Corollary \ref{noise-likelohood-score-linear}. As expected, with the increase of quantization resolution, the reconstruction performances get better. One typical result for high-resolution FFHQ $256\times 256$ images is shown in Figure \ref{fig-ffhq-256}. Moreover, to evaluate the quantization effect under ``fixed budget" setting, i.e., $Q\times M$ remain the same for different $Q$ and $M$, we conduct experiments for CelebA in the fixed case of $Q\times M = 12288$. Interestingly, as shown in Table \ref{table:fixed-budget}, under `fixed budget", while  there is an apparent increase in PSNR for the multi-bit case, the perception metric SSIM remains about the same as 1-bit case. For more results, please refer to Appendix \ref{apendix:additional-results}.

\begin{table*}
\small % reduce font size by about 10%
\centering
\setlength{\tabcolsep}{0pt}
\begin{adjustbox}{width=0.8\linewidth,center}
\begin{tabular*}{\linewidth}{
  @{\extracolsep{\fill}}
  ccccccc
}

\toprule

{\textbf{CelebA ($64\times 64$)}} & \mc{2}{c}{\scriptsize{1-bit, $M=12288$}}   & \mc{2}{c}{\scriptsize{2-bit, $M=6144$} } & \mc{2}{c}{\scriptsize{3-bit, $M=4096$}} \\ 

\cmidrule{2-3} \cmidrule{4-5} \cmidrule{6-7} 
\textbf{Method}  & \small{\scriptsize{PSNR $\uparrow$}} & \scriptsize{SSIM $\uparrow$}
& \scriptsize{PSNR $\uparrow$} & \scriptsize{SSIM $\uparrow$} & \scriptsize{PSNR $\uparrow$} 
& \scriptsize{SSIM $\uparrow$}\\

\toprule
QCS-SGM \small{(ours)} &	\textbf{20.1} &	\textbf{0.81}	& \textbf{24.3}&	\textbf{0.78} &	\textbf{25.8} & \textbf{0.80}  \\
\midrule
Lasso-DCT &	 {16.2} &	0.68	&  {16.9}&	 {0.69} &	 {18.2}  &  {0.74}  \\
CSGM &  {17.6} &	{0.68}	& {18.6}&	{0.70} &	{18.9}  & {0.72}  \\
BIPG  &{11.6} &	{0.57}	& {11.59}&	{0.47} &	{11.45}  & {0.53}  \\

OneShot & {16.1} &	0.62	& {15.3}&	{0.55} &	{14.75}  & {0.54}  \\

\bottomrule
\end{tabular*}
\end{adjustbox}
\caption{\small{Quantitative comparison of different methods for $Q$-bit CS with ``fixed budget", i.e., $Q\times M$ is fixed. Results are averaged over a validation set of size 100. Reconstructed images are shown in Appendix \ref{apendix:additional-results}}.  }
\label{table:fixed-budget}
\vspace{-0.5em}
\end{table*}

\begin{figure}[h]
\centering  
\hspace*{\fill}%
\subfigure[Cifar-10, $M=2000$, $\sigma=0.001$]{\includegraphics[width=60mm]{./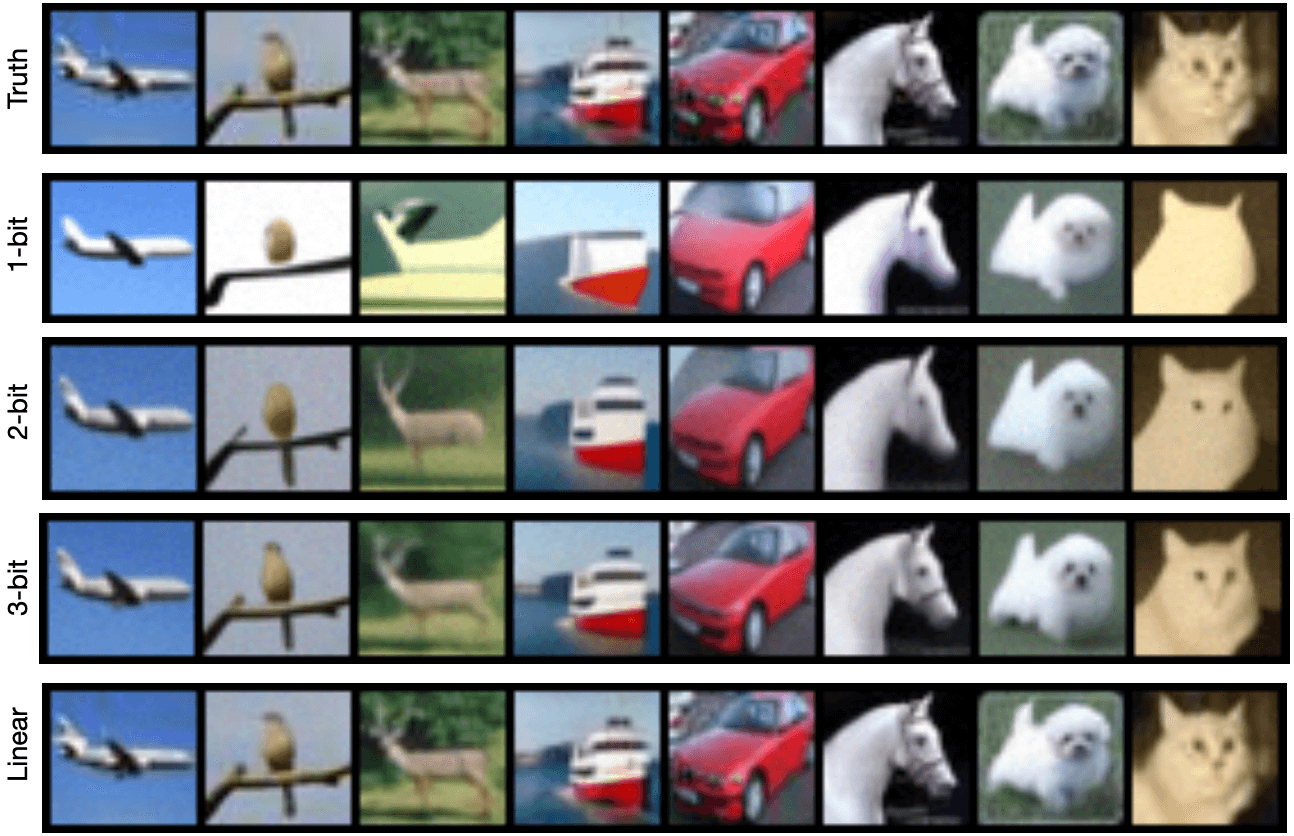}} \hfill
\subfigure[CelebA, $M=4000$ $\sigma=0.001$]{\includegraphics[width=60mm]{./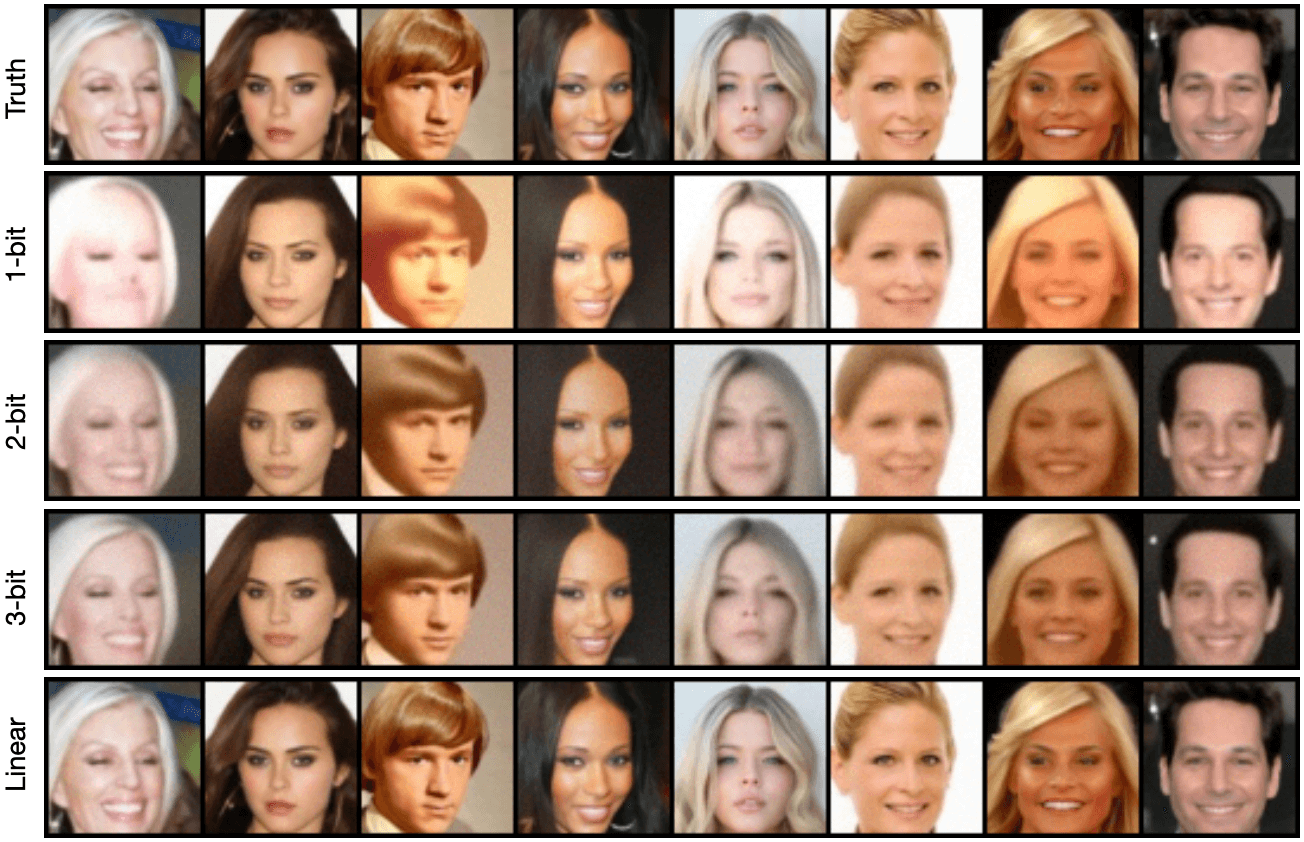}}
\hspace*{\fill}%
\caption{\small{Results of QCS-SGM  for Cifar-10 and CelebA images under different quantization bits.}}
\label{cifar10-celeba-different-bits}
\vspace{-0.5em}
\end{figure}

\begin{figure}[!h]
\centering{}\includegraphics[width=10cm]{./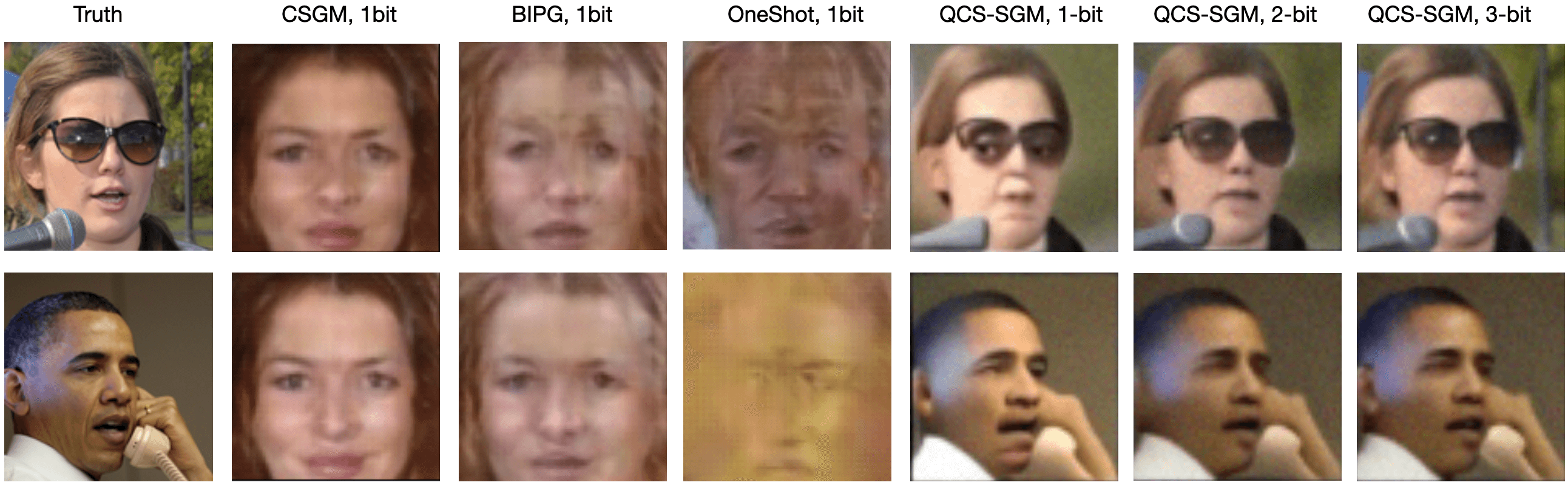}
\caption{\small{Results on out-of-distribution (OOD) FFHQ dataset from 1-bit (2-bit and 3-bit are also shown for QCS-SGM) measurements. $M=10000,\sigma=0.001$.}}
\label{fig-OOD-results}
\vspace{-0.8em}
\end{figure}

\subsection{Out-of-Distribution Performance}
\vspace{-0.5em}
In practice, the signals of interest may be out-of-distribution (OOD) because the available training dataset has bias and is unrepresentative of the true
underlying distribution. To assess the performance of QCS-SGM for OOD datasets, we choose the CelebA as the training set while evaluating results on OOD images randomly sampled from the FFHQ dataset, which have features of variation that are rare among CelebA images (e.g., skin tone, age, beards, and glasses) \citep{jalal2021instance}. As shown in Figure \ref{fig-OOD-results}, even on OOD images, the proposed QCS-SGM can still obtain significantly competitive results compared with other existing algorithms. 

\vspace{-0.5em}
\section{Conclusion}
\label{sec:Conclusion}
\vspace{-0.5em}
In this paper, a novel framework called Quantized Compressed Sensing with Score-based Generative Models (QCS-SGM) is proposed to reconstruct a high-dimensional signal from noisy quantized measurements. Thanks to the power of SGM in capturing the rich structure of natural signals beyond simple sparsity, the proposed QCS-SGM significantly outperforms  existing SOTA algorithms by a large margin under coarsely quantized measurements on a variety of experiments, for both in-distribution and out-of-distribution datasets. There are still some limitations in QCS-SGM. For example, it is limited to the case where measurement matrix $\bf{A}$ is (approximately) row-orthogonal. The generalization QCS-SGM to general matrices is an important future work. In addition, at present the training of SGM models is still very expensive, which hinders the applicability of QCS-SGM in the resource-limited scenario when no pre-trained models are available. Compared to traditional CS algorithms without data-driven prior, the possible lack of enough training data also imposes a great challenge for QCS-SGM. It is therefore important to address these limitations in the future. Nevertheless, as it is a research topic with rapid progress,  we believe that SGM could open up new opportunities for quantized CS, and sincerely hope that our work as a first step could inspire more studies on this fascinating topic. 

% \section*{Acknowledgements}
% This work was supported by JSPS KAKENHI Nos. 17H00764, 18K11463, and 19H01812,
% and JST CREST Grant Number JPMJCR1912, Japan.
\section*{Acknowledgements}
{X. Meng would like to thank Jiulong Liu for his  help in understanding their code of OneShot. This work was supported by JSPS KAKENHI Nos. 17H00764, and 19H01812, 22H05117,
and JST CREST Grant Number JPMJCR1912, Japan.}

\bibliography{iclr2023_conference}

\begin{thebibliography}{74}
\providecommand{\natexlab}[1]{#1}
\providecommand{\url}[1]{\texttt{#1}}
\expandafter\ifx\csname urlstyle\endcsname\relax
  \providecommand{\doi}[1]{doi: #1}\else
  \providecommand{\doi}{doi: \begingroup \urlstyle{rm}\Url}\fi

\bibitem[Aggarwal et~al.(2018)Aggarwal, Mani, and Jacob]{aggarwal2018modl}
Hemant~K Aggarwal, Merry~P Mani, and Mathews Jacob.
\newblock Modl: Model-based deep learning architecture for inverse problems.
\newblock \emph{IEEE transactions on medical imaging}, 38\penalty0
  (2):\penalty0 394--405, 2018.

\bibitem[Ahmed et~al.(1974)Ahmed, Natarajan, and Rao]{ahmed1974discrete}
Nasir Ahmed, T\_ Natarajan, and Kamisetty~R Rao.
\newblock Discrete cosine transform.
\newblock \emph{IEEE transactions on Computers}, 100\penalty0 (1):\penalty0
  90--93, 1974.

\bibitem[Antun et~al.(2020)Antun, Renna, Poon, Adcock, and
  Hansen]{antun2020instabilities}
Vegard Antun, Francesco Renna, Clarice Poon, Ben Adcock, and Anders~C Hansen.
\newblock On instabilities of deep learning in image reconstruction and the
  potential costs of ai.
\newblock \emph{Proceedings of the National Academy of Sciences}, 117\penalty0
  (48):\penalty0 30088--30095, 2020.

\bibitem[Asim et~al.(2020)Asim, Daniels, Leong, Ahmed, and
  Hand]{asim2020invertible}
Muhammad Asim, Max Daniels, Oscar Leong, Ali Ahmed, and Paul Hand.
\newblock Invertible generative models for inverse problems: mitigating
  representation error and dataset bias.
\newblock In \emph{International Conference on Machine Learning}, pp.\
  399--409. PMLR, 2020.

\bibitem[Awasthi et~al.(2016)Awasthi, Balcan, Haghtalab, and
  Zhang]{awasthi2016learning}
Pranjal Awasthi, Maria-Florina Balcan, Nika Haghtalab, and Hongyang Zhang.
\newblock Learning and 1-bit compressed sensing under asymmetric noise.
\newblock In \emph{Conference on Learning Theory}, pp.\  152--192. PMLR, 2016.

\bibitem[Bach et~al.(2012)Bach, Jenatton, Mairal, Obozinski,
  et~al.]{bach2012optimization}
Francis Bach, Rodolphe Jenatton, Julien Mairal, Guillaume Obozinski, et~al.
\newblock Optimization with sparsity-inducing penalties.
\newblock \emph{Foundations and Trends{\textregistered} in Machine Learning},
  4\penalty0 (1):\penalty0 1--106, 2012.

\bibitem[Beck \& Teboulle(2009)Beck and Teboulle]{beck2009fast}
Amir Beck and Marc Teboulle.
\newblock A fast iterative shrinkage-thresholding algorithm for linear inverse
  problems.
\newblock \emph{SIAM journal on imaging sciences}, 2\penalty0 (1):\penalty0
  183--202, 2009.

\bibitem[Bora et~al.(2017)Bora, Jalal, Price, and Dimakis]{bora2017compressed}
Ashish Bora, Ajil Jalal, Eric Price, and Alexandros~G Dimakis.
\newblock Compressed sensing using generative models.
\newblock In \emph{International Conference on Machine Learning}, pp.\
  537--546. PMLR, 2017.

\bibitem[Boufounos \& Baraniuk(2008)Boufounos and Baraniuk]{boufounos20081}
Petros~T Boufounos and Richard~G Baraniuk.
\newblock 1-bit compressive sensing.
\newblock In \emph{2008 42nd Annual Conference on Information Sciences and
  Systems}, pp.\  16--21. IEEE, 2008.

\bibitem[Bussi \& Parrinello(2007)Bussi and Parrinello]{bussi2007accurate}
Giovanni Bussi and Michele Parrinello.
\newblock Accurate sampling using langevin dynamics.
\newblock \emph{Physical Review E}, 75\penalty0 (5):\penalty0 056707, 2007.

\bibitem[Cand{\`e}s \& Wakin(2008)Cand{\`e}s and Wakin]{candes2008introduction}
Emmanuel~J Cand{\`e}s and Michael~B Wakin.
\newblock An introduction to compressive sampling.
\newblock \emph{IEEE signal processing magazine}, 25\penalty0 (2):\penalty0
  21--30, 2008.

\bibitem[Cand{\`e}s et~al.(2006)Cand{\`e}s, Romberg, and Tao]{candes2006robust}
Emmanuel~J Cand{\`e}s, Justin Romberg, and Terence Tao.
\newblock Robust uncertainty principles: Exact signal reconstruction from
  highly incomplete frequency information.
\newblock \emph{IEEE Transactions on information theory}, 52\penalty0
  (2):\penalty0 489--509, 2006.

\bibitem[Chung et~al.(2022)Chung, Sim, Ryu, and Ye]{chung2022improving}
Hyungjin Chung, Byeongsu Sim, Dohoon Ryu, and Jong~Chul Ye.
\newblock Improving diffusion models for inverse problems using manifold
  constraints.
\newblock \emph{arXiv preprint arXiv:2206.00941}, 2022.

\bibitem[Dai \& Milenkovic(2011)Dai and Milenkovic]{dai2011information}
Wei Dai and Olgica Milenkovic.
\newblock Information theoretical and algorithmic approaches to quantized
  compressive sensing.
\newblock \emph{IEEE transactions on communications}, 59\penalty0 (7):\penalty0
  1857--1866, 2011.

\bibitem[Daras et~al.(2022)Daras, Dagan, Dimakis, and
  Daskalakis]{daras2022score}
Giannis Daras, Yuval Dagan, Alex Dimakis, and Constantinos Daskalakis.
\newblock Score-guided intermediate level optimization: Fast langevin mixing
  for inverse problems.
\newblock In \emph{International Conference on Machine Learning}, pp.\
  4722--4753. PMLR, 2022.

\bibitem[Dirksen(2019)]{dirksen2019quantized}
Sjoerd Dirksen.
\newblock Quantized compressed sensing: a survey.
\newblock In \emph{Compressed Sensing and Its Applications}, pp.\  67--95.
  Springer, 2019.

\bibitem[Donoho(2006)]{donoho2006compressed}
David~L Donoho.
\newblock Compressed sensing.
\newblock \emph{IEEE Transactions on information theory}, 52\penalty0
  (4):\penalty0 1289--1306, 2006.

\bibitem[Donoho et~al.(2009)Donoho, Maleki, and Montanari]{donoho2009message}
David~L Donoho, Arian Maleki, and Andrea Montanari.
\newblock Message-passing algorithms for compressed sensing.
\newblock \emph{Proceedings of the National Academy of Sciences}, 106\penalty0
  (45):\penalty0 18914--18919, 2009.

\bibitem[Duarte \& Eldar(2011)Duarte and Eldar]{duarte2011structured}
Marco~F Duarte and Yonina~C Eldar.
\newblock Structured compressed sensing: From theory to applications.
\newblock \emph{IEEE Transactions on signal processing}, 59\penalty0
  (9):\penalty0 4053--4085, 2011.

\bibitem[Fazel et~al.(2008)Fazel, Candes, Recht, and
  Parrilo]{fazel2008compressed}
Maryam Fazel, E~Candes, Benjamin Recht, and P~Parrilo.
\newblock Compressed sensing and robust recovery of low rank matrices.
\newblock In \emph{2008 42nd Asilomar Conference on Signals, Systems and
  Computers}, pp.\  1043--1047. IEEE, 2008.

\bibitem[Foygel \& Mackey(2014)Foygel and Mackey]{foygel2014corrupted}
Rina Foygel and Lester Mackey.
\newblock Corrupted sensing: Novel guarantees for separating structured
  signals.
\newblock \emph{IEEE Transactions on Information Theory}, 60\penalty0
  (2):\penalty0 1223--1247, 2014.

\bibitem[Gilton et~al.(2019)Gilton, Ongie, and Willett]{gilton2019neumann}
Davis Gilton, Greg Ongie, and Rebecca Willett.
\newblock Neumann networks for linear inverse problems in imaging.
\newblock \emph{IEEE Transactions on Computational Imaging}, 6:\penalty0
  328--343, 2019.

\bibitem[Goodfellow et~al.(2014)Goodfellow, Pouget-Abadie, Mirza, Xu,
  Warde-Farley, Ozair, Courville, and Bengio]{goodfellow2014generative}
Ian Goodfellow, Jean Pouget-Abadie, Mehdi Mirza, Bing Xu, David Warde-Farley,
  Sherjil Ozair, Aaron Courville, and Yoshua Bengio.
\newblock Generative adversarial nets.
\newblock In Z.~Ghahramani, M.~Welling, C.~Cortes, N.~Lawrence, and K.Q.
  Weinberger (eds.), \emph{Advances in Neural Information Processing Systems},
  volume~27. Curran Associates, Inc., 2014.
\newblock URL
  \url{https://proceedings.neurips.cc/paper/2014/file/5ca3e9b122f61f8f06494c97b1afccf3-Paper.pdf}.

\bibitem[Hand \& Joshi(2019)Hand and Joshi]{hand2019global}
Paul Hand and Babhru Joshi.
\newblock Global guarantees for blind demodulation with generative priors.
\newblock \emph{Advances in Neural Information Processing Systems}, 32, 2019.

\bibitem[Ho et~al.(2020)Ho, Jain, and Abbeel]{ho2020denoising}
Jonathan Ho, Ajay Jain, and Pieter Abbeel.
\newblock Denoising diffusion probabilistic models.
\newblock \emph{Advances in Neural Information Processing Systems},
  33:\penalty0 6840--6851, 2020.

\bibitem[Hyv{\"a}rinen(2006)]{hyvarinen2006consistency}
Aapo Hyv{\"a}rinen.
\newblock Consistency of pseudolikelihood estimation of fully visible boltzmann
  machines.
\newblock \emph{Neural Computation}, 18\penalty0 (10):\penalty0 2283--2292,
  2006.

\bibitem[Jacques et~al.(2010)Jacques, Hammond, and
  Fadili]{jacques2010dequantizing}
Laurent Jacques, David~K Hammond, and Jalal~M Fadili.
\newblock Dequantizing compressed sensing: When oversampling and non-gaussian
  constraints combine.
\newblock \emph{IEEE Transactions on Information Theory}, 57\penalty0
  (1):\penalty0 559--571, 2010.

\bibitem[Jacques et~al.(2013)Jacques, Laska, Boufounos, and
  Baraniuk]{jacques2013robust}
Laurent Jacques, Jason~N Laska, Petros~T Boufounos, and Richard~G Baraniuk.
\newblock Robust 1-bit compressive sensing via binary stable embeddings of
  sparse vectors.
\newblock \emph{IEEE transactions on information theory}, 59\penalty0
  (4):\penalty0 2082--2102, 2013.

\bibitem[Jalal et~al.(2021{\natexlab{a}})Jalal, Arvinte, Daras, Price, Dimakis,
  and Tamir]{jalal2021robust}
Ajil Jalal, Marius Arvinte, Giannis Daras, Eric Price, Alexandros~G Dimakis,
  and Jon Tamir.
\newblock Robust compressed sensing mri with deep generative priors.
\newblock \emph{Advances in Neural Information Processing Systems},
  34:\penalty0 14938--14954, 2021{\natexlab{a}}.

\bibitem[Jalal et~al.(2021{\natexlab{b}})Jalal, Karmalkar, Dimakis, and
  Price]{jalal2021instance}
Ajil Jalal, Sushrut Karmalkar, Alex Dimakis, and Eric Price.
\newblock Instance-optimal compressed sensing via posterior sampling.
\newblock In \emph{International Conference on Machine Learning}, pp.\
  4709--4720. PMLR, 2021{\natexlab{b}}.

\bibitem[Jin et~al.(2017)Jin, McCann, Froustey, and Unser]{jin2017deep}
Kyong~Hwan Jin, Michael~T McCann, Emmanuel Froustey, and Michael Unser.
\newblock Deep convolutional neural network for inverse problems in imaging.
\newblock \emph{IEEE Transactions on Image Processing}, 26\penalty0
  (9):\penalty0 4509--4522, 2017.

\bibitem[Jung et~al.(2021)Jung, Maly, Palzer, and
  Stollenwerk]{jung2021quantized}
Hans~Christian Jung, Johannes Maly, Lars Palzer, and Alexander Stollenwerk.
\newblock Quantized compressed sensing by rectified linear units.
\newblock \emph{IEEE transactions on information theory}, 67\penalty0
  (6):\penalty0 4125--4149, 2021.

\bibitem[Kabashima et~al.(2009)Kabashima, Wadayama, and Tanaka]{Kabashima_2009}
Y~Kabashima, T~Wadayama, and T~Tanaka.
\newblock A typical reconstruction limit for compressed sensing based on
  lp-norm minimization.
\newblock \emph{Journal of Statistical Mechanics: Theory and Experiment},
  2009\penalty0 (09):\penalty0 L09003, sep 2009.
\newblock \doi{10.1088/1742-5468/2009/09/L09003}.
\newblock URL \url{https://dx.doi.org/10.1088/1742-5468/2009/09/L09003}.

\bibitem[Kabashima(2003)]{kabashima2003cdma}
Yoshiyuki Kabashima.
\newblock A cdma multiuser detection algorithm on the basis of belief
  propagation.
\newblock \emph{Journal of Physics A: Mathematical and General}, 36\penalty0
  (43):\penalty0 11111, 2003.

\bibitem[Karras et~al.(2018)Karras, Aila, Laine, and
  Lehtinen]{karras2018progressive}
Tero Karras, Timo Aila, Samuli Laine, and Jaakko Lehtinen.
\newblock Progressive growing of gans for improved quality, stability, and
  variation.
\newblock In \emph{International Conference on Learning Representations}, 2018.

\bibitem[Kawar et~al.(2021)Kawar, Vaksman, and Elad]{kawar2021snips}
Bahjat Kawar, Gregory Vaksman, and Michael Elad.
\newblock Snips: Solving noisy inverse problems stochastically.
\newblock \emph{Advances in Neural Information Processing Systems},
  34:\penalty0 21757--21769, 2021.

\bibitem[Kawar et~al.(2022)Kawar, Elad, Ermon, and Song]{kawar2022denoising}
Bahjat Kawar, Michael Elad, Stefano Ermon, and Jiaming Song.
\newblock Denoising diffusion restoration models.
\newblock \emph{arXiv preprint arXiv:2201.11793}, 2022.

\bibitem[Kingma \& Welling(2013)Kingma and Welling]{kingma2013auto}
Diederik~P Kingma and Max Welling.
\newblock Auto-encoding variational bayes.
\newblock \emph{arXiv preprint arXiv:1312.6114}, 2013.

\bibitem[Kirkpatrick et~al.(1983)Kirkpatrick, Gelatt~Jr, and
  Vecchi]{kirkpatrick1983optimization}
Scott Kirkpatrick, C~Daniel Gelatt~Jr, and Mario~P Vecchi.
\newblock Optimization by simulated annealing.
\newblock \emph{science}, 220\penalty0 (4598):\penalty0 671--680, 1983.

\bibitem[Krizhevsky \& Hinton(2009)Krizhevsky and
  Hinton]{krizhevsky2009learning}
Alex Krizhevsky and Geoffrey Hinton.
\newblock Learning multiple layers of features from tiny images.
\newblock Technical report, Citeseer, 2009.

\bibitem[LeCun \& Cortes(2010)LeCun and Cortes]{mnist_dataset}
Yann LeCun and Corinna Cortes.
\newblock {MNIST} handwritten digit database.
\newblock 2010.
\newblock URL \url{http://yann.lecun.com/exdb/mnist/}.

\bibitem[Liu \& Liu(2022)Liu and Liu]{liu2022non}
Jiulong Liu and Zhaoqiang Liu.
\newblock Non-iterative recovery from nonlinear observations using generative
  models.
\newblock In \emph{Proceedings of the IEEE/CVF Conference on Computer Vision
  and Pattern Recognition}, pp.\  233--243, 2022.

\bibitem[Liu et~al.(2020)Liu, Gomes, Tiwari, and Scarlett]{liu2020sample}
Zhaoqiang Liu, Selwyn Gomes, Avtansh Tiwari, and Jonathan Scarlett.
\newblock Sample complexity bounds for 1-bit compressive sensing and binary
  stable embeddings with generative priors.
\newblock In \emph{International Conference on Machine Learning}, pp.\
  6216--6225. PMLR, 2020.

\bibitem[Liu et~al.(2015)Liu, Luo, Wang, and Tang]{liu2015faceattributes}
Ziwei Liu, Ping Luo, Xiaogang Wang, and Xiaoou Tang.
\newblock Deep learning face attributes in the wild.
\newblock In \emph{Proceedings of International Conference on Computer Vision
  (ICCV)}, December 2015.

\bibitem[Lustig et~al.(2007)Lustig, Donoho, and Pauly]{lustig2007sparse}
Michael Lustig, David Donoho, and John~M Pauly.
\newblock Sparse mri: The application of compressed sensing for rapid mr
  imaging.
\newblock \emph{Magnetic Resonance in Medicine: An Official Journal of the
  International Society for Magnetic Resonance in Medicine}, 58\penalty0
  (6):\penalty0 1182--1195, 2007.

\bibitem[Lustig et~al.(2008)Lustig, Donoho, Santos, and
  Pauly]{lustig2008compressed}
Michael Lustig, David~L Donoho, Juan~M Santos, and John~M Pauly.
\newblock Compressed sensing mri.
\newblock \emph{IEEE signal processing magazine}, 25\penalty0 (2):\penalty0
  72--82, 2008.

\bibitem[Meng et~al.(2018)Meng, Wu, and Zhu]{meng2018unified}
Xiangming Meng, Sheng Wu, and Jiang Zhu.
\newblock A unified bayesian inference framework for generalized linear models.
\newblock \emph{IEEE Signal Processing Letters}, 25\penalty0 (3):\penalty0
  398--402, 2018.

\bibitem[Neal(2001)]{neal2001annealed}
Radford~M Neal.
\newblock Annealed importance sampling.
\newblock \emph{Statistics and computing}, 11\penalty0 (2):\penalty0 125--139,
  2001.

\bibitem[Nichol \& Dhariwal(2021)Nichol and Dhariwal]{nichol2021improved}
Alexander~Quinn Nichol and Prafulla Dhariwal.
\newblock Improved denoising diffusion probabilistic models.
\newblock In \emph{International Conference on Machine Learning}, pp.\
  8162--8171. PMLR, 2021.

\bibitem[Pan et~al.(2021)Pan, Zhan, Dai, Lin, Loy, and Luo]{pan2021exploiting}
Xingang Pan, Xiaohang Zhan, Bo~Dai, Dahua Lin, Chen~Change Loy, and Ping Luo.
\newblock Exploiting deep generative prior for versatile image restoration and
  manipulation.
\newblock \emph{IEEE Transactions on Pattern Analysis and Machine
  Intelligence}, 2021.

\bibitem[Plan \& Vershynin(2012)Plan and Vershynin]{plan2012robust}
Yaniv Plan and Roman Vershynin.
\newblock Robust 1-bit compressed sensing and sparse logistic regression: A
  convex programming approach.
\newblock \emph{IEEE Transactions on Information Theory}, 59\penalty0
  (1):\penalty0 482--494, 2012.

\bibitem[Plan \& Vershynin(2013)Plan and Vershynin]{plan2013one}
Yaniv Plan and Roman Vershynin.
\newblock One-bit compressed sensing by linear programming.
\newblock \emph{Communications on Pure and Applied Mathematics}, 66\penalty0
  (8):\penalty0 1275--1297, 2013.

\bibitem[Qiu et~al.(2020)Qiu, Wei, and Yang]{qiu2020robust}
Shuang Qiu, Xiaohan Wei, and Zhuoran Yang.
\newblock Robust one-bit recovery via relu generative networks: Near-optimal
  statistical rate and global landscape analysis.
\newblock In \emph{International Conference on Machine Learning}, pp.\
  7857--7866. PMLR, 2020.

\bibitem[Radford et~al.(2015)Radford, Metz, and
  Chintala]{radford2015unsupervised}
Alec Radford, Luke Metz, and Soumith Chintala.
\newblock Unsupervised representation learning with deep convolutional
  generative adversarial networks.
\newblock \emph{arXiv preprint arXiv:1511.06434}, 2015.

\bibitem[Rezende \& Mohamed(2015)Rezende and Mohamed]{rezende2015variational}
Danilo Rezende and Shakir Mohamed.
\newblock Variational inference with normalizing flows.
\newblock In \emph{International conference on machine learning}, pp.\
  1530--1538. PMLR, 2015.

\bibitem[Roberts \& Tweedie(1996)Roberts and Tweedie]{roberts1996exponential}
Gareth~O Roberts and Richard~L Tweedie.
\newblock Exponential convergence of langevin distributions and their discrete
  approximations.
\newblock \emph{Bernoulli}, pp.\  341--363, 1996.

\bibitem[Song \& Ermon(2019)Song and Ermon]{song2019generative}
Yang Song and Stefano Ermon.
\newblock Generative modeling by estimating gradients of the data distribution.
\newblock \emph{Advances in Neural Information Processing Systems}, 32, 2019.

\bibitem[Song \& Ermon(2020)Song and Ermon]{song2020improved}
Yang Song and Stefano Ermon.
\newblock Improved techniques for training score-based generative models.
\newblock \emph{Advances in neural information processing systems},
  33:\penalty0 12438--12448, 2020.

\bibitem[Song et~al.(2020)Song, Sohl-Dickstein, Kingma, Kumar, Ermon, and
  Poole]{song2020score}
Yang Song, Jascha Sohl-Dickstein, Diederik~P Kingma, Abhishek Kumar, Stefano
  Ermon, and Ben Poole.
\newblock Score-based generative modeling through stochastic differential
  equations.
\newblock In \emph{International Conference on Learning Representations}, 2020.

\bibitem[Tang et~al.(2009)Tang, Nett, and Chen]{tang2009performance}
Jie Tang, Brian~E Nett, and Guang-Hong Chen.
\newblock Performance comparison between total variation (tv)-based compressed
  sensing and statistical iterative reconstruction algorithms.
\newblock \emph{Physics in Medicine $\&$ Biology}, 54\penalty0 (19):\penalty0
  5781, 2009.

\bibitem[Tibshirani(1996)]{tibshirani1996regression}
Robert Tibshirani.
\newblock Regression shrinkage and selection via the lasso.
\newblock \emph{Journal of the Royal Statistical Society: Series B
  (Methodological)}, 58\penalty0 (1):\penalty0 267--288, 1996.

\bibitem[Tropp \& Wright(2010)Tropp and Wright]{tropp2010computational}
Joel~A Tropp and Stephen~J Wright.
\newblock Computational methods for sparse solution of linear inverse problems.
\newblock \emph{Proceedings of the IEEE}, 98\penalty0 (6):\penalty0 948--958,
  2010.

\bibitem[Turq et~al.(1977)Turq, Lantelme, and Friedman]{turq1977brownian}
Pierre Turq, Fr{\'e}d{\'e}ric Lantelme, and Harold~L Friedman.
\newblock Brownian dynamics: Its application to ionic solutions.
\newblock \emph{The Journal of Chemical Physics}, 66\penalty0 (7):\penalty0
  3039--3044, 1977.

\bibitem[Vehkaper{\"a} et~al.(2016)Vehkaper{\"a}, Kabashima, and
  Chatterjee]{vehkapera2016analysis}
Mikko Vehkaper{\"a}, Yoshiyuki Kabashima, and Saikat Chatterjee.
\newblock Analysis of regularized ls reconstruction and random matrix ensembles
  in compressed sensing.
\newblock \emph{IEEE Transactions on Information Theory}, 62\penalty0
  (4):\penalty0 2100--2124, 2016.

\bibitem[Vincent(2011)]{vincent2011connection}
Pascal Vincent.
\newblock A connection between score matching and denoising autoencoders.
\newblock \emph{Neural computation}, 23\penalty0 (7):\penalty0 1661--1674,
  2011.

\bibitem[Wei et~al.(2019)Wei, Yang, and Wang]{wei2019statistical}
Xiaohan Wei, Zhuoran Yang, and Zhaoran Wang.
\newblock On the statistical rate of nonlinear recovery in generative models
  with heavy-tailed data.
\newblock In \emph{International Conference on Machine Learning}, pp.\
  6697--6706. PMLR, 2019.

\bibitem[Welling \& Teh(2011)Welling and Teh]{welling2011bayesian}
Max Welling and Yee~W Teh.
\newblock Bayesian learning via stochastic gradient langevin dynamics.
\newblock In \emph{Proceedings of the 28th international conference on machine
  learning (ICML-11)}, pp.\  681--688. Citeseer, 2011.

\bibitem[Wu et~al.(2019)Wu, Rosca, and Lillicrap]{wu2019deep}
Yan Wu, Mihaela Rosca, and Timothy Lillicrap.
\newblock Deep compressed sensing.
\newblock In \emph{International Conference on Machine Learning}, pp.\
  6850--6860. PMLR, 2019.

\bibitem[Xu \& Kabashima(2013)Xu and Kabashima]{Xu_2013}
Yingying Xu and Yoshiyuki Kabashima.
\newblock Statistical mechanics approach to 1-bit compressed sensing.
\newblock \emph{Journal of Statistical Mechanics: Theory and Experiment},
  2013\penalty0 (02):\penalty0 P02041, feb 2013.
\newblock \doi{10.1088/1742-5468/2013/02/p02041}.
\newblock URL \url{https://doi.org/10.1088/1742-5468/2013/02/p02041}.

\bibitem[Xu et~al.(2014)Xu, Kabashima, and Zdeborov{\'{a}}]{Xu_2014}
Yingying Xu, Yoshiyuki Kabashima, and Lenka Zdeborov{\'{a}}.
\newblock Bayesian signal reconstruction for 1-bit compressed sensing.
\newblock \emph{Journal of Statistical Mechanics: Theory and Experiment},
  2014\penalty0 (11):\penalty0 P11015, nov 2014.
\newblock \doi{10.1088/1742-5468/2014/11/p11015}.
\newblock URL \url{https://doi.org/10.1088/1742-5468/2014/11/p11015}.

\bibitem[Yang et~al.(2020)Yang, Sun, Li, and Xu]{Sunjian2020}
Yan Yang, Jian Sun, Huibin Li, and Zongben Xu.
\newblock Admm-csnet: A deep learning approach for image compressive sensing.
\newblock \emph{IEEE Transactions on Pattern Analysis and Machine
  Intelligence}, 42\penalty0 (3):\penalty0 521--538, 2020.
\newblock \doi{10.1109/TPAMI.2018.2883941}.

\bibitem[Yao et~al.(2019)Yao, Dai, Zhang, Zhang, Tian, and Xu]{yao2019dr2}
Hantao Yao, Feng Dai, Shiliang Zhang, Yongdong Zhang, Qi~Tian, and Changsheng
  Xu.
\newblock Dr2-net: Deep residual reconstruction network for image compressive
  sensing.
\newblock \emph{Neurocomputing}, 359:\penalty0 483--493, 2019.

\bibitem[Zhang \& Ghanem(2018)Zhang and Ghanem]{zhang2018ista}
Jian Zhang and Bernard Ghanem.
\newblock Ista-net: Interpretable optimization-inspired deep network for image
  compressive sensing.
\newblock In \emph{Proceedings of the IEEE conference on computer vision and
  pattern recognition}, pp.\  1828--1837, 2018.

\bibitem[Zymnis et~al.(2009)Zymnis, Boyd, and Candes]{zymnis2009compressed}
Argyrios Zymnis, Stephen Boyd, and Emmanuel Candes.
\newblock Compressed sensing with quantized measurements.
\newblock \emph{IEEE Signal Processing Letters}, 17\penalty0 (2):\penalty0
  149--152, 2009.

\end{thebibliography}
\bibliographystyle{iclr2023_conference}

\newpage
\appendix

\section{Verification of Assumption \ref{uninformative-assumption}}
\label{appendix:uninformative-prior}
Mathematically, we can write the exact ${p({\bf{y}}\mid \tilde{\bf{x}}})$ as 
\begin{align}
   p({\bf{y}} \mid \tilde{{\bf{x}}}) = \int   p({\bf{y}} \mid {\bf{x}})   p({\bf{x}} \mid \tilde{\bf{x}} ) d{\bf{x}},\label{eq:rigorous-likelihood-def}
\end{align}
where from the Bayes' rule,
\begin{align}
   p({\bf{x}} \mid \tilde{\bf{x}} )= \frac{p(\tilde{\bf{x}}  \mid {\bf{x}}) p({\bf{x}}) }{\int p(\tilde{\bf{x}}  \mid {\bf{x}}) p({\bf{x}})  d{\bf{x}} }. \label{eq:pdf_condition_x}
\end{align}
For NCSN/NCSNv2, recall that the likelihood $p(\tilde{\bf{x}} \mid {\bf{x}})$ follows a Gaussian distribution $\mathcal{N}(\tilde{\bf{x}};{\bf{x}}, \beta^2 {\bf{I}})$, where $\beta$ is the variance of the perturbed Gaussian noise and, by definition, $\beta \to 0$ in later steps of the reverse diffusion process in  NCSN/NCSNv2. As a result, from (\ref{eq:pdf_condition_x}), the likelihood  $p(\tilde{\bf{x}} \mid {\bf{x}})$ will dominate the posterior $p({\bf{x}} \mid \tilde{\bf{x}} )$ as $\beta\to 0$ and therefore $p({\bf{x}} \mid \tilde{\bf{x}}) \propto p(\tilde{\bf{x}} \mid {\bf{x}})$, indicating that the prior $p({\bf{x}})$ becomes uninformative. Note that this is particularly the case when the entropy of $p({\bf{x}})$ is high, which is usually the case for generative models capable of generating diverse images. While this assumption is crude at the beginning of the reverse diffusion process when the noise variance $\beta$ is large, it is asymptotically accurate as the process goes forward. The effectiveness of this assumption is also empirically supported by various experiments in Section \ref{sec:experiements}.   
\begin{figure}[!h]
\centering  
\includegraphics[width=1\textwidth]{./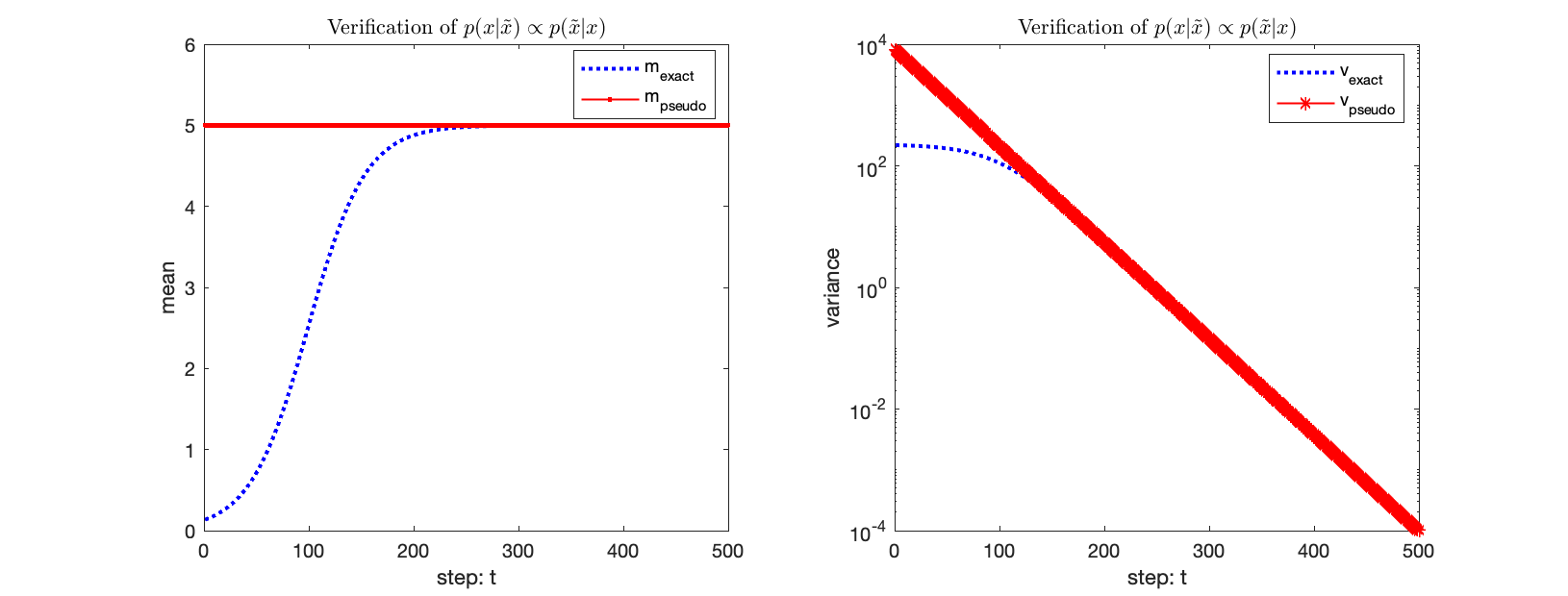}
\caption{Comparison of the exact mean and variance of  $p(  {{x}} \mid \tilde{{x}})$  with the pseudo mean and variance under the uninformative assumption, i.e.,  $p(  {{x}} \mid \tilde{{x}}) \propto p( \tilde{{x}} \mid   {{x}} ) $  in the toy scalar Gaussian example. In this plot, we set $\beta_{\rm{max}} = 90$, $\beta_{\rm{min}} = 0.01$, $T = 500$, which are the same as the setting of NCSNv2 for CelebA. The  $\tilde{x}$ and the prior standard deviation $\sigma_0$ is set to be $\sigma_0 = 15$. It can be seen that the approximated values  approach the exact values very quickly, verifying the effectiveness of the Assumption \ref{uninformative-assumption} for this toy example. }
\label{fig:uninformative-toy}
\end{figure}

\textbf{A toy example:} In the following, we illustrate the assumption in a toy example where $\bf{x}$ reduces to a scalar random variable $x$ and the associated prior $p(x)$ follows a Gaussian distribution, i.e., $p(x) = \mathcal{N}(x;0,\sigma^2_0)$, where $\sigma^2$ is the prior variance. The likelihood $p(\tilde{\bf{x}} \mid {\bf{x}})$ in this case is simply $p(\tilde{{x}} \mid {{x}}) =  \mathcal{N}(\tilde{x};x,\beta^2)$. Then, from (\ref{eq:pdf_condition_x}), after some algebra, it can be computed that the posterior distribution $p(  {{x}} \mid \tilde{{x}})$ is 
\begin{align}
    p(  {{x}} \mid \tilde{{x}}) = \mathcal{N}(\tilde{x};m_\textrm{exact},v_\textrm{exact})
\end{align}
where 
\begin{align}
    m_\textrm{exact} = \frac{\sigma^2_0}{\sigma^2_0 + \beta^2} \tilde{x}, \;   v_\textrm{exact} = \frac{\sigma^2_0 \beta^2}{\sigma^2_0 + \beta^2}. 
    \label{exact-mean-variance}
\end{align}
Under the Assumption \ref{uninformative-assumption}, i.e., $p(  {{x}} \mid \tilde{{x}}) \propto p( \tilde{{x}} \mid  {{x}} ) $, we obtain an approximation of $p(  {{x}} \mid \tilde{{x}}) $ as follows
\begin{align}
    p(  {{x}} \mid \tilde{{x}}) \simeq  \tilde{p}(  {{x}} \mid \tilde{{x}}) =  \mathcal{N}({x};m_\textrm{pseudo},v_\textrm{pseudo}),  
\end{align}
where 
\begin{align}
    m_\textrm{pseudo} =  \tilde{x}, \;   v_\textrm{pseudo} = \beta^2. \label{pseudo-mean-variance}
\end{align}
By comparing the exact result (\ref{exact-mean-variance}) and approximation result (\ref{pseudo-mean-variance}), it can be easily seen that for a fixed $\sigma^2_0>0$, as $\beta\to 0$, we have $m_\textrm{pseudo} \to m_\textrm{post}$ and  $v_\textrm{pseudo} \to v_\textrm{post}$. In Figure \ref{fig:uninformative-toy}, similar to NCSN/NCSNv2, we anneal $\beta$ as ${\beta}_t = {\beta}_{\rm{max}} (\frac{{\beta}_{\rm{min}}}{{\beta}_{\rm{max}}})^{\frac{t-1}{T-1}}$ geometrically and compare $ m_\textrm{pseudo},  v_\textrm{pseudo}$ with $m_\textrm{exact}, v_\textrm{exact}$ as $t$ increase from $1$ to $T$. It can be seen in Figure \ref{fig:uninformative-toy} that the approximated values $ m_\textrm{pseudo},  v_\textrm{pseudo}$, especially the variance $v_\textrm{pseudo}$,  approach to the exact values $m_\textrm{exact}, v_\textrm{exact}$ very quickly, verifying the effectiveness of the Assumption \ref{uninformative-assumption} under this toy example.

\section{Verification of Assumption \ref{row-assumption}}
\label{appendix:gaussian-A-check}

Before verifying it in the i.i.d. Gaussian case, we first briefly add a comment on the general case when ${\bf{AA}}^T$ is not an exact diagonal matrix. 
As demonstrated later in Appendix \ref{appendix:proof-theorem1}, the Assumption \ref{row-assumption} is introduced to ensure that the covariance matrix $\sigma^2{\bf{I}} + \beta_t^2 {{\bf{AA}}^T}$ of Gaussian noise is diagonal so that we can obtain a closed-form solution for the likelihood score in the quantized case,  which is otherwise intractable. If

Suppose that the elements of $\bf{A}$ follow i.i.d. Gaussian, i.e., $A_{ij}\sim \mathcal{N}(0,\sigma^2)$. Next, we investigate the elements of the matrix ${\bf{C}} = {\bf{AA}}^T = \{C_{ij}\}, i,j = 1...M$. 

Regarding the diagonal elements $C_{ii}$, by definition, it reads as 
\begin{align}
    C_{ii} = \sum_{n=1...N} A^2_{in}, \; i = 1...M.  
\end{align}
As $C_{ii}$ is the sum of square of $N$ i.i.d. Gaussian random variables $A_{ij},\; j=1...N$, $C_{ii}$ follows a Gamma distribution, i.e.,  $\Gamma\big(\frac{N}{2}, 2\sigma^2\big)$. The mean and variance of $C_{ii}$ can be computed as $N\sigma^2$ and $2N\sigma^4$. 

Regarding the off-diagonal elements $C_{ij}, i\neq j$, by definition, it reads as 
\begin{align}
    C_{ij} = \sum_{n=1...N} A_{in}A_{jn}, \;  i,j= 1...M, i\neq j.  
\end{align}
As $A_{in}$ and $A_{jn}$ are independent Gaussian for $i\neq j$, it can be computed that the mean and variance of $C_{ij}$ are $0$ and $N\sigma^4$, respectively. When $\sigma^2 = 1/M$ and $M = \alpha N$, where $\alpha>0$ is the constant measurement ratio, the variance of  $C_{ij} $ is $\frac{1}{\alpha^2 N} \to 0$ as $N\to \infty$. As a result, all the off-diagonal elements of ${\bf{AA}}^T$ will tend to zero as $N\to \infty$. From another perspective, in Figure \ref{fig:diagonal}, we compute the ratio between the average magnitude of off-diagonal elements $C_{ij}$ and the diagonal elements $C_{ii}$ when $M=\alpha N$ with  $\alpha=0.5$. It can be seen that as $N$ increases, the magnitude of the off-diagonal elements becomes negligible. As a result, when $\bf{A}$ is i.i.d.  Gaussian, the matrix ${\bf{AA}}^T$ can be well approximated as a diagonal matrix in the  high-dimensional case. 

\begin{figure}[!h]
\centering  
\includegraphics[width=0.6\textwidth]{./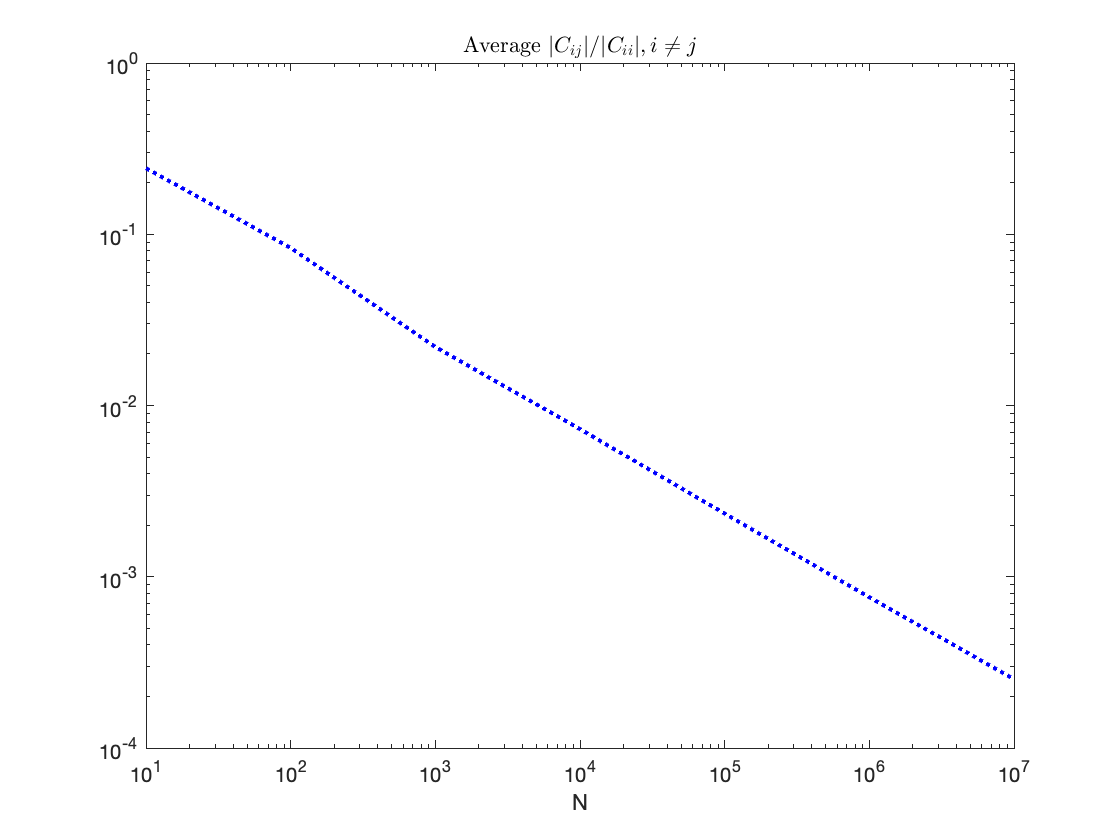}
\caption{The  ratio between the average magnitude of the  diagonal elements $C_{ii}$ and off-diagonal elements $C_{ij}$ when $M=\alpha N$ with  $\alpha=0.5$. }
\label{fig:diagonal}
\end{figure}

\section{Proof of Theorem \ref{theorem-noise-likelohood-score}}
\label{appendix:proof-theorem1}
\begin{proof}
Let us denote $\bf{z = Ax}$. For each noise scale $\beta_t>0$, under the Assumption \ref{uninformative-assumption}, we obtain 
\begin{align}
    p_{\beta_t}({\bf{x}} \mid  \tilde{\bf{x}})  & \propto  p_{\beta_t}(\tilde{\bf{x}} \mid {\bf{x}}) \nonumber  \\
    & \sim \mathcal{N}({\bf{x}}; \tilde{\bf{x}}, \beta_t^2 {\bf{I}})
\end{align}
then we can equivalently write  
\begin{align}
   {\bf{x}}   =  \tilde{\bf{x}} + \beta_t {\bf{w}}, 
\end{align}
where $\bf{w}\sim \mathcal{N}(\bf{0,I})$. As a result, ${\bf{z = Ax}} = {\bf{A}}(\tilde{\bf{x}} + \beta_t \bf{w}) =  {\bf{A}}\tilde{\bf{x}} + \beta_t {\bf{A}}\bf{w}$. Then, from (\ref{quantized model}), we obtain 
\begin{align}
    {\bf{y}} = \mathsf{Q}\left({\bf{A}}\tilde{\bf{x}} + \tilde{\bf{n}}  \right), \label{quantized-model-x-tilde}
\end{align}
where $\tilde{\bf{n}} ={\bf{n}} + \beta_t {\bf{A}}\bf{w}$. Since ${\bf{n}} \sim \mathcal{N}({\bf{0}},\sigma^2\bf{I})$ and ${\bf{w}} \sim \mathcal{N}(\bf{0, I})$ and are independent to each other, it can be concluded that $\tilde{\bf{n}} $ is also Gaussian with mean zero and covariance $\sigma^2{\bf{I}} + \beta_t^2 {{\bf{AA}}^T}$, i.e., $\tilde{{\bf{n}}}\sim \mathcal{N}(\tilde{{\bf{n}}}; {\bf{0}},\sigma^2{\bf{I}} + \beta_t^2 {{\bf{AA}}^T})$. 

Subsequently, under the Assumption \ref{row-assumption}, i.e.,  ${{\bf{AA}}^T}$ is a diagonal matrix, each element $\tilde{n}_m \sim  \mathcal{N}(\tilde{n}_m;0, \sigma^2+ \beta_t^2\left\Vert {\bf{a}}_{m}^T\right\Vert _{2}^{2})$ of $\tilde{\bf{n}}$ will be independent to each other and thus from (\ref{quantized-model-x-tilde}) we can obtain a closed-form solution for the likelihood distribution (we will refer it as a pseudo-likelihood due to the assumptions used) $p({\bf{y}}| \hat{\bf{z}}={\bf{A}} \tilde{\bf{x}})$ as follows
\begin{align}
    p({\bf{y}}| \hat{\bf{z}} &={\bf{A}} \tilde{\bf{x}}) = \prod_{m=1}^M p\left(y_m \mid \hat{z}_m = {\bf{a}}_m^T{\tilde{\bf{x}}}\right)
\end{align}
where, from the definition of quantizer $\mathsf{Q}$, 
\begin{align}
  p\left(y_m \mid \hat{z}_m = {\bf{a}}_m^T{\tilde{\bf{x}}}\right) 
  &= p\left(l_{y_m} \leq \hat{z}_m + \tilde{n}_m < u_{y_m} \right) \\
  &= \Phi\left(\frac{-\hat{z}_m + u_{y_m}}{\sqrt{\sigma^2+ \beta_t^2\left\Vert {\bf{a}}_{m}^T\right\Vert _{2}^{2}}}\right) - \Phi\left(\frac{-\hat{z}_m + l_{y_m}}{\sqrt{\sigma^2+ \beta_t^2\left\Vert {\bf{a}}_{m}^T\right\Vert _{2}^{2}}}\right) \\
  &= \Phi\left(-\tilde{u}_{y_m}\right) - \Phi\left(-\tilde{l}_{y_m}\right)
\end{align}
where $\Phi{(z)} = \frac{1}{\sqrt{2\pi}}\int_{-\infty}^z e^{-\frac{t^2}{2}} dt$ is the  cumulative distribution function of the standard normal distribution and
\begin{align}
    \tilde{u}_{y_m} = \frac{{\bf{a}}_m^T\tilde{\bf{x}}-u_{y_m}}{\sqrt{\sigma^2+ \beta_t^2\left\Vert {\bf{a}}_{m}^T\right\Vert _{2}^{2}}}, \;     \tilde{l}_{y_m} = \frac{{\bf{a}}_m^T\tilde{\bf{x}}-l_{y_m}}{\sqrt{\sigma^2+ \beta_t^2\left\Vert {\bf{a}}_{m}^T\right\Vert _{2}^{2}}}.
\end{align}
As a result, it can be calculated that the noise-perturbed pseudo-likelihood score $\nabla_{{\tilde{\bf{x}}}} \log p_{\beta_t}({\bf{y}} \mid \tilde{{\bf{x}}})$ for the quantized measurements ${\bf{y}}$ in (\ref{quantized model}) can be computed as
\begin{align}
    \nabla_{{\tilde{\bf{x}}}} \log p_{\beta_t}({\bf{y}} \mid \tilde{{\bf{x}}}) = {\bf{A}}^T {\bf{G}}({\beta_t},{\bf{y}},{\bf{A}},\tilde{\bf{x}})
\end{align}
where ${\bf{G}}({\beta_t},{\bf{y}},{\bf{A}},\tilde{\bf{x}}) = [g_1,g_2,...,g_M]^T \in \mathbb{R}^{M\times 1}$ with each element being 
\begin{align}
 g_m = \frac{\exp{\left(-\frac{\tilde{u}_{y_m}^2}{2}\right)} - \exp{\left(-\frac{\tilde{l}_{y_m}^2}{2}\right)}}{\sqrt{\sigma^2+ \beta_t^2\left\Vert {\bf{a}}_{m}^T\right\Vert _{2}^{2}} \int_{\tilde{l}_{y_m}}^{\tilde{u}_{y_m}}\exp{\left(-\frac{t^2}{2}\right)}dt}, \; m = 1,2,...,M, \label{g-result-appendix}
\end{align}
which completes the proof. 
\end{proof}

\section{Proof of Corollary \ref{noise-likelohood-score-linear}}
\label{appendix:proof-of-Corollary2}
\begin{proof}
Similarly in the proof of Theorem \ref{theorem-noise-likelohood-score}, let us denote $\bf{z = Ax}$. For each noise scale $\beta_t>0$, under the Assumption \ref{uninformative-assumption}, we can equivalently write  
\begin{align}
   {\bf{x}}   =  \tilde{\bf{x}} + \beta_t {\bf{w}}, 
\end{align}
where $\bf{w}\sim \mathcal{N}(\bf{0,I})$. As a result, ${\bf{z = Ax}} = {\bf{A}}(\tilde{\bf{x}} + \beta_t \bf{w}) =  {\bf{A}}\tilde{\bf{x}} + \beta_t {\bf{A}}\bf{w}$. Then, in the case of the linear model, from (\ref{linear model}), we obtain 
\begin{align}
    {\bf{y}} = {\bf{A}}\tilde{\bf{x}} + \tilde{\bf{n}}, \label{quantized-model-x-tilde}
\end{align}
where $\tilde{\bf{n}} ={\bf{n}} + \beta_t {\bf{A}}\bf{w}$. Since ${\bf{n}} \sim \mathcal{N}({\bf{0}},\sigma^2\bf{I})$ and ${\bf{w}} \sim \mathcal{N}(\bf{0, I})$ and are independent to each other, it can be concluded that $\tilde{\bf{n}} $ is also Gaussian with mean zero and covariance $\sigma^2{\bf{I}} + \beta_t^2 {{\bf{AA}}^T}$, i.e., $\tilde{{\bf{n}}}\sim \mathcal{N}(\tilde{{\bf{n}}}; {\bf{0}},\sigma^2{\bf{I}} + \beta_t^2 {{\bf{AA}}^T})$. Therefore, a closed-form solution for the likelihood distribution $p({\bf{y}}| \hat{\bf{z}}={\bf{A}} \tilde{\bf{x}})$ can be obtained as follows
\begin{align}
    p({\bf{y}}| \hat{\bf{z}} &={\bf{A}} \tilde{\bf{x}}) = \frac{\exp{\left(-\frac{1}{2}{\left({\bf{y}}-{\bf{A}} \tilde{\bf{x}}\right)^T \left(\sigma^2{\bf{I}} + \beta_t^2 {{\bf{AA}}^T}\right)^{-1}\left({\bf{y}}-{\bf{A}} \tilde{\bf{x}}\right)}\right)}}{\sqrt{(2\pi)^{M}\det{(\sigma^2{\bf{I}} + \beta_t^2 {{\bf{AA}}^T})}}}. 
\end{align}
As a result, we can readily obtain a closed-form solution for the noise-perturbed pseudo-likelihood score $\nabla_{{\tilde{\bf{x}}}} \log p_{\beta_t}({\bf{y}} \mid \tilde{{\bf{x}}})$ as follows
\begin{align}
    \nabla_{{\tilde{\bf{x}}}} \log p_{\beta_t}({\bf{y}} \mid \tilde{{\bf{x}}}) = {\bf{A}}^T {\left(\sigma^2{\bf{I}}+ \beta_t^2 {\bf{A}}{\bf{A}}^T\right)^{-1}} \left(\bf{y} - \bf{A\tilde{x}}\right).
\end{align}
Furthermore, if ${\bf{A}}{\bf{A}}^T$ is a diagonal matrix, the inverse of matrix  $\left(\sigma^2{\bf{I}}+ \beta_t^2 {\bf{A}}{\bf{A}}^T\right)$ become trivial since it is a diagonal matrix with the $m$-th diagonal element being $\sigma^2+ \beta_t^2\left\Vert {\bf{a}}_{m}^T\right\Vert _{2}^{2}$. After some simple algebra, we can obtain the equivalent representation in (\ref{likelihood-score-linear}), which completes the proof. 
\end{proof}

\section{Comparison with  ALD in \citet{jalal2021robust} in the linear case}
\label{appendix:compare-with-ALD}
\begin{figure}
\centering  
\hspace*{\fill}%
\subfigure[Cosine Similarity]{\includegraphics[width=65mm]{./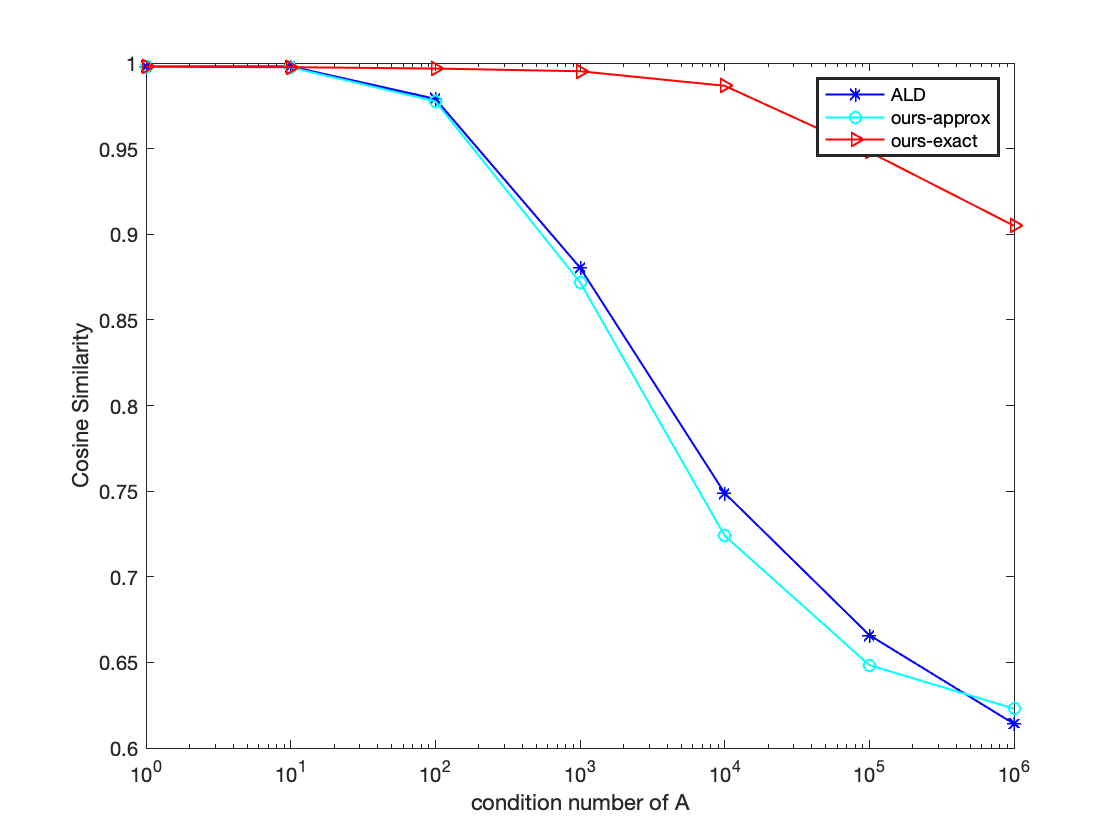}} \hfill
\subfigure[PSNR]{\includegraphics[width=65mm]{./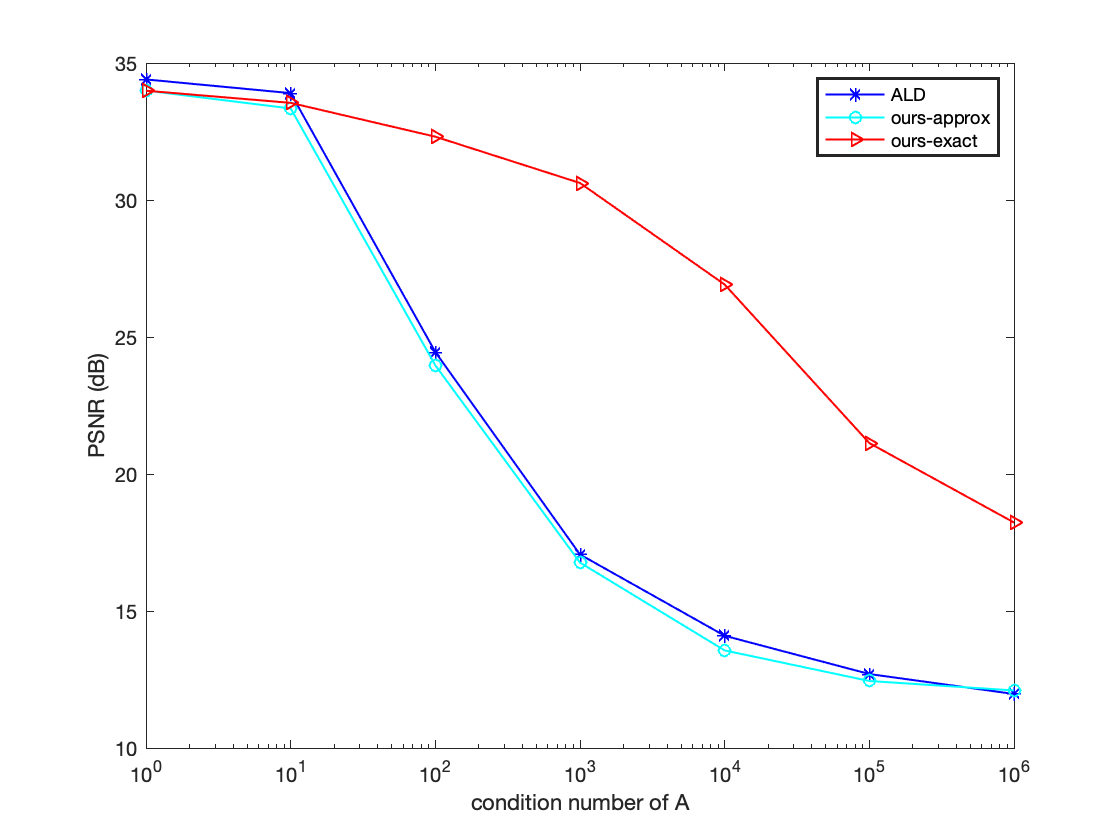}}
\hspace*{\fill}%
\caption{Averaged Cosine Similarity and PSNR of reconstructed MNIST images of ALD in \citet{jalal2021robust} and ours (with formula (\ref{likelihood-score-linear}) and (\ref{likelihood-score-linear-full}), respectively) when $M=200,\sigma=0.1$ for the different condition number of the matrix $\bf{A}$ is $\textrm{cond}({\bf{A}})=1000$. It can be seen that our method with  (\ref{likelihood-score-linear-full}) significantly outperforms ALD in \citet{jalal2021robust} at high condition number while performing similarly at low condition number.   Ours with diagonal approximation (\ref{likelihood-score-linear}) is about the same as ALD as expected.}
\label{mnist-results_vs_kappa}
\end{figure}

\begin{figure}[h]
\centering  
\hspace*{\fill}%
\subfigure[Truth]{\includegraphics[width=32mm]{./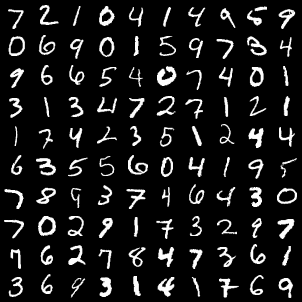}} \hfill
\subfigure[ALD \small{\citep{jalal2021robust}}]{\includegraphics[width=32mm]{./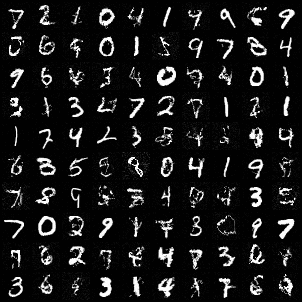}} \hfill
\subfigure[Ours (approx (\ref{likelihood-score-linear}))]{\includegraphics[width=32mm]{./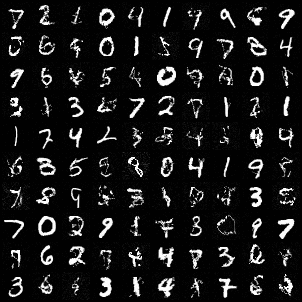}} \hfill
\subfigure[Ours (with (\ref{likelihood-score-linear-full})) ]{\includegraphics[width=32mm]{./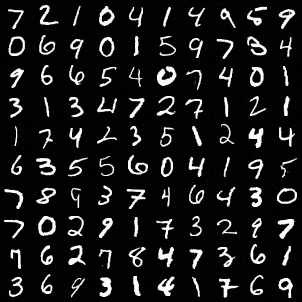}}
\hspace*{\fill}%
\caption{Tyical recovered MNIST images of ALD in \citet{jalal2021robust} and ours (with formula (\ref{likelihood-score-linear}) and (\ref{likelihood-score-linear-full}), respectively) when $M=200,\sigma=0.05$ and the condition number of matrix $\bf{A}$ is $\textrm{cond}({\bf{A}})=1000$. It can be seen that our method with  (\ref{likelihood-score-linear-full}) significantly outperforms ALD in \citet{jalal2021robust}, which performs about the same as ours with diagonal approximation (\ref{likelihood-score-linear}).} 
\label{mnist-samples_kappa_at_1e3}
\end{figure}

As shown in Corollary \ref{noise-likelohood-score-linear}, in the special case without quantization, our results in Theorem \ref{theorem-noise-likelohood-score} can be reduced to a form similar to the ALD in \citet{jalal2021robust}. However, there are several different important differences. First, our results are derived in a principled way as noise-perturbed pseudo-likelihood score and admit closed-form solutions, while the results in \citet{jalal2021robust} are obtained heuristically by adding an additional hyper-parameter (and thus needs fine-tuning). Second, the results in \citet{jalal2021robust} are similar to an approximate version (\ref{likelihood-score-linear}) of ours, which holds only when ${\bf{AA}}^T$ is a diagonal matrix. For general matrices $\bf{A}$, we have a closed-form approximation (\ref{likelihood-score-linear-full}). We compare our results with ALD  \citep{jalal2021robust} and the results are shown in Figure \ref{mnist-results_vs_kappa} and Figure \ref{mnist-samples_kappa_at_1e3}. It can be seen that when at low condition number of $\bf{A}$ when ${\bf{AA}}^T$ is approximately a diagonal matrix,  our with diagonal approximation (\ref{likelihood-score-linear}) performs similarly as (\ref{likelihood-score-linear-full}), both of which are similar to ALD \citep{jalal2021robust}, as expected.  However, when the condition number of $\bf{A}$ is large so that  ${\bf{AA}}^T$ is far from a diagonal matrix, our method with the (\ref{likelihood-score-linear-full}) significantly outperforms ALD in \citet{jalal2021robust}.

\section{Detailed Experimental Settings}
\label{appendix:experiments-setting}
In training NCSNv2 for  MNIST, we used a similar training setup as \citet{song2020improved} for Cifar10 as follows. 

Training: batch-size: 128
  n-epochs: 500000
  n-iters: 300001
  snapshot-freq: 50000
  snapshot-sampling: true
  anneal-power: 2
  log-all-sigmas: false. 

Please refer to  \citet{song2020improved} and associated open-sourced code for details of training. For Cifar-10,  CelebA, and FFHQ,  we directly use the pre-trained models  available in this \href{https://drive.google.com/drive/folders/1217uhIvLg9ZrYNKOR3XTRFSurt4miQrd}{Link}.

When performing posterior sampling using the QCS-SGM in \ref{posterior-sampling-algorithm}, for simplicity, we set a constant value $\epsilon = 0.0002$ for all quantized measurements (e.g., 1-bit, 2-bit, 3-bit) for MNIST, Cifar10 and CelebA. For the high-resolution FFHQ $256\times 256$, we set $\epsilon = 0.00005$ for 1-bit and $\epsilon = 0.00002$ for 2-bit and 3-bit case, respectively. For all linear measurements for MNIST, Cifar10, and CelebA, we set $\epsilon = 0.00002$. It is believed that some improvement can be achieved with further fine-tuning of $\epsilon$ for different scenarios. For MNIST and Cifar-10, we set $\beta_1 = 50, \beta_T = 0.01, T = 232$; for CelebA, we set $\beta_1 = 90, \beta_T = 0.01, T = 500$; for FFHQ, we set $\beta_1 = 348, \beta_T = 0.01, T = 2311$ which are the same as \citet{song2020improved}. The number of steps $K$ in QCS-SGM for each noise scale is set to be $K=5$ in all experiments. For more details, please refer to the submitted code.

\section{Multiple Samples and Uncertainty Estimates}
\label{apendix:multiple-samples}
As one kind of posterior sampling method, QCS-SGM can yield multiple samples with different random initialization so that we can easily obtain confidence intervals or uncertainty estimates of the reconstructed results. For example, typical samples as well as mean and std are shown in Figure \ref{fig-multiple-samples} for MNIST and CelebA in the 1-bit case.

\begin{figure}
\centering  
\hspace*{\fill}%
\subfigure[MNIST, $M=200, \sigma=0.05$]{\includegraphics[width=0.48\textwidth]{./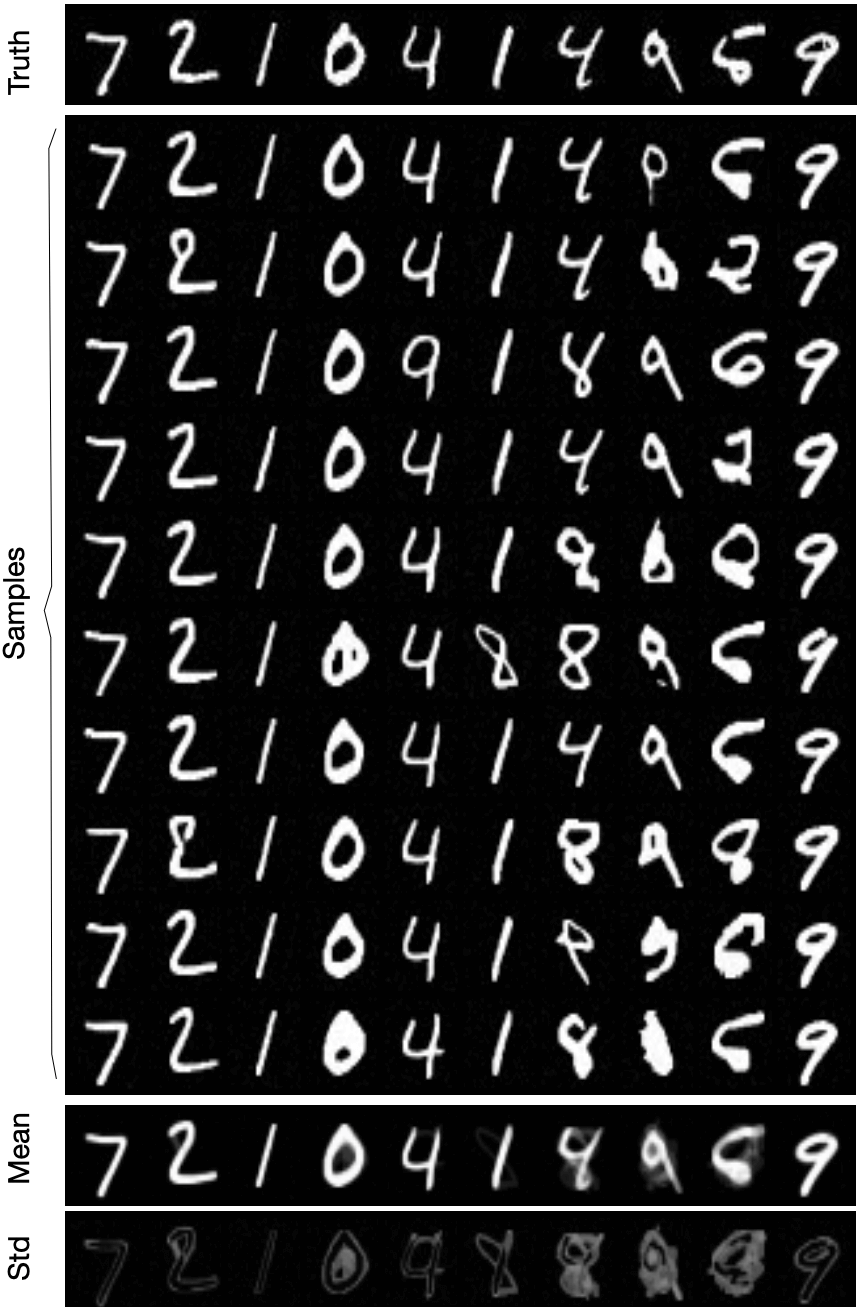}} \hfill
\subfigure[CelebA, $M=4000, \sigma=0.001$]{\includegraphics[width=0.48\textwidth]{./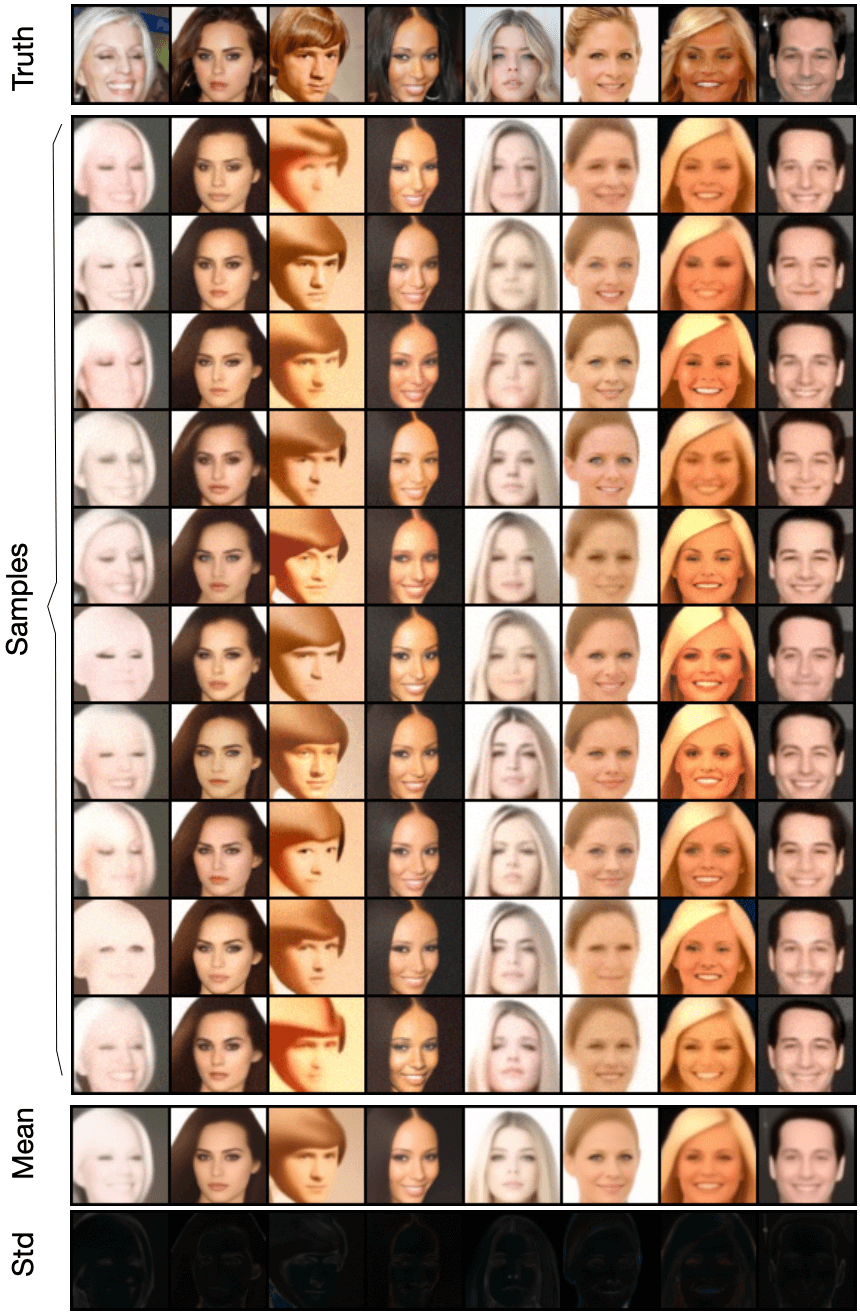}}  
\hspace*{\fill}%
\hspace*{\fill}%
\hspace*{\fill}%
\hspace*{\fill}%
\subfigure[MNIST, $M=400, \sigma=0.05$]{\includegraphics[width=0.48\textwidth]{./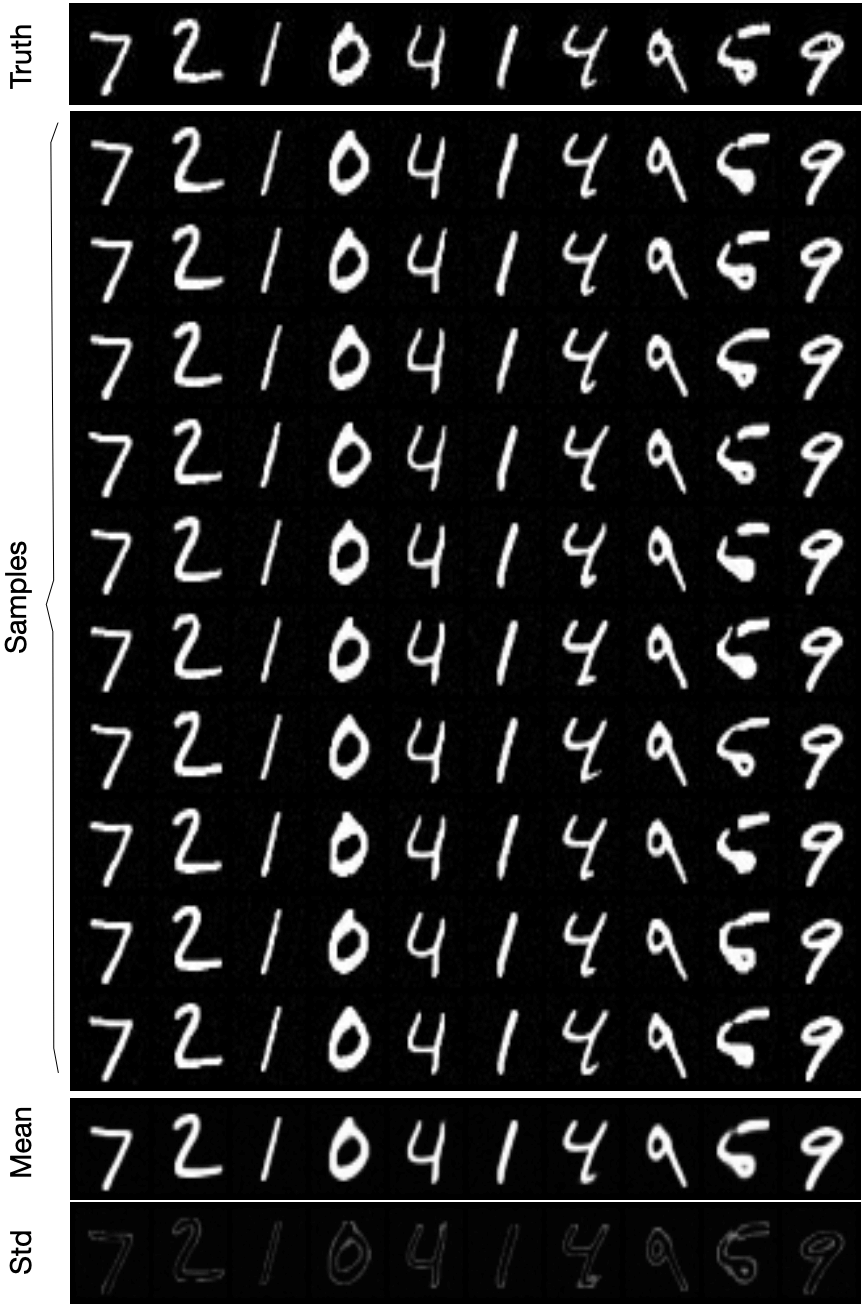}}  \hfill
\hspace*{\fill}%
\hspace*{\fill}%
\hspace*{\fill}%
\subfigure[CelebA, $M=10000, \sigma=0.001$]{\includegraphics[width=0.48\textwidth]{./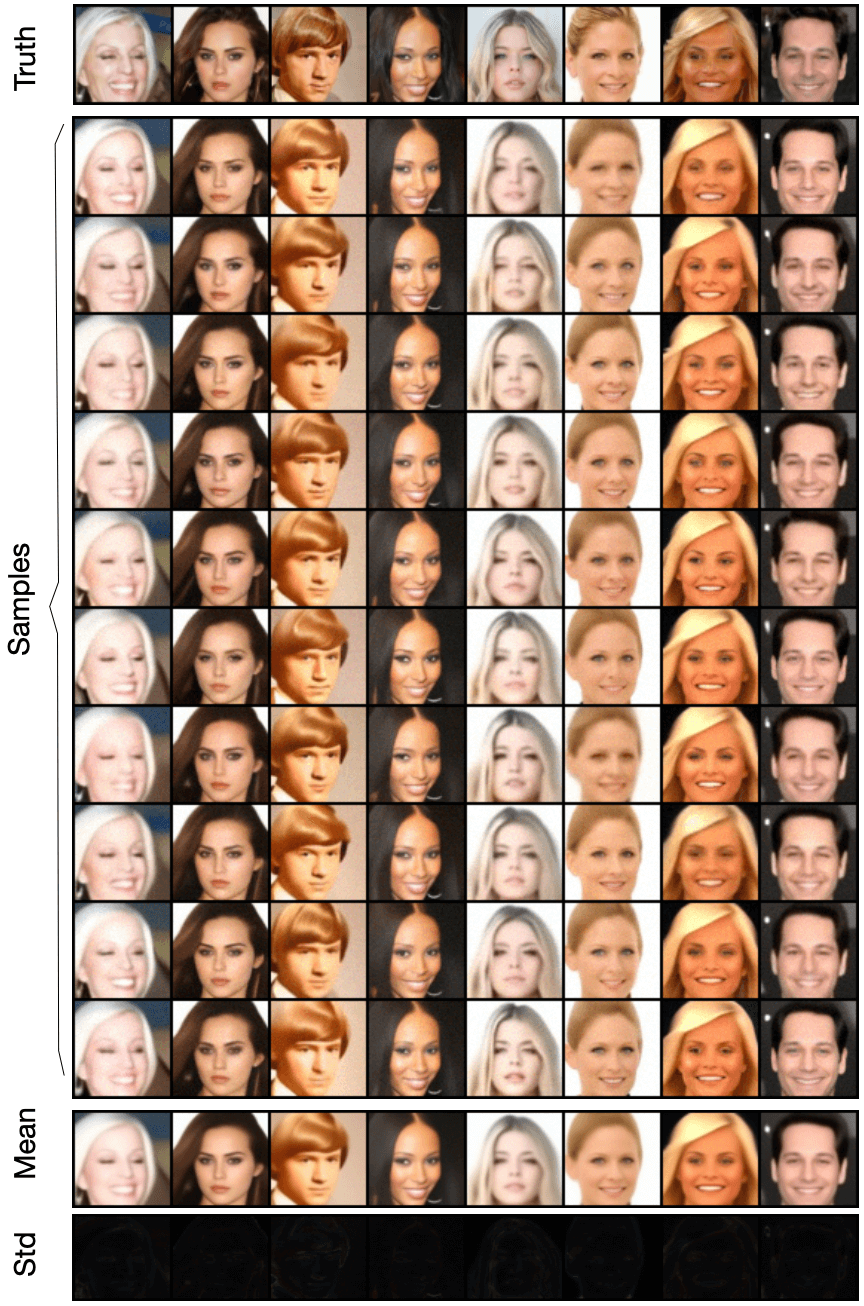}}
\hspace*{\fill}%
\caption{Multiple samples of QCS-SGM  on MNIST ($M=200,400,\sigma=0.05$) and CelebA datasets ($M=4000,1000,\sigma=0.001$) from 1-bit measurements. The mean and std of the samples are also shown.}
\label{fig-multiple-samples}
\end{figure}

\section{Additional Results}
\label{apendix:additional-results}
Some additional results are shown in this section. 

Figure \ref{mnist-celeba-compare-appendix-largeM} and Figure 
\ref{cifar10-celeba-different-bits-appendix} show results with relatively large value of $M$ in the same setting as  Figure \ref{mnist-celeba-compare}  and  Figure 
\ref{cifar10-celeba-different-bits}, respectively.

\begin{figure}[htbp]
\centering  
\hspace*{\fill}%
\subfigure[MNIST, $M=400, \sigma=0.05$]{\label{mnist-celeba-compare:a}\includegraphics[width=65mm]{./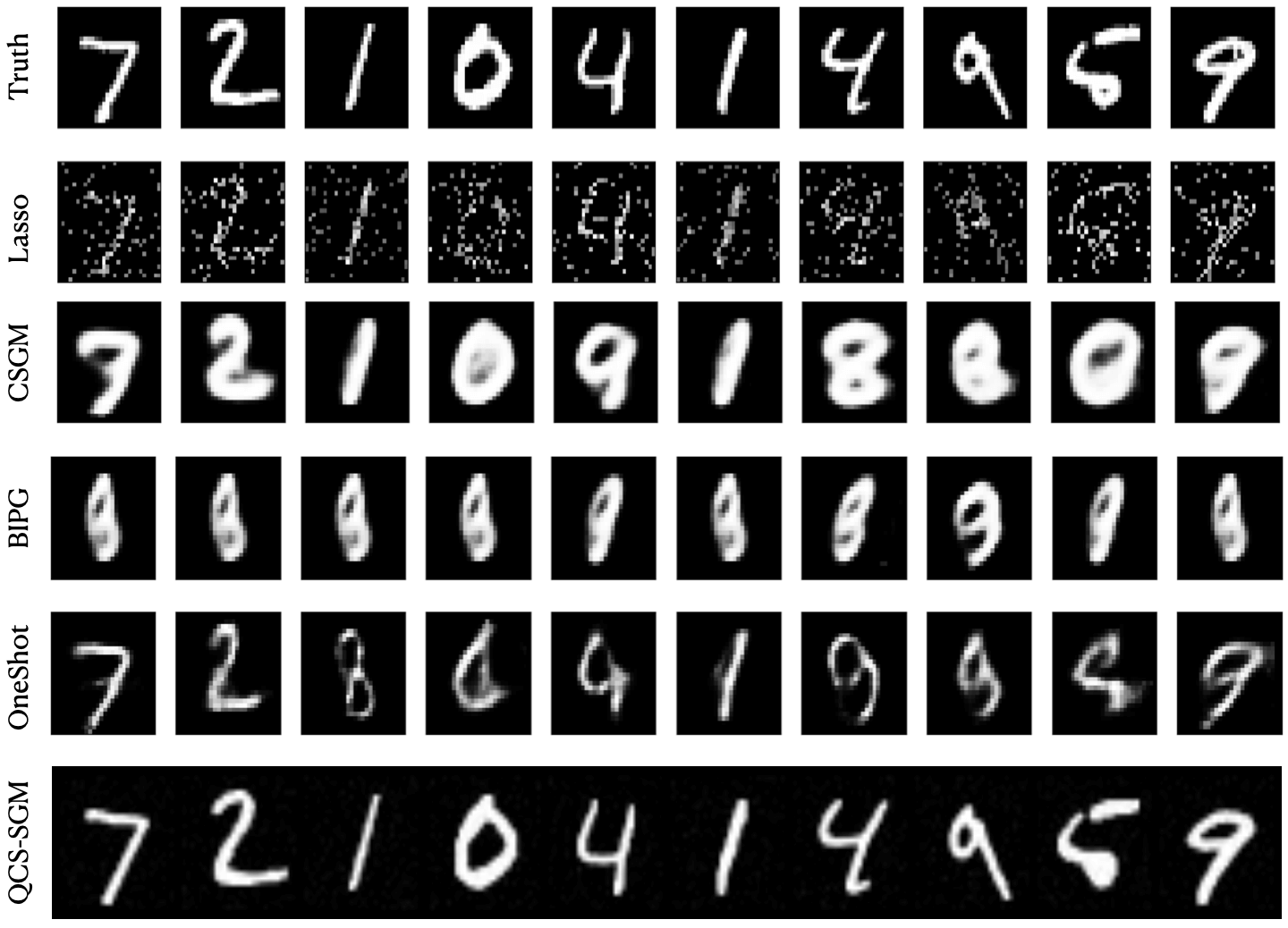}} \hfill 
\subfigure[CelebA, $M=10000, \sigma=0.001$]{\label{mnist-celeba-compare:b}\includegraphics[width=65mm]{./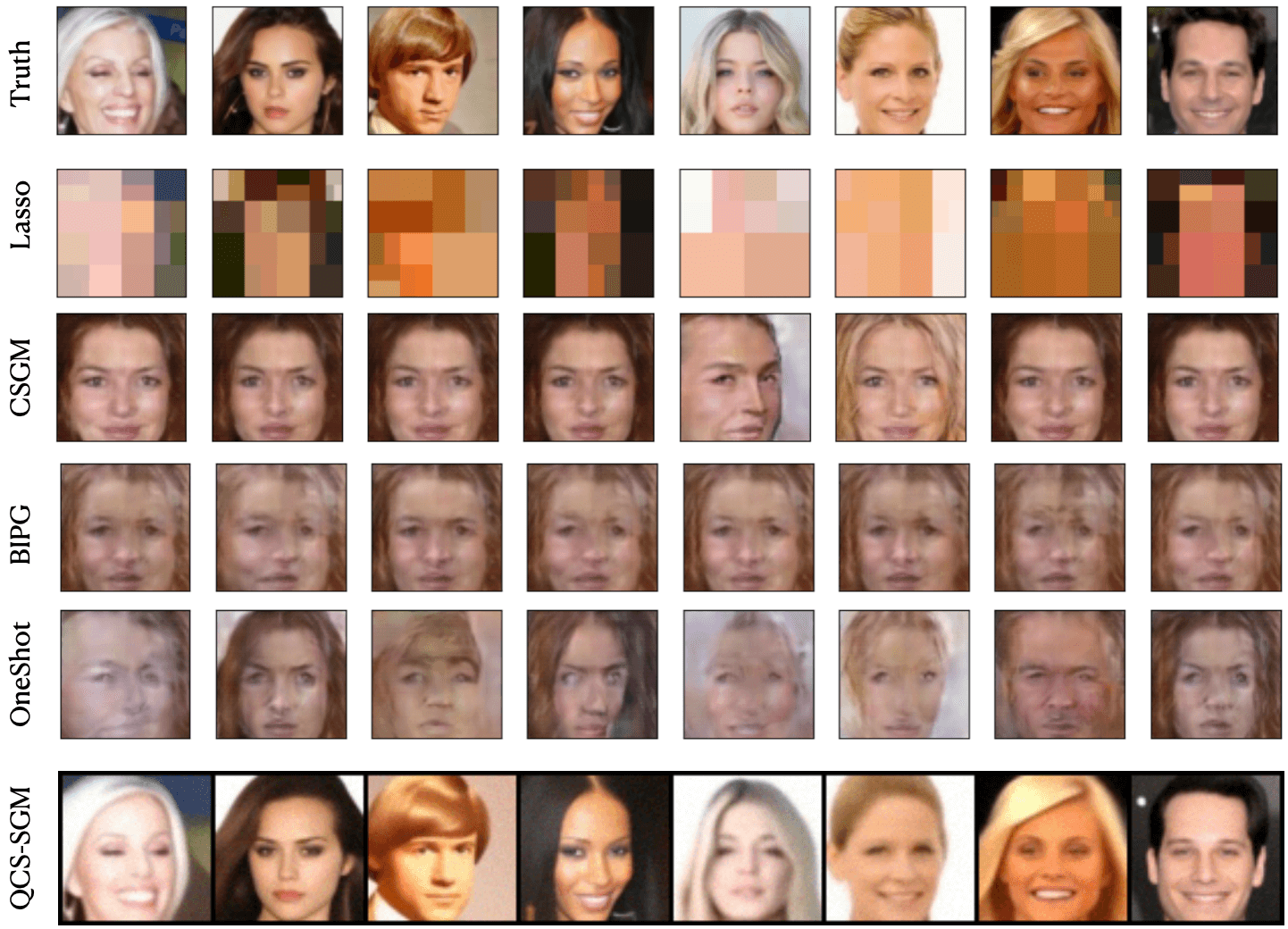}} 
\hspace*{\fill}%
\caption{Typical reconstructed images from 1-bit measurements on MNIST ($M=400$) and CelebA ($M=10000$).}
\label{mnist-celeba-compare-appendix-largeM}
\end{figure}

\begin{figure}[h]
\centering  
\hspace*{\fill}%
\subfigure[Cifar-10, $M=4000$, $\sigma=0.001$]{\includegraphics[width=65mm]{./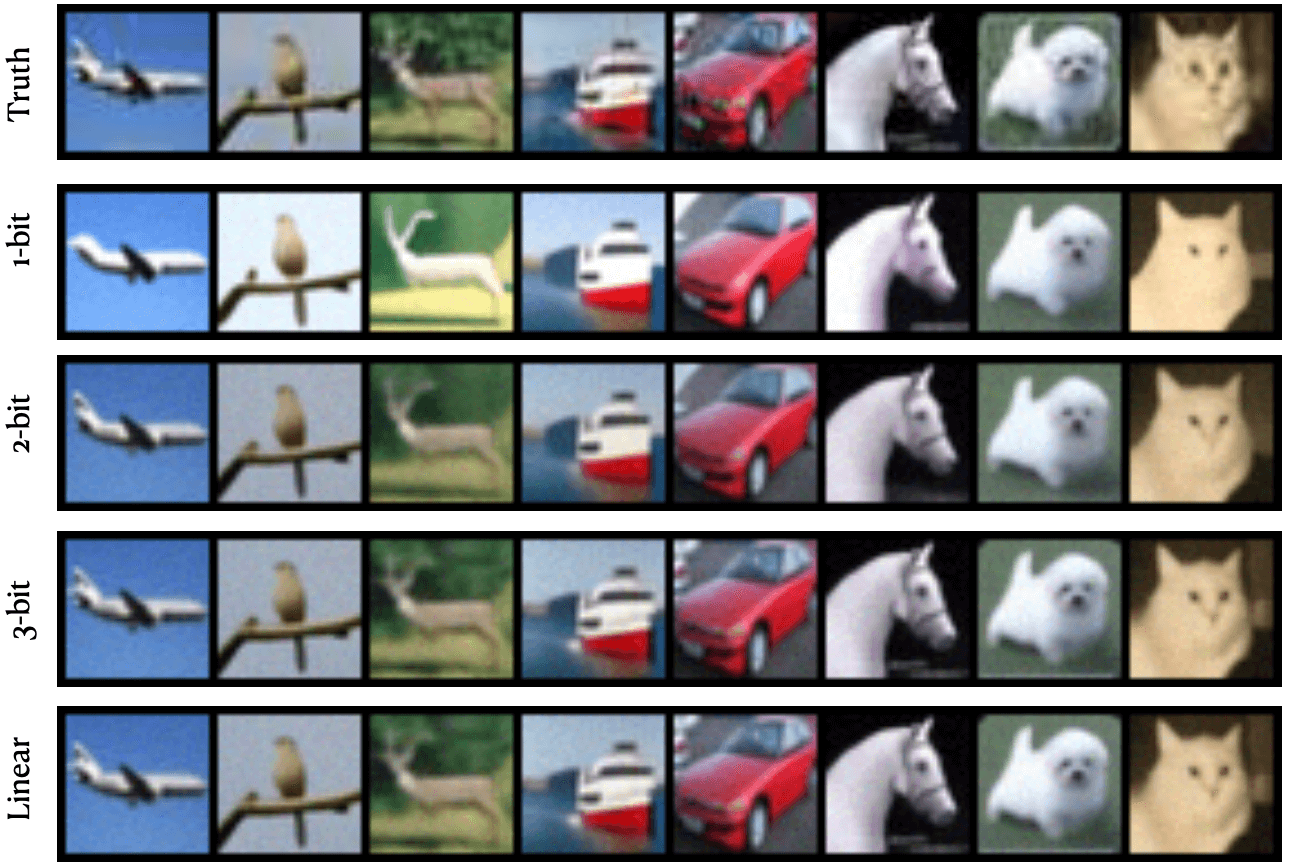}} \hfill
\subfigure[CelebA, $M=16000$ $\sigma=0.001$]{\includegraphics[width=65mm]{./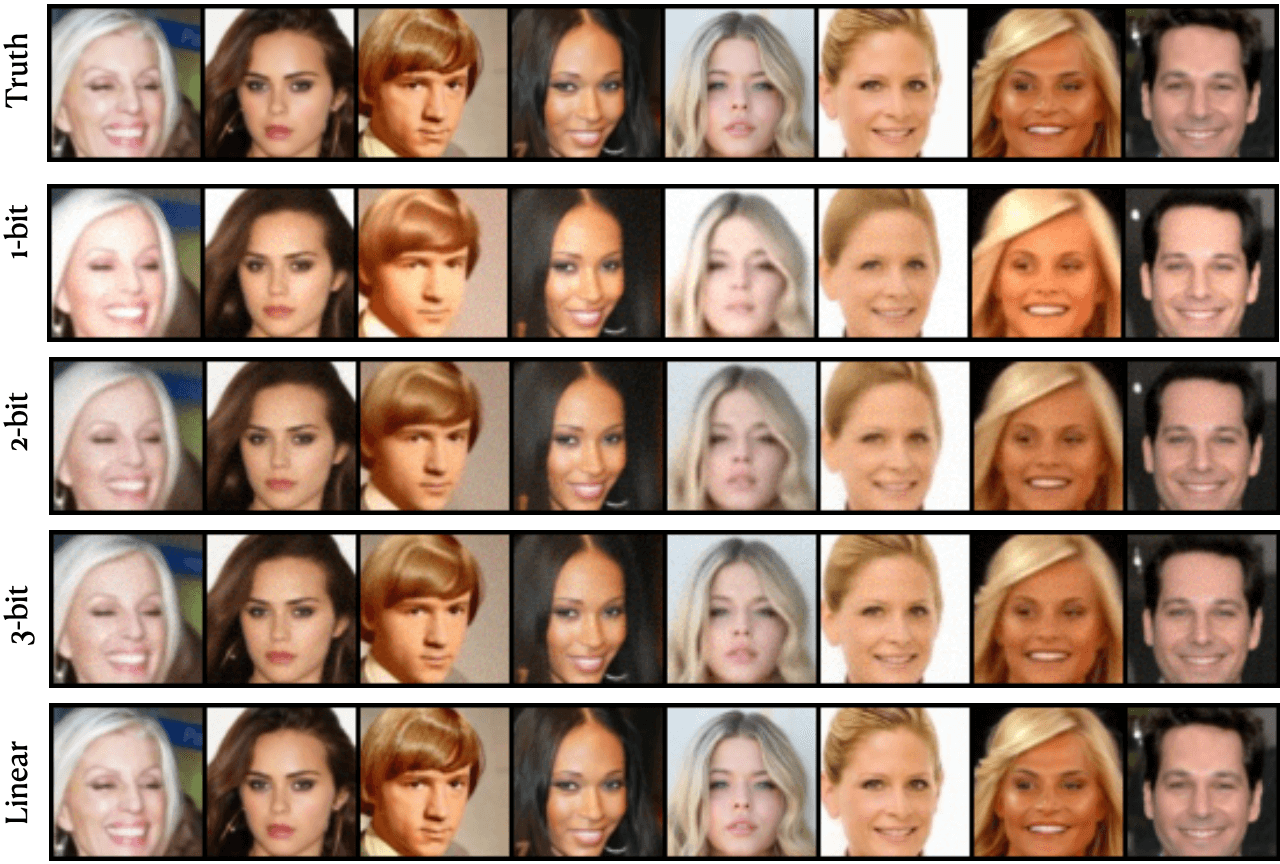}}
\hspace*{\fill}%
\caption{Results of QCS-SGM  for Cifar-10 ($M=4000$) and CelebA  ($M=16000$) images under different quantization bits.}
\label{cifar10-celeba-different-bits-appendix}
\end{figure}

Figure \ref{fig:metrics-multi-bits} shows the quantitative results of QCS-SGM for different quantization bits.
\begin{figure}[htbp]
\label{different_bits}
  \centering
  \setlength{\unitlength}{\textwidth} 
    \begin{picture}(1,0.25)%in case your image is twice as wide as it is high %(otherwise change the 0.5 to your file's height/width).
       \put(-0.1,0){\includegraphics[width=1.2\unitlength]{./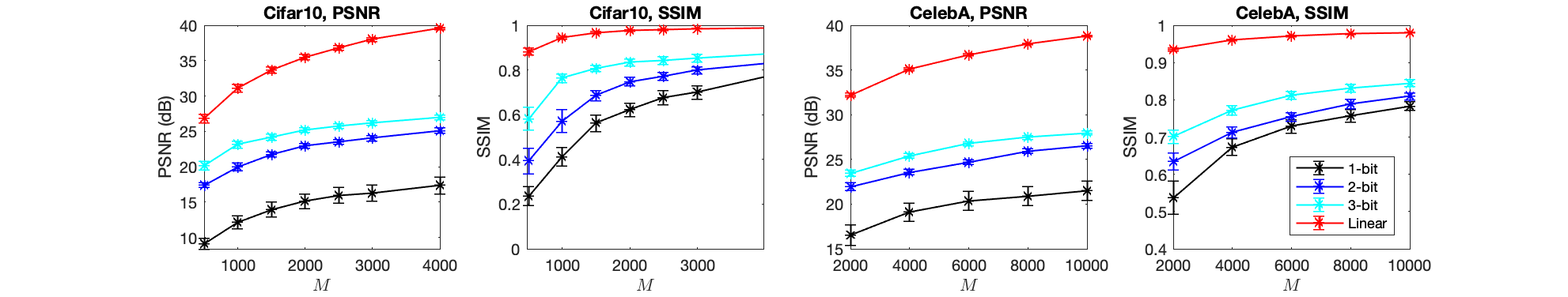}}
    \end{picture}
\caption{Quantitative results of QCS-SGM for different quantization bits, $\sigma = 0.001$.}
\label{fig:metrics-multi-bits}
\end{figure}

Figure \ref{fig-fixed-budget-cifar10} and Figure \ref{fig-fixed-budget-celeba} show the reconstructed images of QCS-SGM for Cifar10  and CelebA in the  fixed budget case of  $Q\times M=3072$ and $Q\times M=12288$, respectively.

\begin{figure}
\centering  
\hspace*{\fill}%
\subfigure[Ground Truth]{\includegraphics[width=0.48\textwidth]{./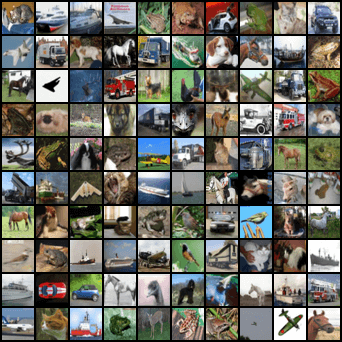}} \hfill
\subfigure[1-bit, $M=3072$]{\includegraphics[width=0.48\textwidth]{./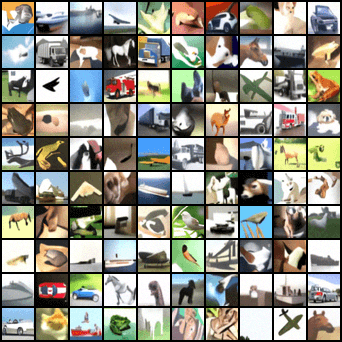}}  
\hspace*{\fill}%
\hspace*{\fill}%
\hspace*{\fill}%
\hspace*{\fill}%
\subfigure[2-bit, $M=1536$]{\includegraphics[width=0.48\textwidth]{./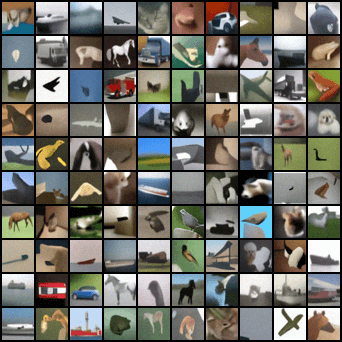}}  \hfill
\hspace*{\fill}%
\hspace*{\fill}%
\hspace*{\fill}%
\subfigure[3-bit, $M=1024$]{\includegraphics[width=0.48\textwidth]{./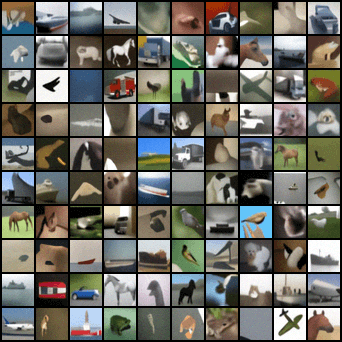}}
\hspace*{\fill}%
\caption{Reconstructed images of QCS-SGM for Cifar10 in the  fixed budget case ($Q\times M=3072$).}
\label{fig-fixed-budget-cifar10}
\end{figure}

\begin{figure}
\centering  
\hspace*{\fill}%
\subfigure[Ground Truth]{\includegraphics[width=0.48\textwidth]{./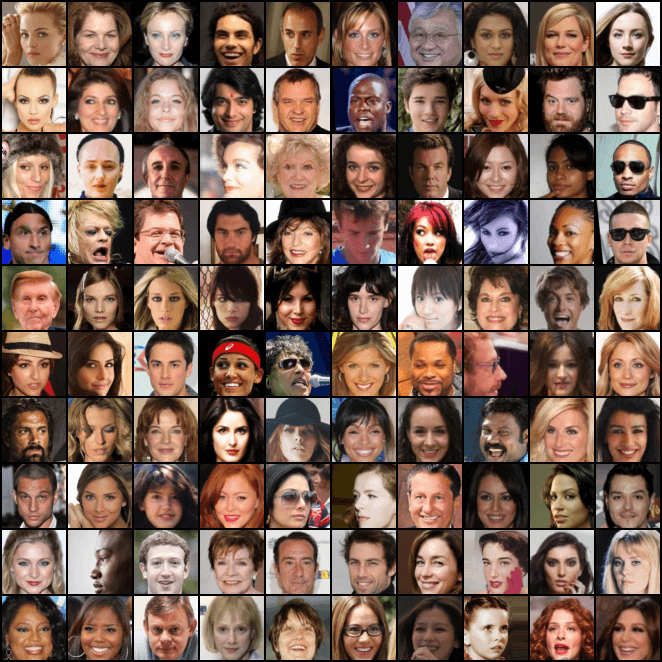}} \hfill
\subfigure[1-bit, $M=12288$]{\includegraphics[width=0.48\textwidth]{./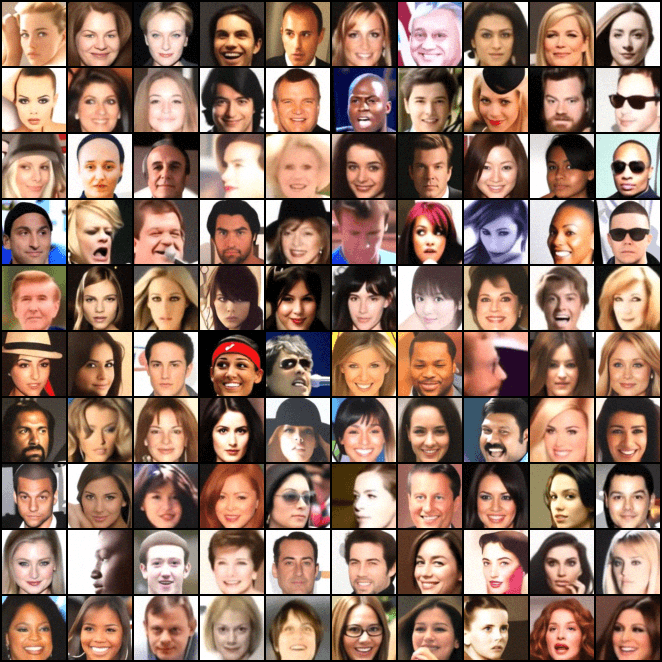}}  
\hspace*{\fill}%
\hspace*{\fill}%
\hspace*{\fill}%
\hspace*{\fill}%
\subfigure[2-bit, $M=6144$]{\includegraphics[width=0.48\textwidth]{./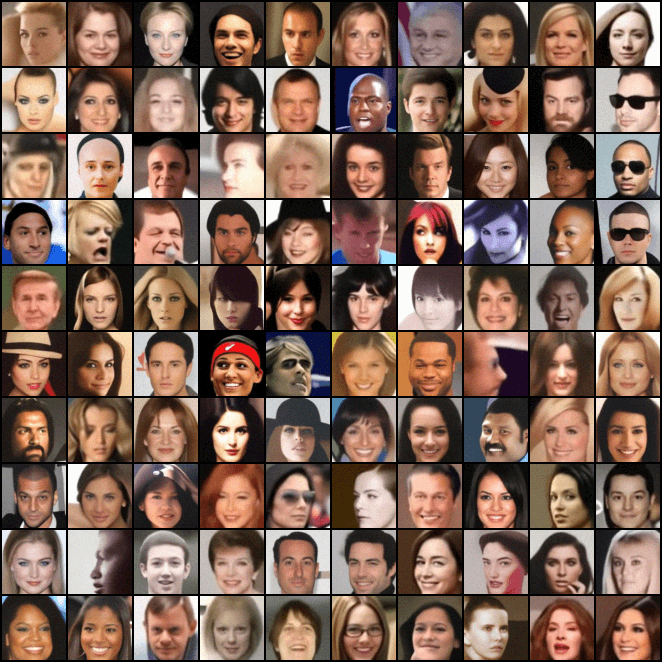}}  \hfill
\hspace*{\fill}%
\hspace*{\fill}%
\hspace*{\fill}%
\subfigure[3-bit, $M=4096$]{\includegraphics[width=0.48\textwidth]{./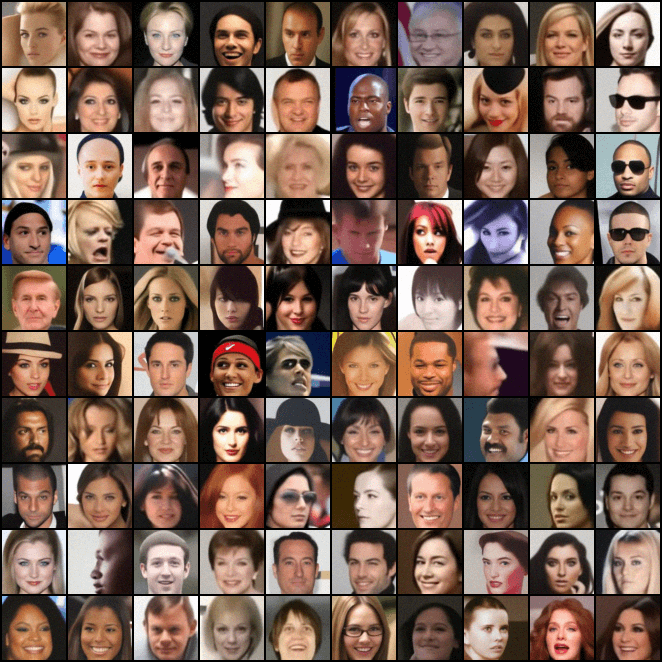}}
\hspace*{\fill}%
\caption{Reconstructed images of QCS-SGM for CelebA in the  fixed budget case ($Q\times M=12288$) in the same setting as Table \ref{table:fixed-budget}.}
\label{fig-fixed-budget-celeba}
\end{figure}

\section{Comparison with Neumann Networks \cite{gilton2019neumann}}
In the case of no quantization, we compare our method with one popular method called Neumann Networks \citep{gilton2019neumann}. The results for Cifar10 are shown in Table \ref{table:Neumann-Networks}. It can be seen that our method outperforms Neumann Networks by a large margin. We do not show quantization comparisons for quantized CS since Neumann Networks does not support it. 
\begin{table*}
\small % reduce font size by about 10%
\centering
\setlength{\tabcolsep}{0pt}
\begin{adjustbox}{width=\linewidth,center}
\begin{tabular*}{\linewidth}{
  @{\extracolsep{\fill}}
  ccccccccc
}
\toprule
& \mc{2}{c}{\textbf{Cifar10 CS2}} & \mc{2}{c}{\textbf{Cifar10 CS8} } & \mc{2}{c}{\textbf{CelebA CS2}} & \mc{2}{c}{\textbf{CelebA CS8}} \\ 
\cmidrule{2-3} \cmidrule{4-5} \cmidrule{6-7} \cmidrule{8-9} 
\textbf{Method} & \footnotesize{SSIM $\uparrow$} & \footnotesize{PSNR $\uparrow$} & \footnotesize{SSIM $\uparrow$}
& \footnotesize{PSNR $\uparrow$} & \footnotesize{SSIM $\uparrow$} & \footnotesize{PSNR $\uparrow$} 
& \footnotesize{SSIM $\uparrow$} & \footnotesize{PSNR $\uparrow$} \\

\toprule
QCS-SGM \small{(ours)} &  \textbf{0.9712} &	\textbf{34.33} &	\textbf{0.8634}	& \textbf{26.20} &	\textbf{0.9694}& \textbf{36.80}&	\textbf{0.9156} &	\textbf{31.13} \\
\midrule
Neumann networks \citep{gilton2019neumann} & - &	33.83 & - & 25.15 &	- &	35.12 &	- &	28.38 \\
\bottomrule
\end{tabular*}
\end{adjustbox}
\caption{Quantitative comparison (PSNR (dB) and SSIM) of QCS-SGM with the Neumann networks in \citet{gilton2019neumann} for CS in the linear case, i.e., without quantization.  Both Cifar10 $32\times 32$ and  CelebA $64\times 64$ are considered. It can be seen that QCS-SGM outperforms the Neumann networks \citep{gilton2019neumann} in all cases. As in \citet{gilton2019neumann}.  the values reported are the median across a test set of size 256. Reconstructed images of QCS-SGM are shown in Figure \ref{fig:cifar10-cs2} - \ref{fig:celeba-cs8}.} 
\label{table:Neumann-Networks}
\end{table*}

\begin{figure}
\centering  
\includegraphics[width=1\textwidth]{./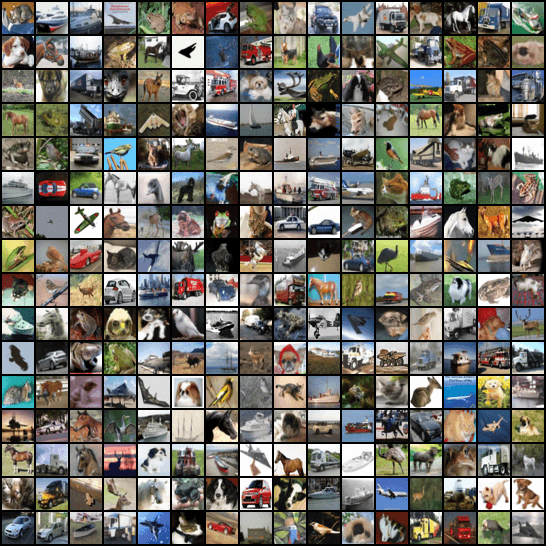}
\caption{Results of QCS-SGM  on Cifar10 for CS2, i.e., $M = 1536, N = 3072$. Noiseless case. }
\label{fig:cifar10-cs2}
\end{figure}

\begin{figure}
\centering  
\includegraphics[width=1\textwidth]{./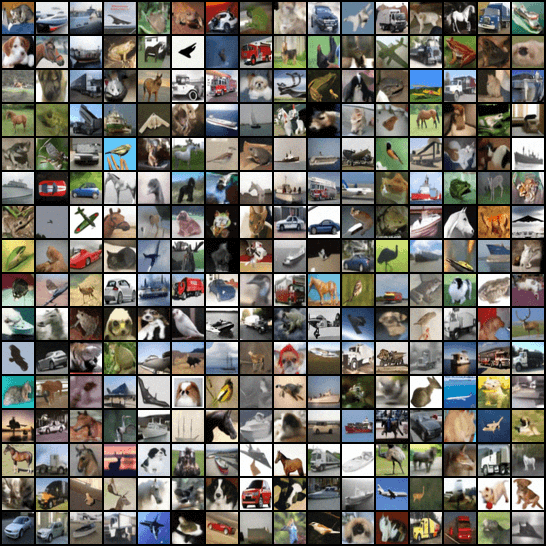}
\caption{Results results of QCS-SGM  on Cifar10 for CS8, i.e., $M = 1536, N = 3072$. Noiseless case. }
\label{fig:cifar10-cs8}
\end{figure}

\begin{figure}
\centering  
\includegraphics[width=1\textwidth]{./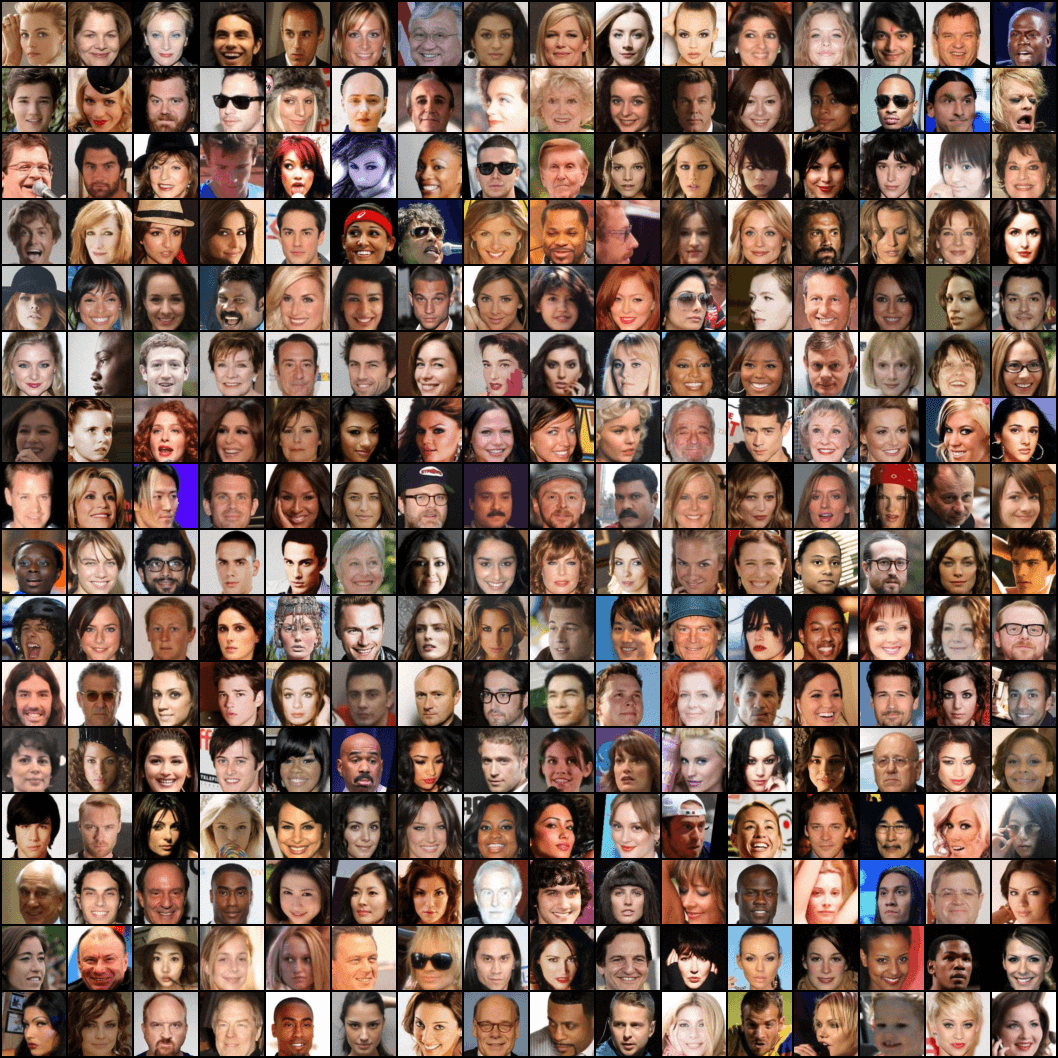}
\caption{Results of QCS-SGM  on CelebA for CS2, i.e., $M = 6144, N = 12288$. Noiseless case. }
\label{fig:celeba-cs2}
\end{figure}

\begin{figure}
\centering  
\includegraphics[width=1\textwidth]{./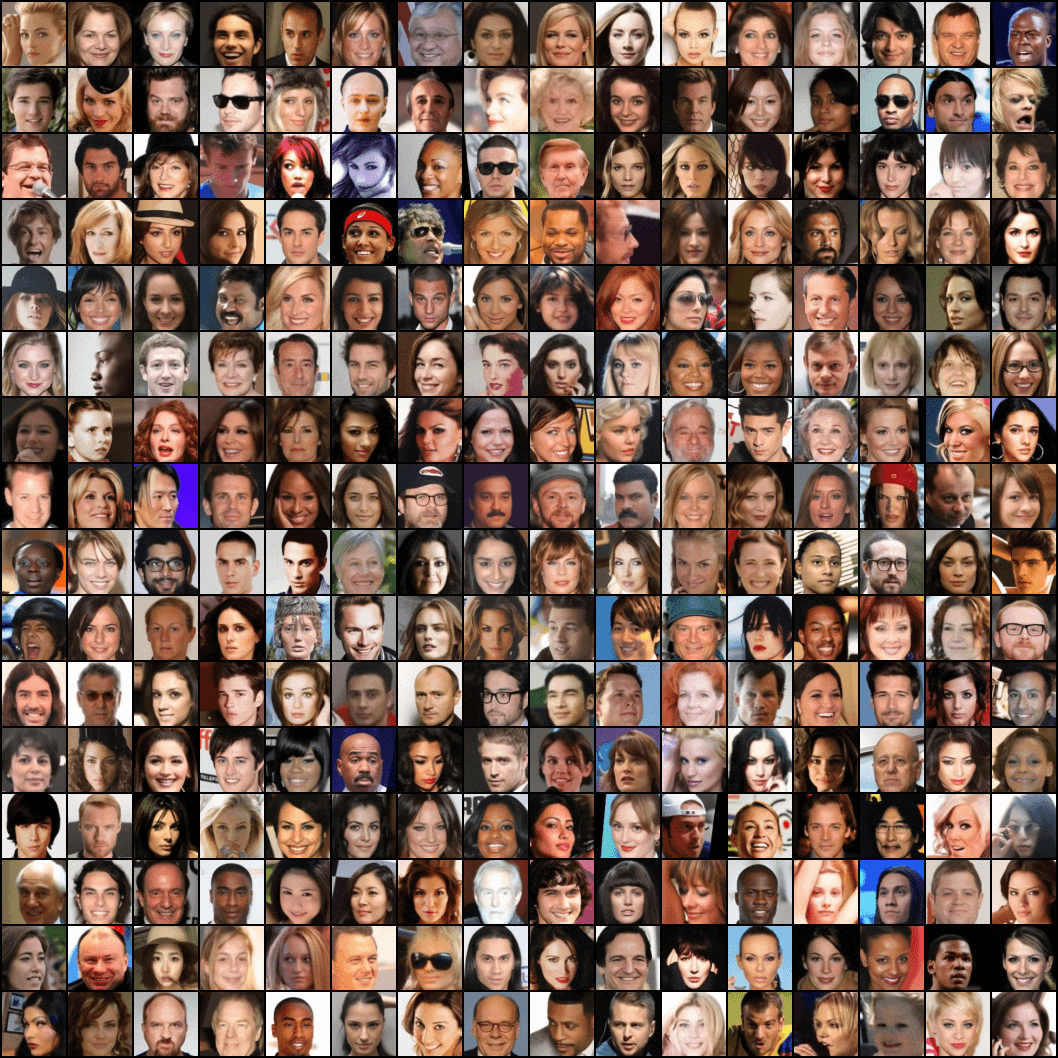}
\caption{Results of QCS-SGM  on CelebA for CS8, i.e., $M = 1536, N = 12288$. Noiseless case. }
\label{fig:celeba-cs8}
\end{figure}

\section{Comparison in the exactly same setting as \citet{liu2022non}}
\label{appendix:liu}
Note that there is a slight difference in the modeling of measurement matrix $\bf{A}$ and noise $\bf{n}$  between ours and that in \citet{liu2022non}. In \citet{liu2022non}, it is assumed that $\bf{A}$ is an i.i.d. Gaussian matrix where each element $A_{ij}$ follows $A_{ij}\sim \mathcal{N}(0,1)$ and that  $\bf{n}$ is an i.i.d.  Gaussian with variance $\sigma^2$, i.e., $n_i\sim \mathcal{N}(0,\sigma^2)$. However, in practice, the measurement matrix $\bf{A}$ is usually normalized so that the norm of each column equals  1. As a result, in our setting, we assume that each element of i.i.d.  Gaussian matrix follows $A_{ij}\sim \mathcal{N}(0,{1}/{M})$. Mathematically, there is 
a one-to-one correspondence between the two settings, but the simulation setting is different due to the different measurement size $M$. As a result, for an exact comparison with results in \citet{liu2022non}, we also conducted experiments assuming exactly the same setting as \citet{liu2022non}, i.e., $A_{ij}\sim \mathcal{N}(0,1)$, and $n_i\sim \mathcal{N}(0,\sigma^2)$. The results are shown in Figures \ref{mnist-celeba-compare-appendix}, \ref{mnist-celeba-varying-M}, \ref{mnist-celeba-compare-m400}, \ref{mnist-celeba-compare-varyingsigma}. 

{An investigation of the effect of (pre-quantization) Gaussian noise is shown in Figure \ref{mnist-celeba-compare-varyingsigma}. Interestingly, the results in Figure \ref{mnist-celeba-compare-varyingsigma} empirically demonstrate that QCS-SGM is robust to pre-quantization noise, i.e., similar performances have been achieved for a large range of noise, even when the noise is very large or approaching zero. Note that different from the comparing methods, the results of QCS-SGM in Figure \ref{mnist-celeba-compare-varyingsigma}  are not normalized to range $[0,1]$, which suggests that the ``dithering" effect seems not apparent for QCS-SGM. A rigorous analysis of this is left as future work.}

\begin{figure}
\centering  
\hspace*{\fill}%
\subfigure[MNIST, $M=200, \sigma=1$]{\includegraphics[width=65mm]{./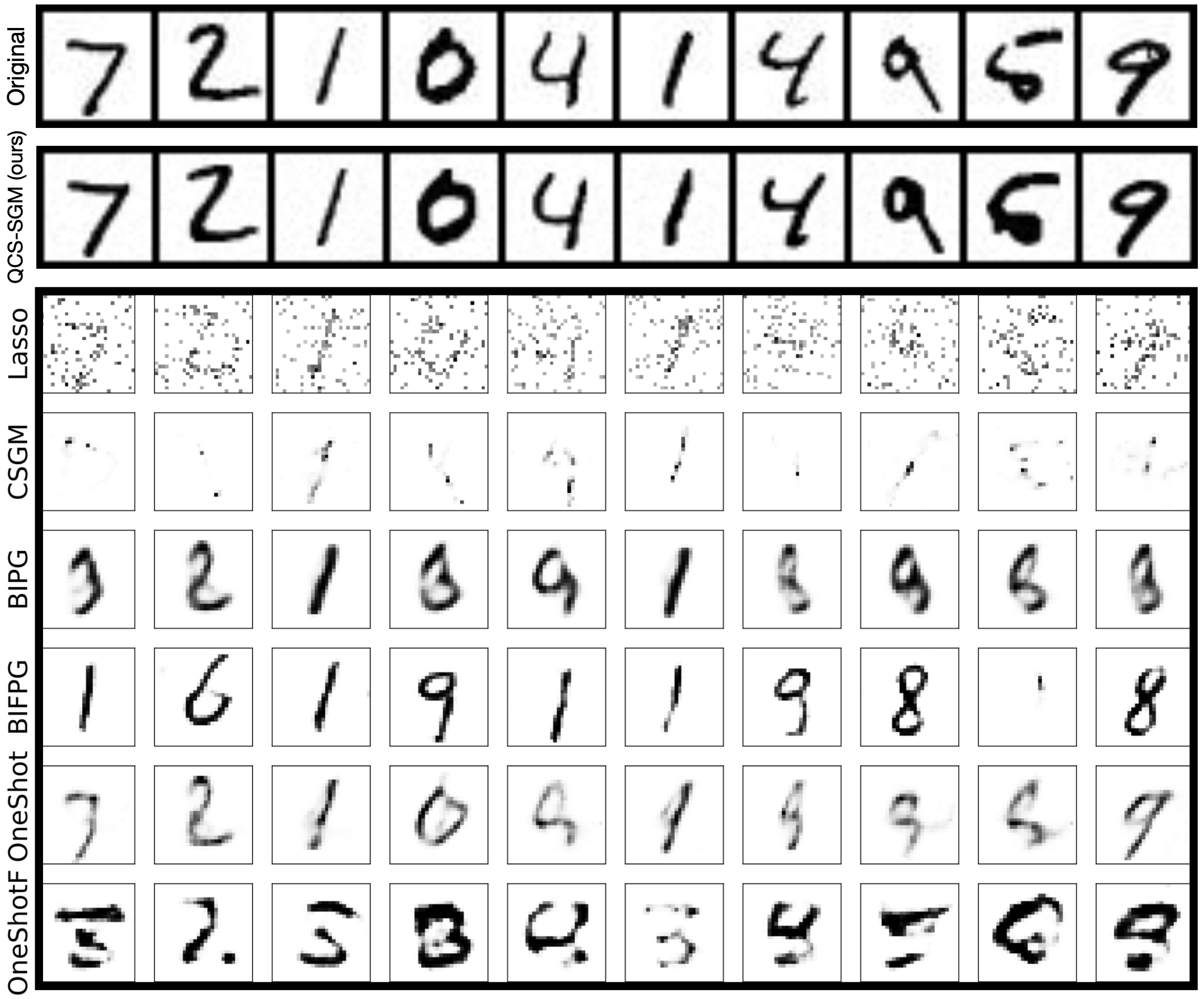}} \hfill
\subfigure[CelebA, $M=4000, \sigma=0.01$]{\includegraphics[width=65mm]{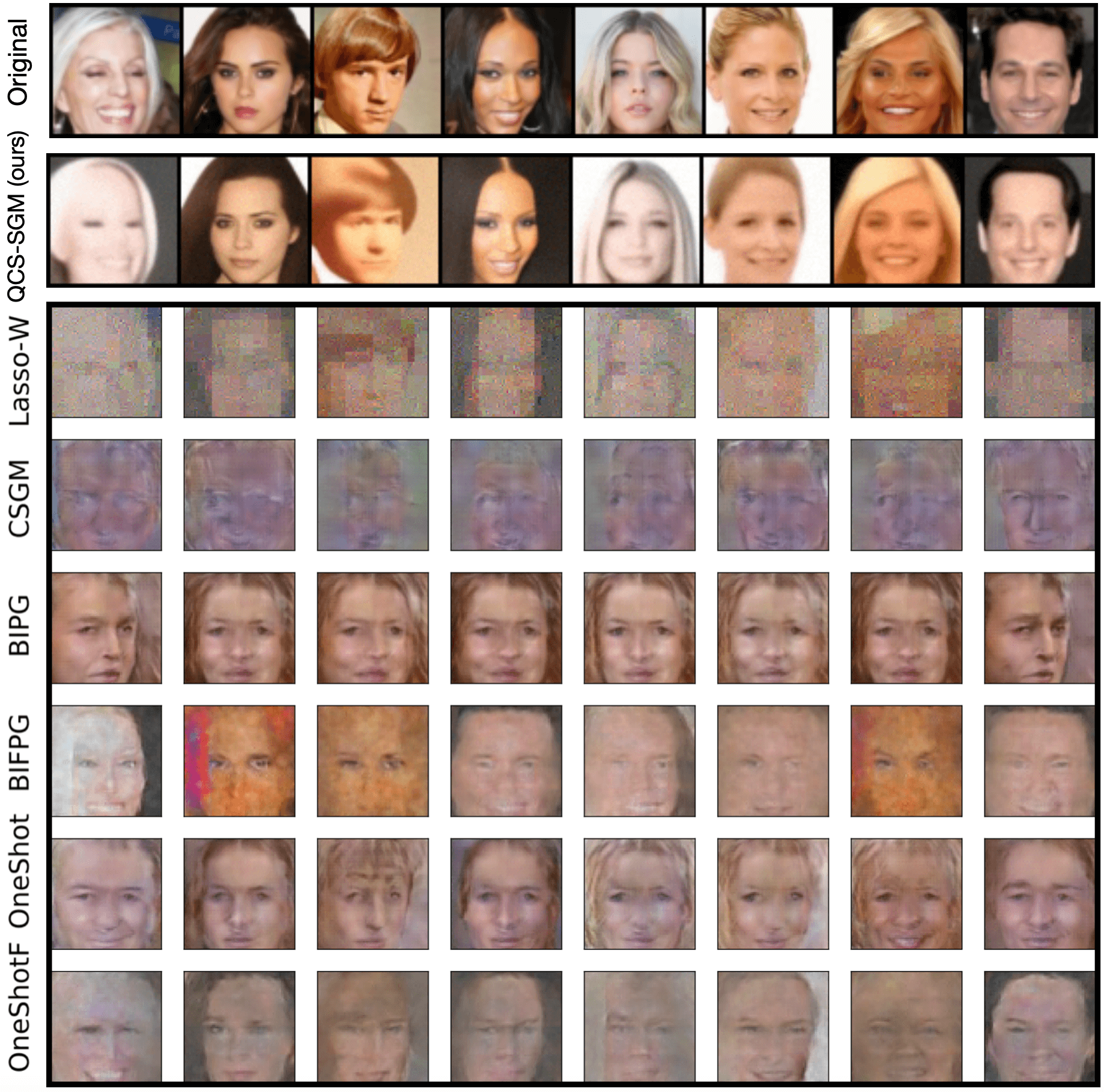}}
\hspace*{\fill}%
\caption{Results of reconstructed images from 1-bit measurements on MNIST and CelebA images, following exactly the same setting as \citet{liu2022non}, i.e., $A_{ij}\sim \mathcal{N}(0,1)$, and $n_i\sim \mathcal{N}(0,\sigma^2)$. }
\label{mnist-celeba-compare-appendix}
\end{figure}

\begin{figure}[h]
\centering  
\hspace*{\fill}%
\subfigure[MNIST, $\sigma=1$]{\includegraphics[width=65mm]{./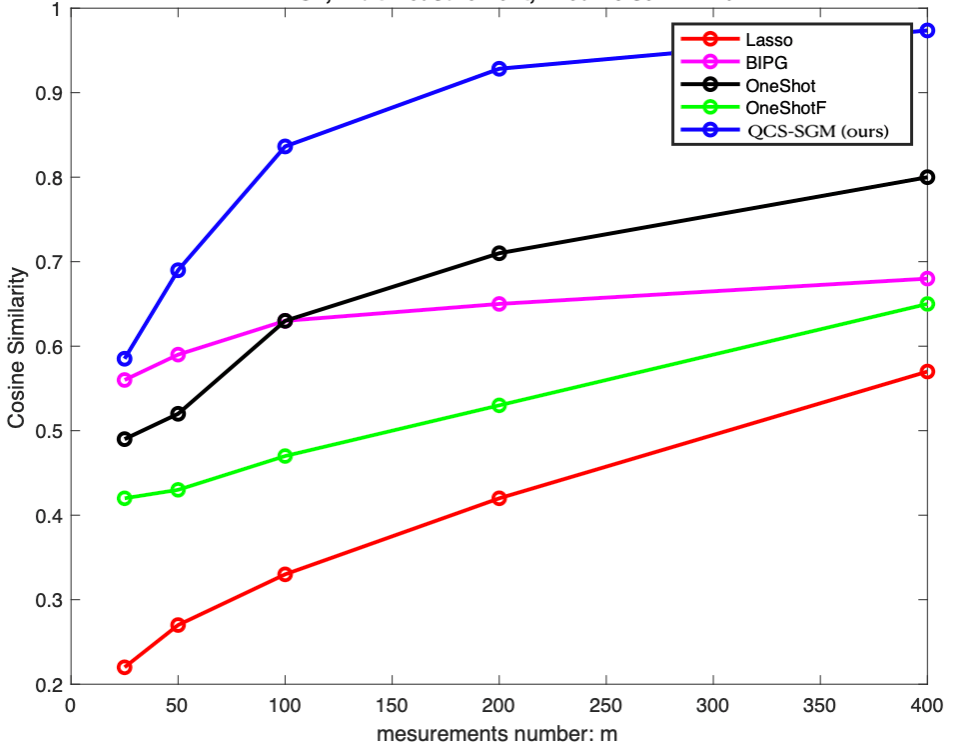}} \hfill
\subfigure[CelebA, $\sigma=0.01$]{\includegraphics[width=70mm]{./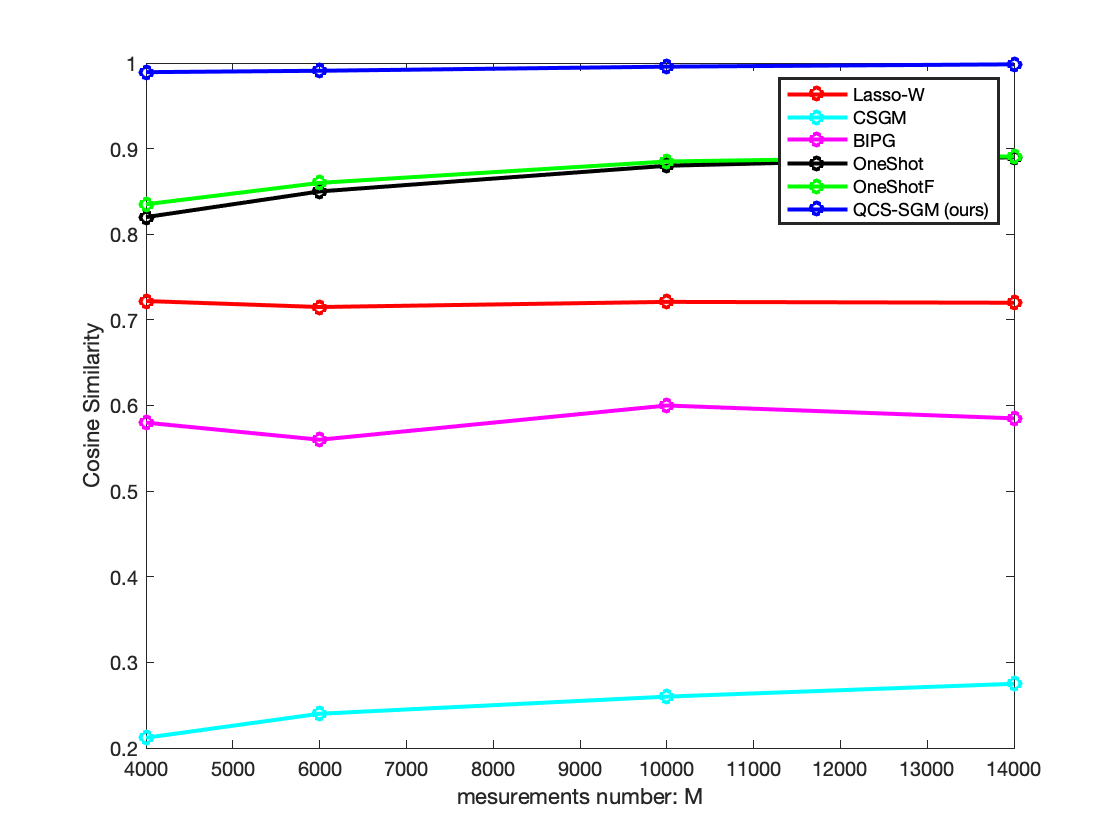}}
\hspace*{\fill}%
\caption{Quantitative comparisons based on cosine similarity for 1-bit MNIST and CelebA images, following exactly same setting as \citet{liu2022non}, i.e., $A_{ij}\sim \mathcal{N}(0,1)$, and $n_i\sim \mathcal{N}(0,\sigma^2)$.}
\label{mnist-celeba-varying-M}
\end{figure}

\begin{figure}
\centering  
\hspace*{\fill}%
\subfigure[MNIST, $M=400, \sigma=1$]{\includegraphics[width=65mm]{./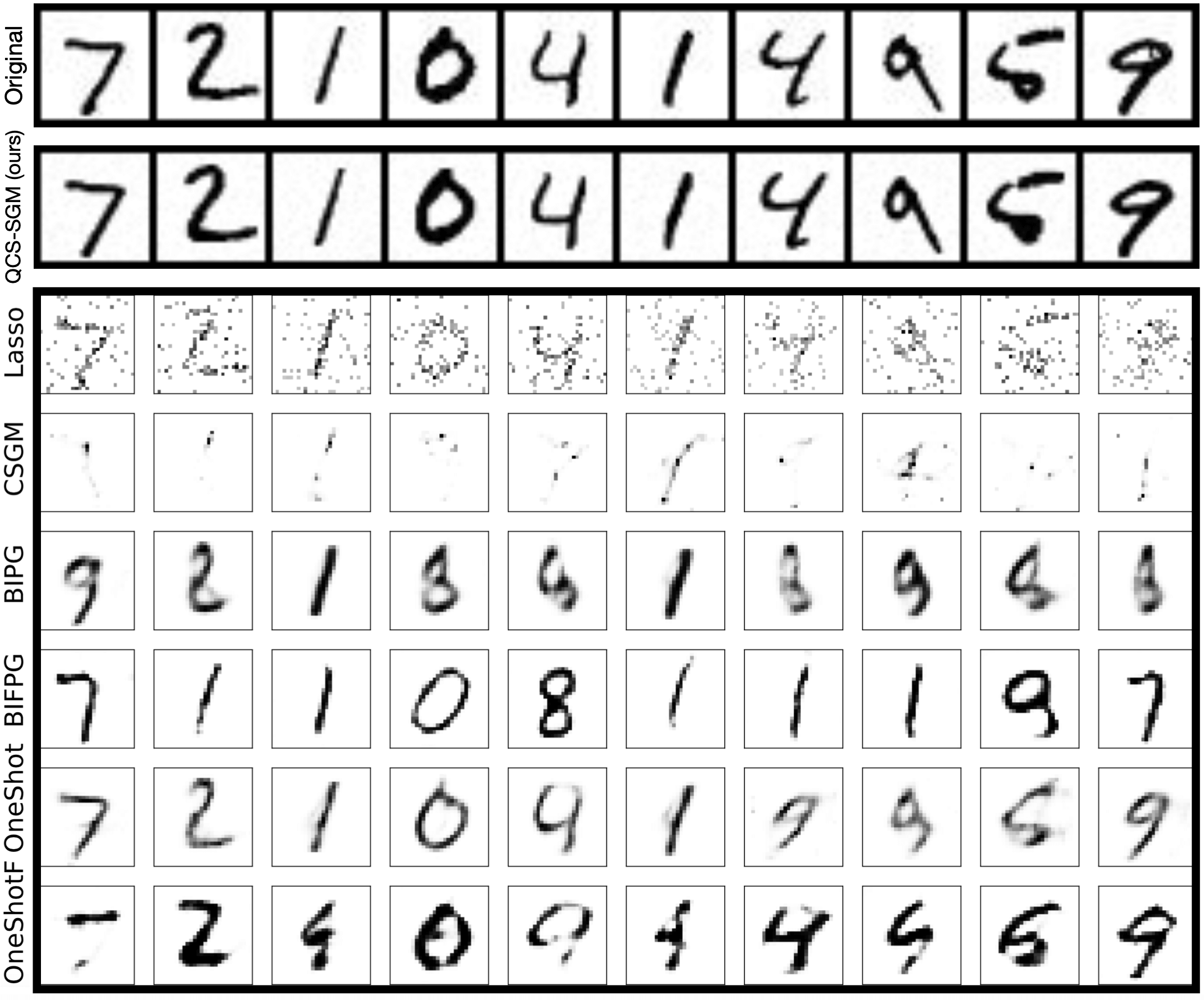}} \hfill
\subfigure[CelebA, $M=10000, \sigma=0.01$]{\includegraphics[width=65mm]{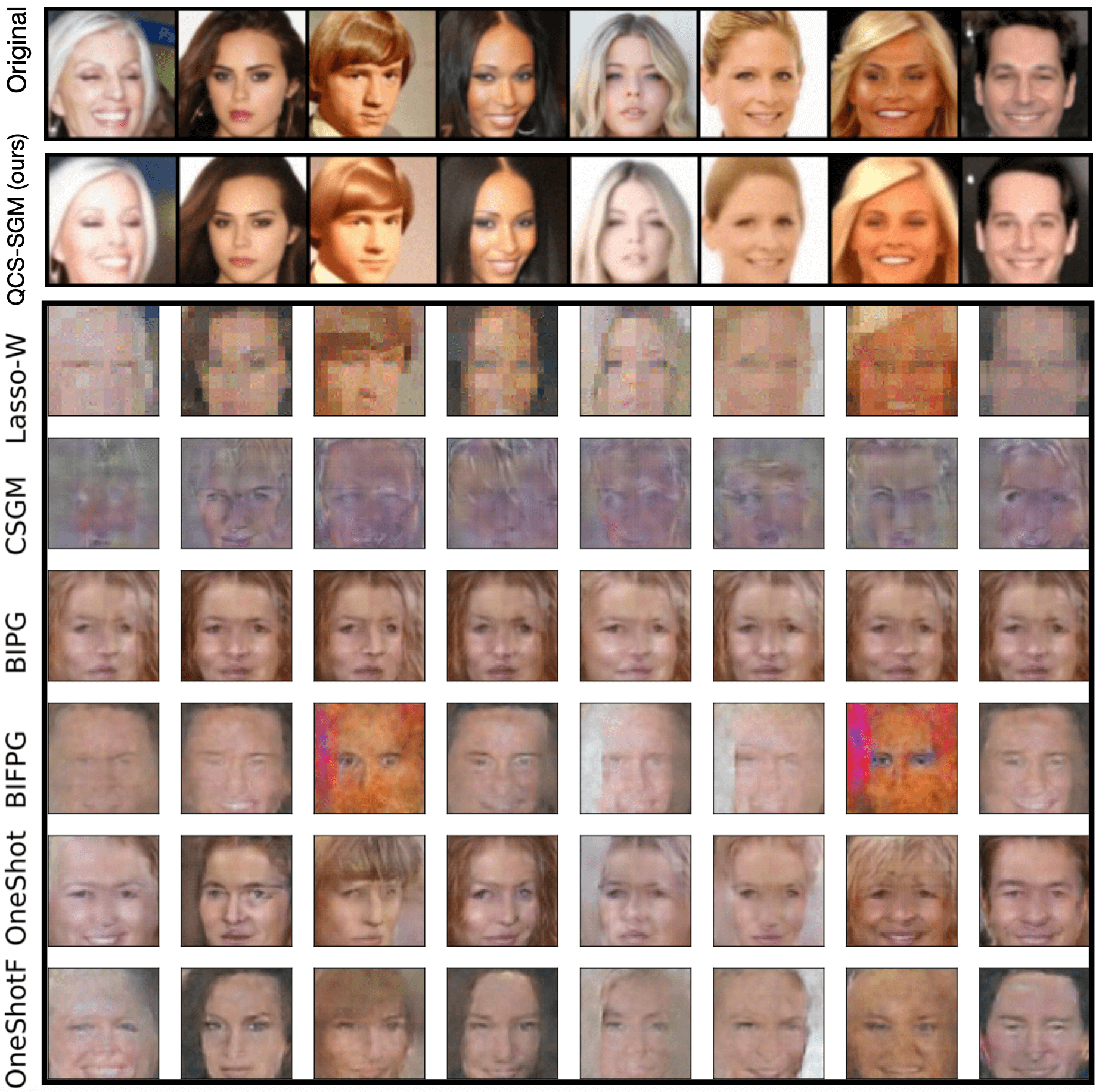}}
\hspace*{\fill}%
\caption{Results of reconstructed images from 1-bit measurements on MNIST and CelebA images, following exactly the same setting as \citet{liu2022non}, i.e., $A_{ij}\sim \mathcal{N}(0,1)$, and $n_i\sim \mathcal{N}(0,\sigma^2)$.}
\label{mnist-celeba-compare-m400}
\end{figure}

\begin{figure}
\centering  
\hspace*{\fill}%
\subfigure[MNIST, $M=200$]{\label{mnist-celeba-compare-varyingsigma:a}\includegraphics[width=65mm]{./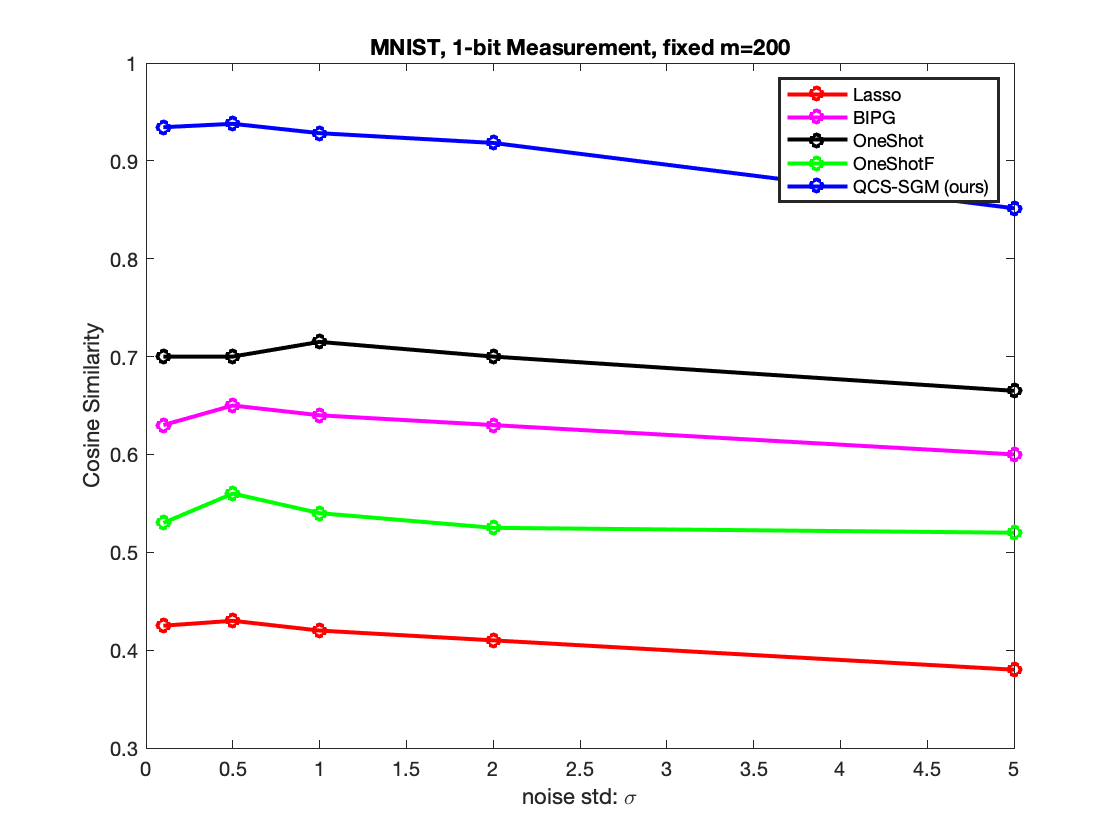}} \hfill
\subfigure[CelebA, $M=4000$]{\label{mnist-celeba-compare-varyingsigma:b}\includegraphics[width=65mm]{./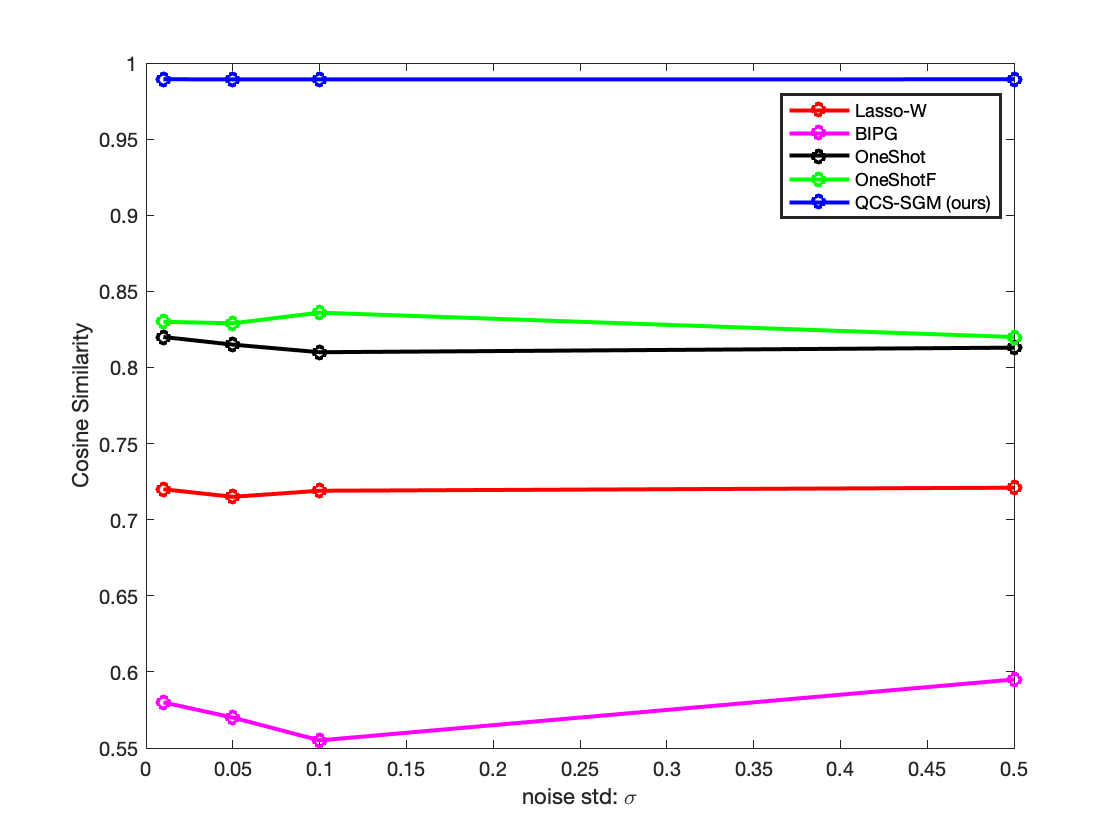}}
\hspace*{\fill}%
\caption{Results of reconstructed images from 1-bit measurements on MNIST and CelebA images for different $\sigma$, following exactly the same setting as \citet{liu2022non}, i.e., $A_{ij}\sim \mathcal{N}(0,1)$, and $n_i\sim \mathcal{N}(0,\sigma^2)$.}
\label{mnist-celeba-compare-varyingsigma}
\end{figure}

\end{document}